\RequirePackage{afterpackage}
\AfterPackage{amsthm}{\RequirePackage{thmtools}}
\AfterPackage{amsmath}{\RequirePackage[shortlabels]{enumitem}}
\documentclass[USenglish,thm-restate,cleveref]{lipics-v2021}
\usepackage{amsmath,amsthm,mathtools}
\usepackage{multicol}

\usepackage{fix-cm} 

\usepackage{amssymb,xspace}
\usepackage{soul}
\usepackage{microtype}

\usepackage{multicol}
\usepackage{datetime}
\usepackage{tikz}
\usetikzlibrary{automata,arrows.meta,tikzmark,calc,backgrounds}
\usetikzlibrary{decorations.pathreplacing}
\usepackage{comment}
\usepackage{etoolbox}
\newbool{for_arxiv}
\booltrue{for_arxiv} 

\input{define.orgtex}

\makeatletter
\providecommand*{\shuffle}{%
  \mathbin{\mathpalette\shuffle@{}}%
}
\newcommand*{\shuffle@}[2]{%
  \sbox0{$#1\vcenter{}$}%
  \kern .15\ht0 
  \rlap{\vrule height .25\ht0 depth 0pt width 2.5\ht0}%
  \raise.1\ht0\hbox to 2.5\ht0{%
    \vrule height 1.75\ht0 depth -.1\ht0 width .17\ht0 %
    \hfill
    \vrule height 1.75\ht0 depth -.1\ht0 width .17\ht0 %
    \hfill
    \vrule height 1.75\ht0 depth -.1\ht0 width .17\ht0 %
  }%
  \kern .15\ht0 
}
\newcommand\labelandarrowsSilent[1]{%
  \ifthmt@thisistheone%
   \label{#1}%
  \else
   \@label{starred:#1}%
  \fi%
  \ignorespaces
}
\makeatother

\nolinenumbers
\ifbool{for_arxiv}{
  \hideLIPIcs
}{}

\title{Shuffles of Context-Free Languages along Regular Trajectories}

\author{Corentin Barloy}{University of Bochum, Germany}{corentin.barloy@rub.de}{0000-0001-5420-8761}{is supported by the Deutsche
Forschungsgemeinschaft (DFG, German Research Foundation), grant 532727578}
\author{Michaël Cadilhac}{DePaul University, USA}{michael@cadilhac.name}{0000-0001-9828-9129}{}
\author{Kyle Ockerlund}{Google, USA}{ockerlundkyle@google.com}{0009-0004-3496-7137}{}

\authorrunning{C. Barloy, M. Cadilhac, K. Ockerlund}

\Copyright{Corentin Barloy, Michaël Cadilhac, and Kyle Ockerlund} 

\keywords{Context-free languages, shuffles, concurrency, non-regularity}

\acknowledgements{\Cref{prop:nosimplecflp} is due to Georg Zetzsche.  We
thank Arka Ghosh and Petra Wolf for stimulating discussions on this topic.}
\ifbool{for_arxiv}{
}{
  \relatedversiondetails{Full Version}{https://arxiv.org/pdf/2603.26162}
  \let\labelandarrows\labelandarrowsSilent
}

\def\jas{Jan\v{c}ar and \v{S}\'ima\xspace}
\defmathtext\cfl{CFL}
\defmathtext\dcfl{DCFL}
\defmathtext\cflp{CFL\('\)}
\defmathtext\dcflp{DCFL\('\)}
\defmath\stk{\mathsf{stk}}
\defmath\stt{\mathsf{stt}}
\def\ang#1{\langle #1\rangle}
\def\ange#1{\ang{#1}^\eps}
\defmath\eps\varepsilon
\defmath\hard{\mathsf{hard}}
\defmath\easy{\mathsf{easy}}
\defmath\pkh\Psi
\defmath\core{\mathsf{core}}
\defmath\pc{\mathsf{pc}}
\defmath\st{\mathsf{st}}
\defmath\fold{\mathsf{fold}}
\defmath\excl{\mathsf{excl}}
\defmath\shuf\shuffle
\defmath\len{\mathsf{lengths}} 

\ccsdesc[500]{Theory of computation~Concurrency}
\ccsdesc[500]{Theory of computation~Grammars and context-free languages}

\newcommand{\mysubparagraph}[1]{\vspace{.15cm}\noindent \textbf{\textsf{#1}} }

\ifbool{for_arxiv}{}{
\EventEditors{Sayan Bhattacharya, Danupon Nanongkai, Michael Benedikt, and Gabriele Puppis}
\EventNoEds{4}
\EventLongTitle{53rd International Colloquium on Automata, Languages, and Programming (ICALP 2026)}
\EventShortTitle{ICALP 2026}
\EventAcronym{ICALP}
\EventYear{2026}
\EventDate{July 7--10, 2026}
\EventLocation{Royal Holloway, University of London, Egham, United Kingdom}
\EventLogo{}
\SeriesVolume{374}
\ArticleNo{178}
\category{Track B: Automata, Logic, Semantics, and Theory of Programming}
}

\begin{document}

\maketitle

\begin{abstract}
  In single-core processors, concurrency requires that multiple processes
  be interleaved into a single thread of execution by a scheduler. The
  language-theoretic operation that corresponds to this is the \emph{shuffle} of
  two languages: the set of words obtained by interleaving a word from each
  language in an arbitrary, letter-wise fashion.  It is well known that regular
  languages are closed under shuffles, while context-free languages (CFLs) are
  not.

  Following an established line of research, this paper considers shuffles
  according to regular ``trajectories,'' that is, subject to scheduling
  constraints expressed by an automaton.  Unsurprisingly, some trajectories
  allow for CFLs to be shuffled into CFLs (e.g., simple concatenation of the two
  words), while others do not. This paper provides a robust toolset to show that
  a given trajectory would \emph{always} shuffle two nonregular CFLs into a
  nonCFL.  In the case of deterministic CFLs (DCFLs), a salient trichotomy of
  trajectories depending on how they shuffle DCFLs is provided.

  These results are based on lemmata of independent interest regarding how
  pushdown automata (PDA) must invoke the stack when accepting a nonregular CFL
  or DCFL. The latter case relies on a recent result of \jas (MFCS'2021);
  answering an open question therein, it is demonstrated that said result cannot
  be generalized to arbitrary CFLs, leading to dedicated machinery for both
  cases.
\end{abstract}

\clearpage
\ifbool{for_arxiv}{
\clearpage
\setcounter{page}{0}
\noindent
To facilitate the reading, we provide a table of contents
below. We further provide hyperlinks for navigating between
statements in the main text and proofs in the appendix. To go to the
proof of a statement, click ``\faArrowDown'' in the left margin, and then
click ``\faArrowUp'' to go back to the statement in the main text.
\tableofcontents
\clearpage
}{}

\section{Introduction}

In the theory of concurrency and parallel computation, many composition
operations ultimately boil down to a variant of the shuffle operation (see,
e.g., \cite{OgdenRR78,Shaw78,nivat1982,iwama83,bergstrak84,baeten90,salomaay99}
for early examples from the 1970s to the 1990s).  The theoretical study of
shuffles was thus initiated to formalize the interactions and
synchronization of processes.

\mysubparagraph{Types of shuffles.} As concurrency may take many forms,
theoretical work has turned to restricting the ``full'' shuffle operation to
more constrained interleavings between processes. For instance,
Bérard~\cite{berard87} introduced \emph{literal shuffles} which enforce that the
processes strictly alternate --- she refers to it as modeling step-lock
transmission in single communication channels.  Further widening the definition
of shuffles, Mateescu et al.~\cite{mateescurs98} introduced \emph{shuffles along
a trajectory}, where a trajectory is a language of words over \(\{s, t\}\), and
such a word indicates in which order the processes are scheduled (\(s\) for the
first one, \(t\) for the second).  For example, \((s+t)^*\) is the trajectory
of the full shuffle, and \((st)^*\) would force that the two processes strictly
alternate over the same length of time.  Shuffles along trajectories generalize
many language-theoretic operations, including for example concatenation
(trajectory \(s^*t^*\)).

\mysubparagraph{Questions about shuffles.}  Historically, the most prominent
questions about shuffles revolved around closure properties.  While regular
languages are easily seen to be closed under the full shuffle, Latteux famously
showed~\cite{latteux79}, in the late 1970s, that the full shuffle of two CFLs is
a CFL if, and only if, one of the two shuffled languages is
regular.\footnote{Let us note here that this only holds if the two languages are
over disjoint alphabets, which is a standard assumption we will be making.
Shuffling languages over a common alphabet is, in addition to being less natural
in most concurrent scenarios, not well behaved: one can show that for any
language \(E \subseteq \Sigma^*\) containing the empty word $\eps$, with \(\#
\notin \Sigma\), the full shuffle of \(\{w_1\# w_2 \mid w_1 \in \Sigma^* \land
w_2 \in E\}\) and \(\{w_1\# w_2 \mid w_1 \in E \land w_2 \in \Sigma^*\}\) is
simply \(\Sigma^*\#\Sigma^*\#\Sigma^*\).} %

In fact, Ogden et al.~\cite{OgdenRR78} exhibited DCFLs that shuffle into
NP-complete languages.  On the other hand, Bérard~\cite{berard87} shows that
there are nonregular CFLs which can be shuffled along the trajectory
\((st)^*(s^*+t^*)\) into a CFL.  She asks in particular if this is also true of
the trajectory \((s^*+t^*)(st)^*(s^*+t^*)\), a question that has been open since
then.  Mateescu et al.~\cite{mateescurs98} expand on these results, focusing in
particular on regular trajectories (Section 5 therein).  They show, for
instance, that the shuffle of a CFL and a regular language along a regular
trajectory is always CFL.

This fragmented landscape leaves open an important question: which regular
trajectories \emph{do} preserve context-freeness, in the sense that the shuffle
of two CFLs is a CFL?
These questions can be seen as variants of --- or a preliminary step to --- the
so-called \emph{parallelization problem}, which asks, given a nonCFL, whether it
is expressible as the shuffle along some trajectory of two CFLs.  Crisp answers
to these questions would not only clarify the landscape of regular trajectories,
but allow for an understanding of when regular schedulers over context-free
processes require additional computational power to be seen as a single thread
of context-free execution.

\mysubparagraph{Contributions.} The contributions of this paper follow two broad
lines:
\begin{enumerate}[noitemsep,topsep=0pt,parsep=0pt,partopsep=0pt]
\item In \Cref{sec:coup}, we investigate 'couplings' within PDA. Briefly, two
  subsections of a word are said to be \emph{coupled} by a PDA if, for all accepting
  runs, a symbol is pushed during the first subsection, and popped during the
  second.  We show, relying on some surgery on the runs, that if couplings are
  rare, then the PDA recognizes a regular language.  A strengthening is provided
  for the case where the PDA, which may itself be nondeterministic, recognizes a
  deterministic CFL, relying on a deep result of \jas~\cite{jancars21}.  The
  proofs of these lemmas, dubbed the Coupling Lemmas, are provided in
  \Cref{sec:couppda,sec:coupdpda}, which can be safely skipped on first reading.
  We note that these sections are entirely independent of shuffles; they simply
  provide a toolbox to study how couplings appear.
\item In \Cref{sec:shuf}, we turn to shufflings proper.  We first study how
  couplings survive a shuffle, leading to two so-called Resilience Lemmas.  We
  then show two primary results: (1) We provide infinite families of
  trajectories that \emph{always} shuffle two nonregular CFLs into a nonCFL; and
  (2) We partition regular trajectories into three categories:
  \emph{\cflp_safe}, under which shuffles of CFLs are always CFLs;
  \emph{\dcflp_hostile}, under which shuffles of nonregular DCFLs are always
  nonCFL; \emph{\dcflp_mixed}, under which nonregular DCFLs shuffle sometimes
  into CFLs, sometimes into nonCFLs.  We give a decidable characterization of
  each of these categories.
\end{enumerate}

We note that we partially answer Bérard's open question: we can show that the
shuffle along \((s^*+t^*)(st)^*(s^*+t^*)\) \emph{never} preserves nonregular
DCFLs, but we cannot answer the same for general CFLs.  However, our results
also imply that the same is true for the trajectory \((st)^*(s^*+t^*)\), for
which Bérard showed there \emph{are} nonregular CFLs that shuffle into a CFL,
hinting at the fact that a full classification for CFL-preservation might prove
to be much more complex.

\section{Preliminaries}
\label{sec:prelim}

We rely on standard concepts of automata theory, as presented for instance by
Sipser~\cite{sipser97}.  We make precise only the notations that are somewhat
idiosyncratic, but expect the reader to know about languages, finite
automata, pushdown automata, and context-free languages.

\mysubparagraph{Words.}  We usually write \(\Sigma\) for alphabets and \(\eps\) for the
empty word.  For a word \(w \in \Sigma^*\), we write \(|w|\) for its length,
\(w[i]\) for the \(i\)-th letter of \(w\), \(w[i:j]\) for the subword from \(i\) to
\(j\), and abbreviate \(w[1:i]\) as \(w[:i]\) and \(w[i:|w|]\) as \(w[i:]\).

\mysubparagraph{PDA, CFL, DCFL.}  A PDA is a tuple
\(A = (Q, \Sigma, \Gamma, \delta, q_0, F)\) with \(Q\) a set of states,
\(\Sigma\) the input alphabet, \(\Gamma\) the stack alphabet,
\(\delta \colon Q \times (\Sigma \cup \{\eps\}) \times (\Gamma \cup \{\eps\}) \to \mathcal{P}(Q \times (\Gamma \cup \{\eps\}))\) the
transition function, \(q_0\) the initial state, and \(F\) the set of final states.
The condition \(\delta(q, a, s) \ni (q', s')\) is to be interpreted as: from state \(q\), reading
\(a\) with \(s\) on top of the stack, go to state \(q'\) and replace the top of the
stack with \(s'\).  Note that any of \(a, s, s'\) can be \(\eps\), which corresponds to not
reading from the word, not reading from the stack, or not writing to the stack,
respectively.  When a (pushdown) automaton is under discussion, we will use the
elements of the tuple as above freely, without repeating ``\(A = (Q, \Sigma,
\Gamma, \delta, q_0, F)\).'' The PDA is deterministic if for every state \(q\),
letter \(a \in \Sigma\) and top of stack \(s \in \Gamma\), there is only one
valid transition that can be taken.

PDA define the class of context-free languages (CFLs) while deterministic PDA
define the class of deterministic context-free languages (DCFLs).  Following
\jas~\cite{jancars21}, we write \cflp for the set of CFLs that are not regular,
and similarly for \dcflp.

A \emph{run} \(r\) is a finite sequence of transitions in the PDA, such that
\(r[i]\) ends where \(r[i+1]\) starts for every \(i\) and the stack semantic is
correct.  The label \(w\) of \(r\) is the concatenation of the labels of its
transitions.  The content of the stack at the end of \(r\) is written
\(\stk(r) \in \Gamma^*\), with the top of the stack at the rightmost position, and we
abbreviate \(\stk(r[:i])\) as \(\stk(r, i)\).  Similarly, \(\stt(r)\) is the state
reached by \(r\), and \(\stt(r[:i])\) is abbreviated as \(\stt(r,i)\).  A run is
\emph{accepting} if it starts in \(q_0\) with an empty stack and ends in a state
in \(F\) (the contents of the stack not being relevant for acceptance).

\mysubparagraph{From runs to labels.} Since \(r\) may contain
\(\eps\)_transitions, we introduce notations for the correspondence between the
letters of its label \(w\) and the transitions of \(r\): we let \(r\ang{i}\) be the
transition that reads \(w[i]\).  We let \(r\ang{i:j}\) be the transitions that read
the subword \(w[i:j]\), excluding \eps_transitions at the extremities --- that is,
\(r\ang{i:j}\) starts with \(r\ang{i}\) and ends with \(r\ang{j}\).  We let
\(r\ange{i:j}\) be similar to \(r\ang{i:j}\) but including \eps_transitions at the
extremities --- it still holds that the label of the subrun \(r\ange{i:j}\) is
\(w[i:j]\).  Finally, when \(w = xyz\), where \(y = w[i:j]\) for some \(i, j\), we
simply write, abusing notation, \(r\ang{y}\) for \(r\ang{i:j}\) and similarly for
\(r\ange{y}\).  In all such cases, the particular factorization of \(w\) in which
\(y\) appears is provided to avoid potential ambiguity arising from multiple
occurrences of \(y\).  Note, in all of these notations, we see the transitions
extracted from \(r\) \emph{within the context} of \(r\); for instance,
\(r\ang{i:j}\) would formally be defined as the set of \emph{positions} in \(r\)
of the transitions that read \(w[i:j]\).  We sometimes make use of this, and
allow writing, e.g., \(k \in r\ang{y}\) to mean that \(r[k]\) is a transition in
the subrun of \(r\) that reads \(y\).

\section{Couplings and the Coupling Lemmas}\label{sec:coup}

A general PDA (accepting a deterministic CFL or not) can utilize the stack in
many diverse ways; we focus our investigation on the following property.

\begin{plaindef}[Couplings]
  Let \(A\) be a PDA and \(r\) a run.  We say that two positions \(i\) and \(j\)
  with \(i < j\) are \(r\)_\emph{coupled} if \(\stk(r, i-1) = \stk(r, j)\)
  and no \(k\) with \(i \leq k < j\) has a shorter \(\stk(r, k)\). That is,
  a symbol is pushed by \(r[i]\) and popped by \(r[j]\). For \(w = uxvyz\) the
  label of \(r\), we say that \(x\) and \(y\) are \(r\)_\emph{coupled} if there
  are positions \(i \in r\langle x\rangle, j \in r\langle y\rangle\) that are
  coupled in \(r\).  Finally, for \(w = uxvyz\) any word, we say that \(x\) and
  \(y\) are \(A\)_\emph{coupled} if they are \(r\)_coupled in all
  \emph{accepting} runs \(r\) of \(A\) on \(w\).
\end{plaindef}

\begin{example}\label{ex:def} Consider the prototypical language \(L = \{a^n b^n
\mid n \geq 0\}\). Clearly, any PDA which accepts \(L\) must push to the stack
while reading \(a^n\) and pop while reading \(b^n\); otherwise, they could be
pumped independently to leave the language. Hence, for each PDA \(A\) accepting
\(L\) there is some value $q$ such that, for all \(n \geq q\), \(a^n\) and
\(b^n\) are \(A\)-coupled in the word \(a^n b^n\).
\end{example}

\begin{remark}
  In our applications, we will identify, in proofs by contradiction, shufflings
  that would have couplings that \emph{cross} each other, e.g., \(x_1x_2y_1y_2\) such that
  \(x_1\) and \(y_1\) are \(A\)_coupled, and so are \(x_2\) and \(y_2\).  This obviously
  violates stack discipline, and leads to the desired contradictions.  We also
  stress that \(A\)_coupling is a property that ranges only over \emph{accepting}
  runs; there may be nonaccepting runs in which \(x\) and \(y\) are uncoupled.
\end{remark}

\begin{lemma}[Coupling Lemma]
  Let \(L\) be a CFL recognized by a PDA \(A\).  If there are
  \(k_1, k_2, k_3 \in \bbN\) such that for every \(w = uxvyz \in L\) with
  \(|u|=k_1, |v|=k_2, |z|=k_3\), we have that \(x\) is \emph{not} \(A\)_coupled with
  \(y\), then \(L\) is regular.
\end{lemma}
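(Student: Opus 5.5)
The plan is to build, from \(A\) and the constants \(k_1,k_2,k_3\), a finite automaton recognizing \(L\). The key observation: if, for every accepting-length decomposition, the middle part \(x\) is not \(A\)-coupled with \(y\), then some accepting run on \(w\) has no coupling between \(x\) and \(y\). In that run, every symbol pushed during \(r\ang{x}\) is popped before \(r\ang{y}\) starts, or never popped. The second case needs no pop at all, since acceptance ignores the stack. So the stack below the level reached at the start of \(y\) is inert from then on.

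The main step exploits this by choosing the decomposition with \(x\) maximal. Fix \(w\in L\) with \(|w|>k_1+k_2+k_3\). Take \(u=w[:k_1]\), let \(x\) run up to position \(|w|-k_2-k_3\), \(v\) the next \(k_2\) letters, \(y\) empty or minimal, and \(z\) the last \(k_3\) letters. One has to check whether \(y\) may be empty; if not, take \(|y|=1\) and absorb the constants. Then there is an accepting run \(r\) in which, during \(r\ange{y}\) and afterwards (reading \(z\)), no symbol pushed while reading \(x\) is ever popped. So after \(x\), the run only touches (i) the stack portion pushed during \(u\), which has bounded height since \(|u|=k_1\) and \(\eps\)-pushes can be normalized to be bounded, or (ii) symbols it pushes itself during \(vyz\), a bounded-length suffix. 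Hence the behaviour of \(r\) on \(vyz\) is determined by the state after \(x\), the bounded word \(\stk(r,\cdot)\) restricted to what was pushed during \(u\) and is still present, and the top symbols above it. I would make this precise by showing that in \(r\) the top of the stack after \(x\) contains no symbol later popped from the \(x\)-part. So the relevant information is the pair (state, bounded "\(u\)-level" stack content), where being "exposed" means that everything pushed during \(x\) is never popped again.

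The regular recognizer then guesses \(u\) explicitly and tracks a finite abstraction while reading \(x\): the current state and the stack contents restricted to the bounded \(u\)-part. Symbols pushed during \(x\) are handled by guessing, at each push, whether the symbol will be popped within \(x\) or never. A symbol popped within \(x\) forms a well-nested factor, and these can be summarized by standard summary relations \(q\to q'\) for well-nested subruns, of which there are finitely many. A symbol never popped needs only be remembered as "blocked". This summary is a finite-state simulation of exactly those runs in which no \(x\)-push is popped after \(x\). Finally, \(vyz\) has bounded length and is checked by brute force from the abstract configuration. Soundness holds since every simulated run is a real accepting run of \(A\). Completeness follows from the hypothesis applied to each \(w\). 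Words of length at most \(k_1+k_2+k_3\) are finitely many and are added directly.

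The main obstacle I expect is the summarization within \(x\): well-nested segments of an arbitrary PDA do not generally yield a finite-state reading. I would resolve this by noting that within such a run the stack height over \(x\) need not be bounded, but the "popped within \(x\)" structure still is. Here I expect to need an iteration of the hypothesis on shorter decompositions, varying where \(x\) ends so that \(v\) slides, to show that pushes are popped within a bounded window. That bound would yield bounded stack height above the blocked part and hence a finite automaton. The real surgery lies in choosing which decomposition to apply at each position and gluing the resulting runs consistently, probably by pumping/cutting runs with repeated (state, top-symbol) pairs.
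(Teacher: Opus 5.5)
\paragraph{There is a genuine gap at the core of your proposal, and some intermediate claims are false.} Your starting observation is incorrect. In a run where \(x\) and \(y\) are not coupled, a symbol pushed during \(r\langle x\rangle\) may still be popped during \(z\), after \(y\) has cleaned up its own pushes. So the stack below the level reached at the start of \(y\) is not inert. Since \(z\), though of length \(k_3\), can pop unboundedly many symbols via \(\varepsilon\)-transitions, this cannot be settled by brute force over a bounded stack. For the same reason the ``\(u\)-part'' of the stack is not of bounded height. You cannot normalize \(\varepsilon\)-pushes away either, because the hypothesis is about the couplings of the given \(A\), and normal-form transformations do not preserve them.

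Fixing one decomposition per word (\(x\) maximal, \(|y|\le 1\)) extracts essentially nothing from the hypothesis. For \(y=\varepsilon\) it is vacuous. For \(|y|=1\) it only says that a single transition does not pop an \(x\)-symbol, so your claim that nothing pushed in \(x\) is popped during or after \(y\) does not follow.

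Most importantly, the step you flag as the main obstacle is the whole proof, and it is left open. A finite automaton reading \(x\) letter by letter cannot summarize unboundedly nested push/pop factors. The hypothesis also gives you no run with bounded-range couplings. Its logical form is: for each split position (the end of \(x\)), there exists an accepting run with no coupling across that particular split. Different splits may be witnessed by different runs, each of which may contain long-range couplings placed elsewhere. Your plan to ``glue'' these runs into a single run with bounded nesting windows has no supporting argument, and nothing indicates such a run exists.

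\paragraph{The missing idea is to stop simulating runs and argue via Myhill--Nerode.} For a long prefix \(w\) there is exactly one split \(w=uxv\) with \(|u|=k_1\) and \(|v|=k_2\), so each prefix only needs the runs that witness non-coupling for its own split. This is what makes ``different runs for different splits'' harmless.

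The paper attaches to \(w\) the set of pairs consisting of a min-endcap and the \(k_3\)-suffix-equivalence class of its stack-effect. These range over all continuations \(yz\) with \(|z|=k_3\) that admit an accepting run in which \(x\) and \(y\) are uncoupled. A min-endcap consists of a prefix of \(u\), a suffix of \(v\) and two states, and \(k_3\)-suffix-equivalence has finite index, so these data are finite.

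The surgery lemma then replaces the entire segment of one prefix's run between the endcap positions, however deeply nested, by the corresponding segment from another prefix with the same data. Non-coupling guarantees that \(y\) never pops into the stack-effect, and the short \(z\) sees the stack-effect only up to \(k_3\)-suffix-equivalence. This handles both the unbounded nesting inside \(x\) and the unbounded popping during \(z\), which your approach leaves unresolved.
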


\begin{remark}
  The converse does not hold.  Consider the standard 3-state DPDA \(A\) for
  \(\{a^nb^n \mid n \geq 0\}\) and make every state accepting, including the
  sink state.  For any \(k_1, k_2, k_3 > 0\), let \(u = a^{k_1}, v = a^{k_2}, z
  = b^{k_3}\), set \(x= a^{k_3}, y = b^{k_1+k_2}\) and consider the word \(w =
  uxvyz\).  The only run of \(A\) on \(w\) will push a symbol in \(x\) that will
  be popped in \(y\), so \(x\) and \(y\) are \(A\)_coupled.
\end{remark}

\begin{lemma}[Deterministic Coupling Lemma]\label{lem:detcoup}
  Let \(L\) be a DCFL recognized by a (not necessarily deterministic) PDA \(A\).  If
  for all sublanguages \(\{ux^nvy^nz \mid n \geq 0\} \subseteq L\) there is an
  \(n\) and \(1 \leq i \leq n\) such that, in \(ux^nvy^nz\), the \(i\)-th occurrence of
  \(x\) is \emph{not} \(A\)_coupled with the \((n-i+1)\)-th occurrence of \(y\), then
  \(L\) is regular.
\end{lemma}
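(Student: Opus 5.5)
We argue by contraposition, relying on the result of \jas~\cite{jancars21}. Informally, it says that every \dcflp \(L\) contains a ``nonregularity witness'': words \(u, x, v, y, z\) with \(x, y\) nonempty such that \(ux^nvy^nz \in L\) for all \(n\), while \(ux^nvy^mz \notin L\) for \(n \neq m\) large enough (or some equivalent separation between \(ux^nvy^nz\) and \(ux^{n'}vy^{n}z\)). The central structure is thus a sublanguage \(\{ux^nvy^nz \mid n \geq 0\} \subseteq L\) that behaves like \(\{a^nb^n\}\) with respect to any recognizer. Suppose \(L\) is nonregular and fix such a witness. We show that, for every \(n\) and every \(1 \leq i \leq n\), the \(i\)-th occurrence of \(x\) is \(A\)-coupled with the \((n-i+1)\)-th occurrence of \(y\) in \(ux^nvy^nz\). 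This contradicts the hypothesis of \Cref{lem:detcoup}. To handle small \(n\), we may first replace \(x, y\) by suitable powers \(x^p, y^p\) and absorb a prefix into \(u\) and a suffix into \(z\). This is harmless, since the statement quantifies over all such sublanguages and we only need to exhibit one that violates the hypothesis.

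Now fix an accepting run \(r\) on \(w = ux^nvy^nz\), and suppose the \(i\)-th block \(x\) is not \(r\)-coupled with the \((n-i+1)\)-th block \(y\). The plan is a run-surgery argument in the spirit of the proof of the Coupling Lemma. Stack discipline forces nesting: matched pairs of blocks must be ``mirror'' pairs, since these are the only ones compatible with a well-nested push/pop structure that pairs all \(x\)'s with all \(y\)'s. If this mirror pair is not coupled, then some block \(x_j\) fails to have its pushes popped in the corresponding \(y\) region, or a \(y\) block pops nothing pushed in the \(x\) region. In either case we find a factor of \(r\) on \(x^{n}\) or \(y^n\) whose stack effect is ``flat'' relative to the other side. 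Concretely, take \(n\) larger than \(|Q|\cdot|\Gamma|\) times a constant. By pigeonhole on (state, top-of-stack, stack height relative to a reference level) at block boundaries, we locate two boundaries inside the \(x\)-region with the same state and with the stack between them never dipping below the lower level. We then argue that no symbol pushed there is popped in the relevant \(y\) blocks, which makes this segment pumpable independently of \(y\). Pumping yields an accepting run on \(ux^{n+p}vy^nz\) for some \(p>0\), contradicting the separation property of the witness.

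The main obstacle is the middle step: turning ``block \(i\) is not coupled with block \(n-i+1\) in \(r\)'' into an \emph{independently pumpable} segment. Non-coupling of one specific pair does not immediately say that the pushes of block \(i\) are unconsumed. They might be popped by a neighbouring \(y\) block, so that the coupling is shifted rather than absent. I would first try to exploit the exact-length constraint: since \(x^n\) must balance \(y^n\) exactly in every accepting run (by the separation property), the map from \(x\)-blocks to the \(y\)-blocks popping their symbols must be order-reversing and, on average, one-to-one. A shifted coupling then forces some \(x\)-block, or some \(y\)-block, to be locally unbalanced. Pigeonholing over many such blocks gives two positions with equal configuration summary, and the surgery excises or duplicates the segment between them. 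If the witness from \jas does not directly provide a strong enough separation, I would first derive it by a standard argument: a DCFL that is regular on every such ``\(x^nvy^n\)'' slice is regular, via their theorem. This preliminary reduction would be done before the run surgery.
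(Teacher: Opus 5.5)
Your skeleton matches the paper's. You assume \(L\) nonregular, take the Jan\v{c}ar--\v{S}\'ima witness, use the hypothesis to get an accepting run \(r\) with an uncoupled mirror pair, and pump to contradict the witness. The gap is the step you yourself flag as the main obstacle: it is left open, and the plan you give for it does not work.

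\textbf{Getting a pump from non-coupling.} When the \(i\)-th \(x\) is not coupled with its mirror \(y\), the typical situation is a \emph{shifted} coupling. Every symbol pushed in the \(x\)-region can still be popped in the \(y\)-region, just by a different block. Then no factor of \(x^n\) is pumpable independently of \(y^n\), and pigeonholing on (state, top, relative height) at boundaries inside the \(x\)-region cannot produce one. You also cannot invoke that \(x^n\) must ``balance \(y^n\) exactly in every accepting run''; that is essentially what has to be proven. What is needed is a \emph{joint} pump adding \(k\) copies of \(\lambda\) on one side and \(l\) copies of \(\mu\) on the other, with \(k\neq l\) in a prescribed direction. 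The paper obtains it in \Cref{lem:finding_spots} in three steps.
(i) If some fifth of a block \(x_i\) is coupled with nothing in \(y^n\), the block length \(p=5\cdot\mathsf{pc}(\lambda,\mu)\cdot|\alpha\beta\gamma|\cdot|A|\) guarantees a stretch of \(\mathsf{pc}(\lambda,\mu)\) copies of \(\lambda\) coupled with nothing outside it, and this stretch pumps alone.
(ii) Otherwise, the middle of \(x_i\) is coupled with a non-mirror block \(y_j\), and the first part of \(x_1\) is coupled with some position of \(y^n\). Since couplings cannot cross, these two couplings delimit a factor \(w_2\) of the \(x\)-region and a factor \(w_4\) of the \(y\)-region that are coupled with nothing else. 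Moreover, \(w_2\) contains at least \(\mathsf{pc}(\lambda,\mu)\) more copies of \(\lambda\) than \(w_4\) contains copies of \(\mu\).
(iii) \Cref{lem:pumping2} turns the run on this decomposition into a CFL inside \(\lambda'\lambda^*\#\mu^*\mu'\). There the unbalanced pumping lemma (\Cref{lem:unbalanced_ogden}, an induction on derivation trees) yields \(0\le l<k\le\mathsf{pc}(\lambda,\mu)\) with \(ux^n\lambda^{km}vy^n\mu^{lm}z\in L\) for all \(m\). Symmetrically, one gets a pump with \(k<l\).
A length imbalance between matched regions does not by itself give a pumpable pair with unbalanced ratio. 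That is exactly what this lemma supplies, and your sketch has no substitute for it.

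\textbf{The witness.} The witness you rely on is not what Jan\v{c}ar and \v{S}\'ima provide. \Cref{lem:js} gives \(\alpha\lambda^n\beta\mu^m\gamma'\in L\iff(n,m)\in S\), with \(S\) one of \(S_=\), \(S_\geq\), \(S_<\), \(S_\neq\). An equality-type slice need not exist. For \(L=\{a^nb^m\mid n\geq m\}\), every slice \(\{(n,m)\mid ux^nvy^mz\in L\}\) with \(x,y\) nonempty is, for \(n,m\geq 2\), empty, bounded, or cut out by a single linear inequality. So the ``standard argument'' you propose for strengthening the separation cannot exist. With the actual witness, an upward pump to \(ux^{n+p}vy^nz\) contradicts nothing when \(S=S_\geq\). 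When \(S=S_\neq\), the diagonal words \(\alpha\lambda^n\beta\mu^n\gamma'\) are not even in \(L\), and an arbitrary shift by \(p\) stays in \(L\).

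The paper handles this in two moves. It pads the suffix (\(\gamma=\mu\gamma'\) for \(S_<\), \(\gamma=\mu^{\mathsf{pc}(\lambda,\mu)!}\gamma'\) for \(S_\neq\)) so that the diagonal sublanguage lies in \(L\). It then uses the biased pump in the direction each case requires: \(k<l\) for \(S_\geq\), \(k>l\) otherwise, with exactly \(m=\mathsf{pc}(\lambda,\mu)!/(k-l)\) rounds for \(S_\neq\).
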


\begin{example} Couplings in PDA for general CFL can have much less structure
than \cref{lem:detcoup} requires for those recognizing DCFL, precluding most
standard CFL machinery. For example, consider the Goldstine language as first
applied to shuffles by Bérard~\cite{berard87}: \(G = \{a^{n_1}b \ldots
a^{n_i}b \ldots a^{n_k}b \mid k \geq 1, n_i \geq 0, \exists i \text{ with } i \neq n_i\)\}.
Multiple PDA reading \(G\) exist. For instance, for a nondeterministically
chosen $i$, one PDA might compare adjacent blocks of $a$'s for the criteria that
$n_i \neq n_{i-1} + 1$, while another PDA may choose to check
$n_i$ against the preceding $b$'s for the condition that $n_i \neq i$. In
either case: pumping on $G$ may rapidly introduce new $i$ which satisfy these
conditions, each of which would get a dedicated run, breaking any pre-existing
couplings.
\end{example}

\section{Proof of the Coupling Lemma}\label{sec:couppda}

Broadly, the goal of this proof is to demonstrate that if certain couplings
do not exist in some PDA, then it becomes possible to capture the behavior of
arbitrarily long strings in a finite number of classes towards regularity. Our
main object to encapsulate this is the \emph{endcap,} presented below.

\subsection{Definitions: endcaps}

Let \(A\) be a PDA and \(uxv \in \Sigma^*\) be the label of some run \(r\).  The
\emph{endcap} at positions \(i \in r\ange{u}\) and \(j \in r\ange{v}\) is a tuple that
contains information about the subruns \(r[:i]\) and \(r[j:]\), that is, skipping
the portion of \(r\) between \(i\) and \(j\).  Informally, the tuple contains the
portion of \(u\) read until \(r[i]\), the states reached by \(r[i]\) and
\(r[j]\), and the portion of \(v\) read from \(r[j]\).  Formally, the endcap is the
tuple \((u[:c_i], \stt(r, i), \stt(r, j), v[c_j:])\), where \(c_i\) is the smallest
such that \(i \in r\ange{c_i}\) and \(c_j\) the largest such that \(j \in r\ange{c_j}\).

In this setting, we will always focus on a specific endcap, that is unique for
each decomposition \(uxv \in \Sigma^*\) of the label of the run \(r\). Let
\(h_x\) be the smallest stack-height observed while reading \(x\) with
\(r\ang{x}\) and \(h_v\) the smallest stack-height observed while reading \(v\) with
\(r\ange{v}\) (note that the minimum may be achieved potentially \emph{right
  before} the first transition; we allow that as an edge case).  We let \(i\) be
the rightmost position in \(r\ange{u}\) that has maximal stack-height below
\(h_x\) and \(h_v\).  We let \(j\) be the leftmost position in \(r\ange{v}\) with
stack-height \(h_v\) (again, as an edge case, \(j\) may be the last transition
before \(\min r\ange{v}\)).  A graphical representation appears in
\Cref{fig:minendcap}.  The endcap at these positions \(i\) and \(j\) is called the
\emph{min-endcap} of \(r\) with respect to the decomposition \(uxv\) of its label.
Note, our definition of PDA requires that at most one element is pushed/popped
per transition.  Hence, the selected transitions are precise such that we need
not worry about skipping any stack heights in a single transition.

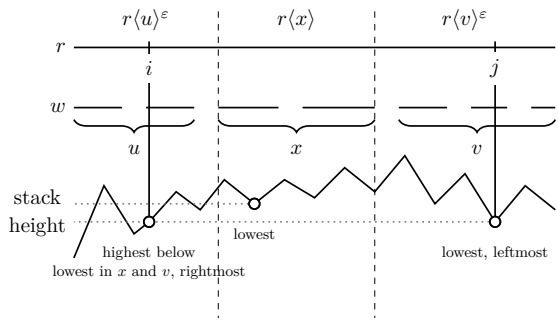
\begin{figure}[!h]
  \centering
  \scalebox{0.8}{\begin{tikzpicture}[>=stealth,thick]

\draw[dashed,thin] (2.4,0.6) -- (2.4,-4.5);
\draw[dashed,thin] (5,0.6)   -- (5,-4.5);

\node[left] at (0,0) {$r$};
\draw (0,0) -- (8,0);

\node[above] at (1.2,0.15) {$r\ange{u}$};
\node[above] at (3.7,0.15) {$r\ang{x}$};
\node[above] at (6.5,0.15) {$r\ange{v}$};

\begin{scope}[yshift=-1cm]
  \node[left] at (0,0) {$w$};

  \draw (0,0) -- (0.8,0);
  \draw (1.2,0) -- (2,0);

  \draw (2.4,0) -- (3.4,0);
  \draw (3.8,0) -- (5,0);

  \draw (5.4,0) -- (6.2,0);
  \draw (6.6,0) -- (7.2,0);
  \draw (7.6,0) -- (8,0);

  \draw[decorate,decoration={brace,mirror,amplitude=6pt}]
    (0,-0.2) -- (2,-0.2) node[midway,yshift=-0.5cm] {$u$};
  \draw[decorate,decoration={brace,mirror,amplitude=6pt}]
    (2.4,-0.2) -- (5,-0.2) node[midway,yshift=-0.5cm] {$x$};
  \draw[decorate,decoration={brace,mirror,amplitude=6pt}]
    (5.4,-0.2) -- (8,-0.2) node[midway,yshift=-0.5cm] {$v$};
\end{scope}

\begin{scope}[yshift=-3.5cm]
  \coordinate (stack_center) at (0,0.75);
  \node[left,align=center,font=\normalsize] at (0,0 |- stack_center) {stack\\height};

  \draw[thick]
    (0,0)           -- (0.5,1.2)  coordinate (s05)
    -- (1.0,0.4)     coordinate (s10)
    -- (1.25,0.6)    coordinate (s12)   
    -- (1.7,1.1)     coordinate (s16)
    -- (2.1,0.8)     coordinate (s20)
    -- (2.5,1.3)     coordinate (s25)
    -- (3.0,0.9)     coordinate (sxmin) 
    -- (3.5,1.3)     coordinate (s35)
    -- (4.0,1.0)     coordinate (s40)
    -- (4.5,1.5)     coordinate (s45)
    -- (5.0,1.1)     coordinate (s50)
    -- (5.5,1.7)     coordinate (s55)
    -- (6.0,0.9)     coordinate (s60)
    -- (6.5,1.4)     coordinate (s65)
    -- (7.0,0.6)     coordinate (svmin) 
    -- (7.5,1.2)     coordinate (s75)
    -- (8.0,0.8)     coordinate (s80)
  ;

  \draw[dotted,black!50] (sxmin) -- (0,0.9); 
  \draw[dotted,black!50] (svmin) -- (0,0.6);
\end{scope}

\draw[thick] (s12) -- ++(0,3); 
\node[fill=white] at ($(s12)+(0,2.55)$) {$i$};

\draw[thick] (svmin) -- ++(0,3); 
\node[fill=white] at ($(svmin)+(0,2.55)$) {$j$};

\filldraw[fill=white] (sxmin) circle (2.5pt); \draw (sxmin) circle (2.5pt);
\filldraw[fill=white] (svmin) circle (2.5pt); \draw (svmin) circle (2.5pt);
\filldraw[fill=white] (s12) circle (2.5pt); \draw (s12) circle (2.5pt);

\node[below=3mm,align=center,scale=0.8] at (sxmin) {\small lowest};
\node[below=3mm,scale=0.8] at (svmin) {\small lowest, leftmost};
\node[below=3mm,align=center,scale=0.8,font=\small] at (s12) {highest
  below\\ lowest in \(x\) and \(v\), rightmost};

\end{tikzpicture}}
  \caption{Min-endcaps: how to find \(i\) and \(j\).  The \(r\) line denotes a
    sequence of transitions.  The \(w\) line a sequence of letters; the blanks in
    that line correspond to \eps_transitions.}
  \label{fig:minendcap}
\end{figure}

\begin{restatable}{fact}{FactEndcap}\labelandarrows{fact:endcap}
  Let \(uxv \in \Sigma^*\) be the label of some run \(r\) that has a min-endcap from \(i\) to
  \(j\).  For all \(i' \geq i\), \(|\stk(r, i')| \geq |\stk(r, i)|\).  For all \(j' \geq j\),
  \(|\stk(r,j')| \geq |\stk(r, j)|\).
  
  In addition, we have that \(\stk(r, j) = \stk(r, i)e\) for some
  \(e \in \Gamma^*\) and all symbols of \(e\) are pushed by the transitions
  \(r\ang{x}\).  We call \(e\) the \emph{stack-effect} of the min-endcap.
\end{restatable}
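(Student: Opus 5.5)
The plan is to argue directly from the definitions of \(i\) and \(j\), using only the fact that each transition changes the stack height by at most one. Write \(h_i = |\stk(r,i)|\), and let \(h_x, h_v\) be the minima used in the definition of the min-endcap. By construction, \(h_i \le \min(h_x, h_v)\), and \(i\) is the rightmost position in \(r\ange{u}\) reaching the maximal such height.

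\textbf{First claim.} Take \(i' \ge i\) and split into three ranges.
\begin{itemize}
\item If \(i'\) lies in \(r\ang{x}\), then \(|\stk(r,i')| \ge h_x \ge h_i\).
\item If \(i'\) lies in \(r\ange{v}\) (including the \(\eps\)_transitions between \(x\) and \(v\), which belong to \(r\ange{v}\)), then \(|\stk(r,i')| \ge h_v \ge h_i\).
\item If \(i'\) lies in \(r\ange{u}\) with \(i' > i\), suppose \(|\stk(r,i')| < h_i\). Heights change by at most one per transition, and the run must climb back to at least \(\min(h_x,h_v) \ge h_i\) when it reaches \(x\); if \(x\) is empty, the \(v\) part serves instead. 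So some position between \(i'\) and the start of \(r\ang{x}\) has height exactly \(h_i\), and it is strictly to the right of \(i\). This contradicts the choice of \(i\) as rightmost.
\end{itemize}
The edge cases, where the minimum is attained "right before" the first transition, are absorbed by treating that boundary position as belonging to \(r\ange{u}\) or \(r\ange{v}\) respectively.

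For \(j' \ge j\), we have \(j' \in r\ange{v}\) since \(j\) is in the \(v\)-part and \(v\) is a suffix. Hence \(|\stk(r,j')| \ge h_v = |\stk(r,j)|\).

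\textbf{Second claim.} Since every position from \(i\) onward has height at least \(h_i\), the bottom \(h_i\) symbols of \(\stk(r,i)\) are never popped after \(i\). Therefore \(\stk(r,j) = \stk(r,i)e\), with \(|e| = h_v - h_i\).

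It remains to show that every symbol of \(e\) is pushed by a transition of \(r\ang{x}\). Consider the symbol at height \(h\) with \(h_i < h \le h_v\). Let \(p\) be the last position before \(j\) at which the height is \(h-1\); the symbol at height \(h\) in \(\stk(r,j)\) was pushed at \(p+1\) and never popped thereafter. We locate \(p+1\) by ruling out the other ranges.
\begin{itemize}
\item It cannot lie in \(r\ange{v}\) before \(j\): heights there are at least \(h_v \ge h\), so no position of height \(h-1\) exists there.
\item It cannot lie in \(r\ange{u}\) after \(i\): otherwise the height stays at least \(h\) from \(p+1\) onward, so heights after \(p\) are at least \(h > h_i\). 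But then height \(h-1 \ge h_i\) is attained at \(p\) and never later. This fails the rightmost-maximal choice of \(i\): the height \(h-1\) is still \(\le \min(h_x,h_v)\), and it is attained at a position \(p \ge i\). If \(p > i\), we get a rightmost position at height at least \(h_i\), and if \(h-1 > h_i\), this contradicts the maximality of \(h_i\).
\item Consequently, \(p+1 \le i\) would also force the symbol at height \(h > h_i\) to survive through \(i\), which is impossible since the stack at \(i\) has height \(h_i\).
\end{itemize}
So \(p+1 \in r\ang{x}\), or \(p+1\) is an \(\eps\)_transition adjacent to \(x\).

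\textbf{Main obstacle.} The delicate step is exactly this last one, namely pinning pushes to \(r\ang{x}\) rather than to the \(\eps\)_transitions surrounding \(x\). Those \(\eps\)_transitions belong to \(r\ange{u}\) or \(r\ange{v}\), and the two arguments above already exclude both. I expect the care to go into the boundary conventions, where minima are attained just before the first transition, and into checking that "maximal height below \(h_x\) and \(h_v\)" and the rightmost choice are used with the right strictness.
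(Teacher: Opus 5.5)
Your overall route is the paper's: everything is read off the definitions, using only that heights move by at most one per transition. Your treatment of the two monotonicity properties and of the prefix property \(\stk(r,j)=\stk(r,i)e\) is correct, and more explicit than the paper's ``immediate by construction''.

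There is, however, a hole in your second bullet, in the case \(p=i\). This case belongs to that bullet, since \(p+1>i\). But the height at \(p\) is then \(h-1=h_i\), so neither of your sub-cases (\(p>i\), or \(h-1>h_i\)) applies. The rightmost choice of \(i\) is not violated either: \(i\) really is the last position at height \(h_i\). This is exactly the configuration that the \emph{maximality} in the definition of \(i\) exists to exclude: the first symbol above \(\stk(r,i)\) being pushed right after \(i\) while the run is still in \(r\ange{u}\). So it cannot be left out.

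The repair is to apply maximality to \(p+1\) instead of \(p\). Position \(p+1\) lies in \(r\ange{u}\) and has height \(h>h_i\). Moreover \(h\le\min(h_x,h_v)\):
\(h\le h_v\) by the choice of \(h\); and \(r\ang{x}\), together with the position just before it, lies in the interval \([p+1,j]\), on which all heights are at least \(h\), so \(h_x\ge h\).
This contradicts the maximality of \(h_i\) in every case at once, and your rightmost/sub-case discussion becomes unnecessary.

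With that fixed, your localisation of the pushes is a mild variant of the paper's. The paper first uses \(|\stk(r,i)|=h_x\) when \(e\) is nonempty. This rests on an observation it leaves implicit: \(r\ange{u}\) ends at height at least \(h_x\), so \(|\stk(r,i)|=\min(h_x,h_v)\). It then takes the rightmost \(t\) in \(r\ang{x}\) (or just before it) at that height. It notes that \(\stk(r,j)\) is \(\stk(r,i)\) followed by a prefix of what \(r\ang{x}\) pushes after \(t\). Having a witness at the base height inside \(x\) makes excluding \(r\ange{u}\) automatic, which is precisely where your per-symbol argument needed extra care. In exchange, your version avoids the case distinction on \(h_x\) versus \(h_v\) and handles the boundary conventions explicitly.

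One caveat applies to both proofs. The inference ``heights stay at least \(h_i\), so the bottom \(h_i\) symbols are unchanged'' presumes that a height-preserving transition does not rewrite the top symbol. The preliminaries do allow this, with \(s,s'\in\Gamma\) both nonempty. One should therefore assume that transitions are pushes, pops, or stack no-ops.
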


\subsection{Surgery on runs}

\begin{plaindef}
  Let \(A\) be a PDA.  For \(s \in \Gamma^*\) and \(q \in Q\), write \(A_{s,q}\) for the PDA \(A\)
  in which the stack is pre-initialized with \(s\) and the initial state set to
  \(q\).  Two sequences \(e, e' \in \Gamma^*\) are \(k\)_suffix-equivalent if for any \(s \in
  \Gamma^*\) and \(q \in Q\), the PDA \(A_{se, q}\) and \(A_{se', q}\) accept the same
  words of length at most \(k\).  (Recall that \(se\) means that \(e\) is pushed
  on top of \(s\).)
\end{plaindef}

\begin{restatable}{lemma}{LemFinClass}\labelandarrows{lem:finclass}
  Let \(A\) be a PDA.  For any fixed \(k\), \(k\)_suffix-equivalence is an equivalence
  relation with a finite number of classes, denoted \([e]_k\) for any
  representative \(e \in \Gamma^*\).  For convenience, we let \(\eps\) be in its own
  equivalence class.
\end{restatable}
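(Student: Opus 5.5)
The plan is to map each stack content \(e\) to a finite object that determines its \(k\)_suffix-equivalence class. Equivalence is clear: \(k\)_suffix-equivalence is an intersection, over all \(s \in \Gamma^*\) and \(q \in Q\), of relations ``\(A_{se,q}\) and \(A_{se',q}\) accept the same words of length at most \(k\)''. Each of these is reflexive, symmetric and transitive, and so is their intersection. Putting \(\eps\) in its own class is harmless: this only splits one class into two.

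For finiteness, I will use the fact that a run reading at most \(k\) letters can only access a bounded-depth top portion of the stack, apart from its \eps_transitions. The first step is to remove unbounded \eps_popping. I will work with the set of ``pop summaries'' of a stack word. For \(e \in \Gamma^*\), define \(P(e) \subseteq Q \times Q\) as the set of pairs \((p,p')\) such that, starting in \(p\) with \(e\) on top of some stack, there is an \eps_run that reaches \(p'\), pops exactly \(e\), and never goes below \(e\). More generally, I need a bounded-letter version of \(P(e)\). Let \(\tau_k(e)\) record, for every state \(p\), every word \(w\) with \(|w| \le k\), and every state \(p'\), the following two things:
\begin{enumerate}
\item whether there is a run on \(w\) from \(p\) that stays above the bottom of \(e\) (never pops into \(s\)) and ends in an accepting state; and
\item whether there is a run from \(p\) reading a prefix \(w_1\) of \(w\) that ends in state \(p'\) exactly when \(e\) has been fully popped for the first time.
\end{enumerate}
Any accepting run of \(A_{se,q}\) on \(w\) either never leaves \(e\), or splits at the first moment \(e\) is emptied, after which it continues on \(s\) in a manner independent of \(e\). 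Hence \(\tau_k(e)=\tau_k(e')\) implies that \(e\) and \(e'\) are \(k\)_suffix-equivalent. Since \(\tau_k(e)\) is a subset of a finite set (\(Q \times \Sigma^{\le k} \times Q\) plus flags), it takes finitely many values, which gives finiteness.

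The point needing care, and the main obstacle, is the decomposition argument. Stack symbols pushed during the run on top of \(e\) do not matter, and the split at the first emptying of \(e\) must correctly hand over the state and the remaining suffix of \(w\). This gives \(L(A_{se,q}) \cap \Sigma^{\le k} = \{w_1w_2 : \ldots\}\), expressed purely in terms of \(\tau_k(e)\) and of what \(A_{s,p'}\) accepts on \(w_2\). I would state this as a small claim and prove it by cutting the run at the first index where the stack height equals \(|s|\). Notably, no bound on \eps_loops is needed, because \(\tau_k\) is defined existentially over runs rather than by enumerating them.
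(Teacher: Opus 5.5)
Your proposal is correct and is essentially the paper's proof. Your two components of \(\tau_k(e)\) correspond to the paper's sets \(Q_e^{>}\) and \(Q_e^{=}\): triples \((q,w,q')\) with \(w\in\Sigma^{\le k}\) for runs of \(A_{e,q}\) ending in \(q'\) with nonempty, resp. empty, stack. The paper likewise cuts an accepting run of \(A_{se,q}\) at the first index where the stack equals \(s\), and lifts runs from \(e'\) to runs from \(se'\).

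Two things to tighten. First, item 2 must record, for each triple \((p,w_1,p')\), whether some run reads exactly \(w_1\); an existential over prefixes of \(w\) would lose the split point and break the transfer step. Second, the pop-summary \(P(e)\) detour is unnecessary, since \(\tau_k\) already covers \(w=\varepsilon\).
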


\begin{restatable}{lemma}{LemSurgery}\labelandarrows{lem:surgery}
  Let \(A\) be a PDA, let \(r = r_1r_y\) be a run of \(A\) and write the label of
  \(r_1\) as \(w_1 = u_1x_1v_1\) and the label of \(r_y\) as \(y\).  Assume that
  \(x_1\) and \(y\) are not coupled in \(r\).  Consider another run \(r_2\) with label
  \(w_2 = u_2x_2v_2\) that has the same min-endcap as \(r_1\).  If the stack-effects
  of the min-endcap in \(r_1\) and \(r_2\) are both empty, or both nonempty and
  \(k\)-suffix equivalent, then for any word \(z \in \Sigma^{\leq k}\), if
  \(w_1yz\) is accepted by a run starting with \(r_1r_y\), then \(w_2yz\) is accepted
  by \(A\).
\end{restatable}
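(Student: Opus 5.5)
The plan is to build the accepting run for \(w_2yz\) by splicing: take the prefix of \(r_1\) up to the left position \(i\) of the min-endcap, then the middle part of \(r_2\), then the tail of \(r_1\) together with \(r_y\), and finally the part of the original run that reads \(z\). I will use \Cref{fact:endcap} to show that each piece can be replayed on a different stack, and \(k\)-suffix equivalence only for the final part reading \(z\).

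Let the common min-endcap be \((\hat u, p, q, \hat v)\), with positions \(i_1,j_1\) in \(r_1\) and \(i_2,j_2\) in \(r_2\). Write \(\stk(r_1,i_1)=s_1\), \(\stk(r_1,j_1)=s_1e_1\), \(\stk(r_2,i_2)=s_2\) and \(\stk(r_2,j_2)=s_2e_2\).

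\textbf{Step 1: the middle piece.} The subrun \(r_1[:i_1]\) reads \(\hat u=u_2[:c_{i}]\) and ends in state \(p\) with stack \(s_1\). By \Cref{fact:endcap}, the subrun \(r_2[i_2:j_2]\) never goes below height \(|s_2|\). Its transitions therefore never inspect \(s_2\), so it can be replayed on top of \(s_1\). It reads \(u_2[c_i:]\,x_2\,v_2[:c_j]\) and reaches state \(q\) with stack \(s_1e_2\).

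\textbf{Step 2: the tail piece.} Next, consider \(\rho = r_1[j_1:]\,r_y\), which reads \(\hat v\,y\) with \(\hat v = v_2[c_j:]\). By \Cref{fact:endcap}, the part of \(\rho\) inside \(r_1\) stays at height \(\geq |s_1e_1|\). For the part in \(r_y\), every symbol of \(e_1\) was pushed during \(r_1\ang{x_1}\). Popping any of them during \(r_y\) would couple \(x_1\) and \(y\), which is excluded. By stack discipline, \(r_y\) cannot reach \(s_1\) without first popping all of \(e_1\).

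\begin{itemize}
\item If \(e_1\) is nonempty, \(\rho\) never touches \(e_1\). It can be replayed on \(s_1e_2\), ending in the same state \(q'\) with stack \(s_1e_2g\), where the original ends with \(s_1e_1g\).
\item If \(e_1=e_2=\eps\), both configurations at \(j\) equal \((q,s_1)\). Then \(\rho\) and the rest of the original run can simply be replayed verbatim, and the empty case is finished.
\end{itemize}

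\textbf{Step 3: reading \(z\).} This is the main obstacle, because \(k\)-suffix equivalence relates \(e_1\) and \(e_2\), not \(e_1g\) and \(e_2g\). In the original run, the part \(\sigma\) reading \(z\) starts from \((q', s_1e_1g)\). I will split into two cases.

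\begin{itemize}
\item If \(\sigma\) never pops below height \(|s_1e_1|\), it only touches \(g\) and what lies above. It replays verbatim on \(s_1e_2g\) and ends in the same accepting state.
\item Otherwise, let \(t\) be the first moment \(\sigma\) reaches stack exactly \(s_1e_1\). Write \(z=z'z''\), where \(z'\) is read before \(t\), and let \(q''\) be the state at \(t\). The segment before \(t\) stays at or above \(|s_1e_1|\), so it replays on \(s_1e_2g\) and arrives at \((q'', s_1e_2)\). The remainder shows that \(A_{s_1e_1,q''}\) accepts \(z''\), and \(|z''|\le k\). Taking \(s=s_1\) in the definition of \(k\)-suffix equivalence, \(A_{s_1e_2,q''}\) also accepts \(z''\), which completes the run.
\end{itemize}

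Concatenating the pieces gives an accepting run of \(A\) on \(\hat u\,u_2[c_i:]x_2v_2[:c_j]\,\hat v\,y\,z = w_2yz\).

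The points needing the most care are the indexing of \(\eps\)-transitions at the splice positions and the edge cases where \(i\) or \(j\) fall just outside the corresponding subruns. The convention \(c_i,c_j\) in the endcap definition should make the labels match exactly.
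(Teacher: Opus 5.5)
Your proposal is correct and follows the paper's proof essentially step for step. You use the same splice \(r_1[:i_1]\,r_2[i_2+1:j_2]\,r_1[j_1+1:]\,r_y r_z\) and the same three cases: empty stack-effects giving identical configurations; nonempty effects where the part reading \(z\) never returns to \(s_1e_1\), so it replays verbatim; and nonempty effects with a first return point, where \(k\)-suffix equivalence is applied with \(s=s_1\).

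The only slip is cosmetic. The state and stack at position \(i\) describe the configuration \emph{after} transition \(r[i]\), so your middle and tail pieces should start at \(i_2+1\) and \(j_1+1\), not at \(i_2\) and \(j_1\). As written, the transitions (and letters) at the splice points are counted twice.
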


\subsection{Proof of the Coupling Lemma}

Recall the assumptions of the lemma: \(L\) is a CFL recognized by a PDA \(A\) and
there are \(k_1, k_2, k_3 \in \bbN\) such that for every \(w = uxvyz \in L\) with
\(|u|=k_1, |v|=k_2, |z|=k_3\), we have that \(x\) is \emph{not} \(A\)_coupled with
\(y\).

\mysubparagraph{Equivalence classes.} We associate every word \(w \in \Sigma^*\) with a
finite object and show that the induced equivalence relation refines the
Myhill-Nerode equivalence, that is, the relation \(x \equiv_L y\) that holds if for any
\(z\), \(xz \in L\) iff \(yz \in L\).  This relation is of finite index iff
\(L\) is regular, and we now construct an equivalence relation of finite index
refining \(\equiv_L\), showing that \(L\) is regular.

If \(|w| \leq k_1 + k_2\), we let \(w\) be alone in its equivalence class.  Otherwise,
define \(C\) to be the set of min-endcaps for \emph{any} way of writing
\(w = uxv\) and any accepting run over \(uxvyz\) with \(|z| = k_3\) and \(x\) and
\(y\) not coupled. We then associate \(w\) with the pair \((M, E)\) defined as:
\begin{align*}
  M & = \{ s \in \Sigma^{< k_3} \mid w\cdot s \in L\}\\
  E & = \{ (c, [e]_{k_3}) \mid c \in C \text{ with
      stack-effect } e\}
\end{align*}
If \(w\) has no accepting run, then this pair is empty. Importantly, these are
finite objects per \Cref{lem:finclass}.

\mysubparagraph{Refinement of Myhill-Nerode.}  Let \(w_1, w_2\) be two words that
are equivalent by the previously defined relation.  We show that for any
\(s \in \Sigma^*\), \(w_1s \in L\) implies \(w_2s \in L\).  If
\(|w_1| \leq k_1 + k_2\), then \(w_1 = w_2\), and this is immediate.  Let then
\((M, E)\) be the finite object associated with both \(w_1\) and \(w_2\) and assume
\(w_1s \in L\).  If \(|s| < k_3\), then \(s \in M\), and thus \(w_2s \in L\).  Otherwise,
consider a factorization \(w_1 = u_1x_1v_1\) and \(s = y_1z\) with \(|z| = k_3\) such
that there is a run with label \(w_1s\) in which \(x\) and \(y\) are not coupled ---
again, this exists by hypothesis.  Let \(c\) be the min-endcap of this run over
\(u_1x_1v_1yz\), and \(e\) be its stack-effect.  Since \((c, [e]_{k_3}) \in E\), there is another suffix \(s'\) such that
\(w_2 = u_2x_2v_2\) and \(u_2x_2v_2s'\) is accepted by a run with min-endcap
\(c\) and stack-effect \(e'\) with \(e\) and \(e'\) being \(k_3\)_suffix-equivalent.
\Cref{lem:surgery} allows us to conclude, noting that we enforced that \(e =
\eps\) iff \(e' = \eps\) by letting \(\eps\) be in its own equivalence class in
\Cref{lem:finclass}.

\section{Proof of the Deterministic Coupling Lemma}\label{sec:coupdpda}

\def\jas{Jan\v{c}ar and \v{S}\'ima\xspace}

\subsection{Pumping and coupling}

We first provide pumping lemmata pertaining to couplings, starting with a
variant of the classic pumping lemma for CFLs: If a word can be factored into
two unbalanced parts, such that one side is larger than the other, there must
also be a pumpable subsection which is similarly unbalanced.

\begin{restatable}{lemma}{LemUnbalancedOgden}\labelandarrows{lem:unbalanced_ogden}
  let \(\lambda\) and \(\mu\) be two words over an alphabet \(\Sigma\) not containing \(\#\), \(\lambda'\) a suffix of \(\lambda\) and \(\mu'\) a prefix of \(\mu\).
  Let  \(L\subseteq \lambda'\lambda^{*}\#\mu^{*}\mu'\) be a CFL.
  There is a constant \(p\) such that every \(\lambda'\lambda^{n}\#\mu^{n'}\mu'\in L\) with \(n\geq p\) and \(n>n'+1\) admits integers \(0\leq l<k\leq p\) such that for all \(m\geq 0\), \(\lambda'\lambda^{n+km}\#\mu^{n'+lm}\mu'\) is in \(L\).
\end{restatable}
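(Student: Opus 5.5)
We argue with Ogden's lemma applied to a grammar for \(L\), marking only the positions in the \(\lambda\)-block \(\lambda'\lambda^{n}\). Let \(p_0\) be the Ogden constant of a CNF grammar for \(L\) and set \(p\) to a suitable multiple of \(p_0\) and of \(|\lambda|,|\mu|\), together with the pumping bounds below. Fix \(w=\lambda'\lambda^{n}\#\mu^{n'}\mu'\in L\) with \(n\geq p\) and \(n>n'+1\). Ogden's lemma, with at least \(p_0\) marked positions, gives a factorization \(w=\alpha\beta\gamma\delta\eta\) in which \(\beta\delta\) contains at least one marked position and \(|\beta\gamma\delta|\) has at most \(p_0\) marked positions. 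Moreover \(\alpha\beta^m\gamma\delta^m\eta\in L\) for all \(m\).

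Because every pumped word must lie in \(\lambda'\lambda^*\#\mu^*\mu'\), neither \(\beta\) nor \(\delta\) contains \(\#\). Since they must also preserve the periodic structure, a pumpable factor lying inside the \(\lambda\)-block has length a multiple of \(|\lambda|\) and acts as inserting copies of a conjugate of \(\lambda\). After normalising, pumping once therefore adds \(k\) copies of \(\lambda\) to the left block and \(l\) copies of \(\mu\) to the right block. Here \(k=|\beta|/|\lambda|\) if \(\beta\) lies left of \(\#\), with a similar count if \(\delta\) also does. Taking both in the \(\lambda\)-block gives \(l=0<k\). Since at least one factor contains a marked position, we get \(k\geq 1\), and \(k\) is bounded by \(p\) through the marked-position bound. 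Thus, \emph{when all marks are on the left}, we obtain \(\lambda'\lambda^{n+km}\#\mu^{n'+lm}\mu'\in L\) with \(k\geq1\) and \(k\leq p\).

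The main obstacle is guaranteeing \(l<k\) when \(\delta\) lies in the \(\mu\)-block, because then \(l\) is not controlled by the marks. Here I would use the hypothesis \(n>n'+1\) together with the choice of where to mark. Mark only the \(\lambda\)-positions far from \(\#\), namely the first \(p_0\) copies of \(\lambda\) after \(\lambda'\). Ogden's derivation-tree argument places \(\beta\gamma\delta\) under a node whose yield contains at most \(p_0\) marked symbols. If \(\delta\) reached into the \(\mu\)-block, then \(\gamma\) would span everything between the marked region and \(\#\), including the unmarked \(\lambda\)'s. That alone does not bound \(|\delta|\), so this is exactly where care is needed.

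My fallback is a counting argument. Choose the witness word so that \(n\) is minimal among words in \(L\) with the same \(n'\)-relation, or apply a descending version of the lemma: pump down (\(m=0\)) repeatedly. If every available pump had \(l\geq k\), pumping down would decrease \(n\) by \(k\) and \(n'\) by at least \(k\). Hence \(n-n'\) would never shrink, while \(n\) drops below \(p\) and \(n>n'+1\) persists. This strongly constrains the tree. Combined with the standard bound that \(|\beta\gamma\delta|\) has bounded length in the block near \(\#\) under unmarked positions, it forces some pump in the tree with \(l<k\). I would formalise this by iterating pumping-down on the derivation tree and tracking the quantity \(n-n'\).
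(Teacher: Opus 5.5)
\textbf{There is a genuine gap, and under the stated hypotheses it cannot be closed: the statement with only $n\geq p$ and $n>n'+1$ is false.}

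Your last step does not go through. Take $\lambda=a$, $\mu=b$, $\lambda'=\mu'=\varepsilon$ and $L=\{a^{n}\#b^{n-2}\mid n\geq 2\}$. For every large $n$, the word $a^{n}\#b^{n-2}$ satisfies $n\geq p$ and $n>n'+1$. Yet $a^{n+km}\#b^{n-2+lm}\in L$ forces $k=l$. This is exactly the scenario of your fallback. Every pump has $l=k$, and pumping down preserves $n-n'=2$. The descent simply ends at $a^{2}\#$, which has no pump at all. So ``$n-n'$ never shrinks while $n$ drops below $p$'' produces no contradiction. What makes the lemma true is a \emph{large} gap, $n-n'\geq p$, which is the hypothesis the next lemma actually has when it invokes this one. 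The paper's own induction has the same defect: its inductive claim fails for $a^{2}\#$, and the step ``because $|w|>2^{|N|}$'' is not justified along the induction.

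With the gap hypothesis, the missing idea is an excise-and-reinsert recursion on a CNF derivation tree, which you only gesture at. Ogden's lemma used as a black box on words does not supply it, because the good pump is found in a pumped-down word and must be transported back to $w$. The paper's argument runs as follows.
\begin{enumerate}
\item Take a pumping spot near a leaf, with yields $g,r$ and $0<|g|+|r|\leq 2^{|N|+1}$.
\item If $g,r$ both lie left of $\#$, or $g$ lies left and $r$ right with $|g|/|\lambda|>|r|/|\mu|$, the spot is ``biased'' and gives $0\leq l<k\leq p$ directly.
\item Otherwise ($g,r$ both right of $\#$, or $|g|/|\lambda|\leq|r|/|\mu|$), excise the spot. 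The word gets shorter and the normalised gap $|w_1|/|\lambda|-|w_2|/|\mu|$ does not decrease.
\item Induction then yields a biased spot in the smaller tree. Reinsert the excised subtree at a copy of the merged node lying outside that spot; this leaves the spot's yields, and hence its bias, unchanged.
\end{enumerate}
Because the gap never decreases, $|w_1|>(p-1)|\lambda|$ at every stage. So a pumping spot always exists, and the descent can only terminate at a biased spot. This monotone-gap invariant is precisely what your counting argument lacks.
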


With this tool in hand, we show how to use it on a word with some forbidden
couplings for a given run.

\begin{restatable}{lemma}{LemPumpingTwo}\labelandarrows{lem:pumping2}
  Let \(A\) be a PDA for a language \(L\). For every \(\lambda,\mu\), there is a constant \(p\) such that if
  \begin{itemize}
    \item \(\lambda'\) is a suffix of \(\lambda\) and \(\mu'\) is a prefix of \(\mu\), and
    \item \(r\) is a run labeled by \(w=u\lambda'\lambda^{n}v\mu^{n'}\mu'z\in L\) for any \(u,v,z\) and \(n-n'\geq p\), and
    \item \(\lambda'\lambda^{n}\) and \(\mu^{n'}\mu'\) are not \(r\)-coupled with any of \(u\), \(v\) and \(z\),
  \end{itemize}
then there are \(0\leq l<k\leq p\) such that for all \(m\geq 0\), \(u \lambda' \lambda^{n+km}v\mu^{n'+lm} \mu'z\) is in \(L\).
\end{restatable}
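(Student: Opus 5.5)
The plan is to reduce to \Cref{lem:unbalanced_ogden}. From \(A\) and the run \(r\) we build a CFL inside \(\lambda'\lambda^*\#\mu^*\mu'\) that records exactly what the two blocks can do, with \(u\), \(v\), \(z\) fixed as context. Pumping that CFL then yields the desired words of \(L\). The constant \(p\) will be the one from \Cref{lem:unbalanced_ogden}, maximised over finitely many auxiliary languages that depend only on \(A\), \(\lambda\), \(\mu\), so that it is independent of \(u,v,z,n,n'\).

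\textbf{Step 1: cut the run at the block boundaries.} Write \(r = r_u r_\lambda r_v r_\mu r_z\) according to the factorisation of \(w\), assigning \(\eps\)-transitions in some fixed way. Since \(\lambda'\lambda^n\) is not \(r\)-coupled with \(u\), \(v\) or \(z\), every symbol popped during \(r_\lambda\) was pushed during \(r_\lambda\) itself. Likewise, every symbol pushed during \(r_\lambda\) and popped later is popped during \(r_\mu\). The same holds symmetrically for \(r_\mu\).

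Let \(h\) be the stack height at the start of \(r_\lambda\). Then \(r_\lambda\) never goes below \(h\). At its end the stack is \(\stk(r_u)\,e\), where \(e\) is pushed in \(r_\lambda\). The run \(r_v\) touches only material above \(e\) or pushed during \(r_v\), and never pops into \(e\), since that would be a coupling of \(\lambda'\lambda^n\) with \(v\). Then \(r_\mu\) pops some suffix-free portion of \(e\) and pushes material that is never popped by \(r_z\). The remaining part of \(e\) is never popped.

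\textbf{Step 2: package the context finitely.} Record the states at the four cut points, \(q_1,q_2,q_3,q_4\). Define \(L_{\text{aux}}\) as the set of words \(\alpha\#\beta\) with \(\alpha\in\lambda'\lambda^*\) and \(\beta\in\mu^*\mu'\) for which there exist
\begin{itemize}
\item a run from \(q_1\) reading \(\alpha\) that never dips below its starting height and ends in \(q_2\) with stack-effect \(e\), and
\item a run from \(q_3\) reading \(\beta\), started with \(e\) followed by some portion pushed by the \(v\)-part, ending in \(q_4\),
\end{itemize}
subject to the condition that the \(v\)-portion acts as a ``state-to-state without touching \(e\)'' link, and that the final stack after \(\beta\) is above some prefix of \(e\).

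The trick is that the effect of \(r_v\) on \(e\) is nil. So \(r_v\) can be replaced by its summary \((q_2,q_3,\)~\(\text{net push }g)\). Here \(g\) is a word pushed by \(r_v\) whose content matters only through its interaction with \(r_\mu\). Because \(\mu^{n'}\mu'\) is also not coupled with \(v\), \(r_\mu\) does not pop \(g\) either. So \(g\) sits between \(e\) and the \(\mu\)-run, and the \(\mu\)-run cannot cross it.

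This looks contradictory, since \(r_\mu\) must pop \(e\) but \(g\) is in between. I resolve it as follows: no coupling with \(v\) forces \(r_v\) to be stack-neutral on net, ending at the height where it began. Symmetrically, by the no-coupling with \(z\), \(r_z\) only consumes what it pushes, apart from never-popped residue.

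\(L_{\text{aux}}\) is then recognized by a PDA that guesses the summary states, simulates \(A\) from \(q_1\) on \(\alpha\), jumps to \(q_3\) on \(\#\), and checks acceptance from \(q_4\) only through the state. The tail \(r_z\) does not touch the stack below its start, so acceptance after \(z\) depends only on \(q_4\) together with the finite information of what \(r_z\) needs. For this I use the fact that a stack-bounded-below subrun depends only on the state and on the top region it itself creates. There are finitely many tuples \((q_1,q_2,q_3,q_4)\), hence finitely many such languages.

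\textbf{Step 3: pump and splice back.} Apply \Cref{lem:unbalanced_ogden} to \(L_{\text{aux}}\) and the word \(\lambda'\lambda^n\#\mu^{n'}\mu'\) with \(n-n'\ge p\). This gives \(l<k\le p\) such that \(\lambda'\lambda^{n+km}\#\mu^{n'+lm}\mu'\in L_{\text{aux}}\). Each witness run is recombined with \(r_u\), \(r_v\) and \(r_z\), which is legitimate because the glued pieces interact only through the recorded states and the nonpopping property. This produces an accepting run on \(u\lambda'\lambda^{n+km}v\mu^{n'+lm}\mu'z\).

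The main obstacle is Step 2: proving rigorously that \(r_v\) and \(r_z\) can be abstracted into finite data, given the non-coupling hypotheses. The delicate point is the interaction of \(r_v\) with the stack-effect \(e\). I would first try to show that \(r_v\) is height-nondecreasing relative to \(\stk(r_u)e\) and returns to a height where the \(\mu\)-run sees exactly \(e\), or else treat the residual \(g\) by symmetry with the \(z\)-side argument.
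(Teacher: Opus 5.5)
Your architecture matches the paper's: an auxiliary CFL inside \(\lambda'\lambda^*\#\mu^*\mu'\) obtained from \(A\) by fixing the boundary states, finitely many such languages so that \(p\) is uniform, then \Cref{lem:unbalanced_ogden} and splicing back into \(r\). The gap is Step~2, which you flag yourself. It is not only a matter of rigor: two claims it rests on are false. The hypotheses only forbid couplings that involve one of the two blocks, so \(u\)--\(v\), \(u\)--\(z\) and \(v\)--\(z\) couplings are allowed. Hence:
(a) \(r_v\) need not be stack-neutral, and \(r_\mu\) need not pop into \(e\) at all. For example, \(v\) may push a symbol that \(z\) later pops or that is never popped, and when \(e=\eps\), \(r_v\) may even pop symbols pushed during \(u\).
(b) \(r_z\) may go below its starting height to pop symbols pushed during \(u\) or \(v\), so acceptance cannot be certified through \(q_4\) alone.

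With only state constraints, \(L_{\text{aux}}\) is therefore not contained in \(\{\lambda'\lambda^i\#\mu^j\mu' \mid u\lambda'\lambda^iv\mu^j\mu'z\in L\}\), and Step~3 fails. For instance, let \(\lambda=c\), \(\mu=d\), \(\lambda'=\mu'=v=\eps\), \(u=a^k\), \(z=b^k\). Let the PDA push on each \(a\), ignore the stack on \(c\), and on each \(d\) either push a marker \(X\) or pop an \(X\) from the top, never changing state. It accepts only after the \(b\)'s have popped exactly the symbols pushed on the \(a\)'s. For even \(n'\) there is a run satisfying all hypotheses. Yet \(a^kc^id^jb^k\in L\) iff \(j\) is even, whereas your \(L_{\text{aux}}\) contains \(c^i\#d^j\) for every \(j\). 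Nothing then prevents \Cref{lem:unbalanced_ogden} from returning an odd \(l\).

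The missing idea is to record a few bits about how the given run uses the stack at the cut points, and to enforce them in the auxiliary PDA, as the paper does:
(i) If the \(\lambda\)-block leaves the stack unchanged in \(r\), allow \(\#\) only on an empty stack, so that \(r_v\) can still pop what \(u\) pushed.
(ii) If \(r_v\) changes the stack, empty the stack on \(\#\). Then \(r_\mu\) never pops below its starting height, because the symbol on top when it starts, if any, was pushed during \(u\) or \(v\). So the two halves are independent and no unbounded residue \(g\) has to be carried; this replaces your attempted resolution.
(iii) If \(r_v\) changes the stack but \(r_\mu\) does not, or if \(r_v\) does not and the whole segment \(\lambda'\lambda^nv\mu^{n'}\mu'\) leaves the stack unchanged, accept at \(q_4\) only on an empty stack. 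Then \(r_z\) can still pop what \(u\) or \(v\) pushed.

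In all remaining cases, some symbol pushed by one of the blocks is never popped by \(r_v\) (resp.\ \(r_z\)) and shields it from the new block runs, so that subrun splices back unchanged. These bits, the four states, and the finitely many choices of \(\lambda',\mu'\) still give finitely many auxiliary languages, hence a uniform \(p\).
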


We can modify~\cref{lem:pumping2} so that the conclusion stands when the precise
coupling required by the decoupling lemma fails to hold.  For a fixed PDA \(A\),
and two words \(\lambda\) and \(\mu\), we define the pumping constant
\(\pc(\lambda,\mu)\) to be the maximum of the constants \(p\) given by~\cref{lem:pumping2}
in the cases \((\lambda,\mu)\) and \((\lambda,\varepsilon)\).

\begin{restatable}{lemma}{LemFindingSpots}\labelandarrows{lem:finding_spots}
  Let \(A\) be a PDA with a total of \(|A|\) states, and let
  \(\alpha,\lambda,\beta,\mu,\gamma\in \Sigma^{+}\).  Let \(p = 5\cdot
  \pc(\lambda,\mu)\cdot |\alpha\beta\gamma|\cdot|A|\), and \(u=\alpha\),
  \(x=\lambda^{p}\), \(v=\beta\), \(y=\mu^{p}\) and \(z=\gamma\), assume that
  \begin{itemize}
    \item there is an accepting run \(r\) labeled by \(w=ux^{n}vy^{n}z\in L\), and
          \item there is \(1\leq i\leq n\) for which the \(i\)-th occurrence of \(x\) is \emph{not} \(r\)-coupled with the \((n-i+1)\)-th occurrence of \(y\).
  \end{itemize}

  Then there are \(0\leq l<k\leq \pc(\lambda,\mu)\) such that for all \(m\geq 0\), \(ux^{n}\lambda^{km}vy^{n}\mu^{lm}z\) is in \(L\).
  There are also \(0\leq k<l\leq \pc(\lambda,\mu)\) such that for all \(m\geq 0\), \(ux^{n}\lambda^{km}vy^{n}\mu^{lm}z\) is in \(L\).
\end{restatable}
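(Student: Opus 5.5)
We reduce to \cref{lem:pumping2}, applied once with \((\lambda,\mu)\) and once with \((\lambda,\eps)\) or by symmetry. Write \(w = \alpha x^n \beta y^n \gamma\) with \(x=\lambda^p\), \(y=\mu^p\). Fix the accepting run \(r\) and the index \(i\) for which the \(i\)-th copy \(x_i\) of \(x\) is not \(r\)-coupled with the mirrored copy \(y_{n-i+1}\).

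\textbf{Step 1: locate a pushing block with a non-mirrored partner.} Since \(x_i\) and \(y_{n-i+1}\) are uncoupled, every push in \(x_i\) is popped either before \(y_{n-i+1}\) or after it (or never). Take the first case, \emph{popped after}; the other case is symmetric and gives the second conclusion. Inside \(x_i=\lambda^p\) we pick, by pigeonhole over \(|A|\) states and the \(5|\alpha\beta\gamma|\) slack in \(p\), a factor \(\lambda'\lambda^{N}\). This factor has two properties: its pushes are matched only inside a region of \(y\)-copies lying after \(y_{n-i+1}\), and it is not coupled with \(\alpha,\beta,\gamma\). The stack discipline makes the coupled partner of a contiguous segment of \(x_i\) a contiguous segment of the \(y\)-region. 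That partner therefore lies strictly inside \(y_{n-i+2}\cdots y_n\), which is strictly shorter than \(x_i\cdots x_n\) by a full block.

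\textbf{Step 2: form \(u,v,z\) and apply \cref{lem:pumping2}.} Choose the factor \(\lambda'\lambda^{N}\) so that \(N\) exceeds the corresponding number \(N'\) of \(\mu\)-letters in its partner \(\mu^{N'}\mu'\) by at least \(\pc(\lambda,\mu)\). The difference \(n-i+1\) versus \(n-i\) blocks, each block of length \(p\), provides this imbalance. If the pushes go nowhere, we use the \((\lambda,\eps)\) instance instead. Take \(u\), \(v\), \(z\) to be the surrounding context, so that \(\lambda'\lambda^N\) and \(\mu^{N'}\mu'\) are coupled only with each other. \Cref{lem:pumping2} then gives \(0\le l<k\le \pc(\lambda,\mu)\) such that every \(u\lambda'\lambda^{N+km}v\mu^{N'+lm}\mu'z\) lies in \(L\). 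Regrouping the \(\lambda\)-powers inside \(x^n\) and the \(\mu\)-powers inside \(y^n\) yields \(\alpha x^n\lambda^{km}\beta y^n\mu^{lm}\gamma\in L\). In the symmetric case, the partner region is longer than the pushing region, which gives \(k<l\).

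\textbf{Step 3: the second inequality.} The statement asserts both orderings simultaneously, so both cases must be present. The second ordering comes from the copies on the other side of the failed coupling. If \(x_i\) does not match \(y_{n-i+1}\), then some \(x\)-region matches too little of \(y\) and another matches too much, by a counting argument on total pushes versus pops over the nested structure. This exhibits both an unbalanced factor with \(N>N'\) and one with \(N<N'\), and each is handled as in Step 2.

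\textbf{Main obstacle.} The hardest part is Steps 1 and 3. There we must show that the contiguous coupled partners really produce a gap of at least \(\pc\) in both directions, and that the chosen factors are not coupled with \(\alpha,\beta,\gamma\); this is exactly where the factor \(5|\alpha\beta\gamma||A|\) in \(p\) is used. I would first prove that the coupling relation restricted to \(x^n\times y^n\) is monotone, meaning it is nested. I would then derive the imbalance by comparing the cumulative push and pop counts at the boundaries of \(x_i\) and \(y_{n-i+1}\).
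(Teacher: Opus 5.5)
Your ingredients are the right ones (nesting of couplings, \cref{lem:pumping2}, the $(\lambda,\varepsilon)$ instance as a fallback, a one-block surplus). The proposal breaks at exactly the two steps you flag.

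First, a factor $\lambda'\lambda^N$ taken \emph{inside} the single block $x_i$ cannot be guaranteed to beat its partner by $\mathsf{pc}(\lambda,\mu)$. The failure of the mirror coupling can be a drift accumulated over many blocks, each contributing less than $\mathsf{pc}(\lambda,\mu)$, so no single block need carry the imbalance. A pigeonhole on states also does nothing to isolate a factor from the rest of the word. What is needed is a segment spanning several blocks whose two ends are \emph{coupled positions}. First dispose of the case where some fifth of some $x$-block is not coupled with $y^n$ at all; this is the $(\lambda,\varepsilon)$ case, where the slack $|\alpha\beta\gamma|\cdot|A|$ is spent. Otherwise pick $s$ in the middle fifth of $x_i$ coupled with $t\in y_j$. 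In your ``after'' case $j\ge n-i+2$, pick $s'$ in the first fifth of $x_1$ coupled with some $t'$ in the $y$-region. Non-crossing then forces $t<t'$ and makes $[s',s]$ and $[t,t']$ coupled with nothing outside their union, which is what \cref{lem:pumping2} requires. Counting blocks, $[s',s]$ contains at least $(i-1)p+p/5$ copies of $\lambda$, while $[t,t']\subseteq y_{n-i+2}\cdots y_n$ contains at most $(i-1)p$ copies of $\mu$. So the relevant comparison is with $x_1\cdots x_i$, not with $x_i\cdots x_n$ as you wrote. The fifths are what preserve a surplus $p/5\ge\mathsf{pc}(\lambda,\mu)$ even though $s$ and $s'$ sit inside blocks.

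Second, the reverse inequality is misderived. In the ``before'' case $j\le n-i$, take $s''$ in the last fifth of $x_n$ coupled with $t''$. Then $[s,s'']$ has at least $(n-i)p+p/5$ copies of $\lambda$, and its partner $[t'',t]\subseteq y_1\cdots y_{n-i}$ has at most $(n-i)p$ copies of $\mu$, so you get $l<k$ again. Your claim that the partner region is then the longer one is false: both branches of your before/after dichotomy yield only the first conclusion. The second conclusion comes from exchanging the roles of $x$ and $y$, since $y_{n-i+1}$ is equally not coupled with $x_i$:
\begin{itemize}
\item cut the $y$-blocks into fifths, and pump $\mu$ alone if some fifth is uncoupled with $x^n$;
\item otherwise use the coupling of the middle fifth of $y_{n-i+1}$ with some $x_j$, $j\ne i$, and accumulate $y$-blocks to obtain a $y$-segment longer than its $x$-partner.
\end{itemize}
Your Step~3 ``counting argument on total pushes versus pops'' does not supply this. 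Push/pop totals are not the relevant quantity anyway: \cref{lem:pumping2} compares the lengths of two mutually isolated segments.
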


\subsection{Proof of the Deterministic Coupling Lemma}

We start by stating a deep result of \jas,\footnote{The result also appears in
  \cite{BordihnM20}, where they credit~\cite{Stearns67}, although a proof seems
  hard to extract from there.} which, at its core, says that \dcflp languages
all embed \(\{a^nb^n \mid n \geq 0\}\) in some way:
\begin{lemma}[{\cite[Theorem 1]{jancars21}}]\label{lem:js}
  Let \(L\) be a language in \(\dcflp\).
  There exist \(  \alpha,\lambda,\beta,\mu,\gamma \in \Sigma^{*}\) and a set \(S\) among
  \begin{itemize}
    \item \(S_{=}=\{ n,m \ |\ n = m \}\),
    \item \(S_{\geq}=\{ n,m \ |\ n\geq m\}\),
    \item \(S_<=\{ n,m \ |\ n< m\}\),
    \item \(S_{\neq}=\{ n,m \ |\ n \neq m\}\),
  \end{itemize}
  such that \(\alpha\lambda^{n}\beta\mu^{m}\gamma\in L\) is equivalent to \((n,m)\in S\).
\end{lemma}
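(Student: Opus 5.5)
Since this is \cite[Theorem~1]{jancars21}, we may simply invoke it. To see why it holds, we would argue directly on a deterministic PDA \(D\) for \(L\), normalized to read every input completely and to have no infinite \(\eps\)_loops. First, we would replace \(D\) by a machine whose steps each push or pop at most one symbol and which, after every letter, stores its control state on the stack. This lets every configuration be read as a state together with a stack word.

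\textbf{Extracting a pushing loop \(\lambda\).} Because \(L\) is not regular, the Myhill--Nerode relation has infinitely many classes. Hence reachable configurations cannot all be equivalent to configurations of bounded stack height, and the stack must matter at unbounded depth. A standard pigeonhole on the ``lowest point from here on'' decomposition of long runs then gives a prefix \(\alpha\), a word \(\lambda\), a state \(q\), a stack \(s\) and a nonempty stack word \(e\) with the following properties. The run of \(D\) on \(\alpha\lambda^n\) reaches \((q, s e^n)\), and the blocks \(e\) are never popped while reading \(\lambda\).

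\textbf{Extracting a popping loop \(\mu\).} The second step, and the main obstacle, is to find \(\beta\) and \(\mu\) that pop these blocks one by one in lockstep. We would argue that if, for every suffix, the deterministic run from \((q, se^n)\) only ever inspected a bounded number of \(e\)-blocks, then the classes \([\alpha\lambda^n]\) would coincide for large \(n\). This would contradict the choice of \(\lambda\) as a witness of nonregularity; the choice of \(\lambda\) and \(e\) must therefore be made among configurations that are pairwise inequivalent, which is where care is needed. We thus obtain words \(\beta\mu^m\) whose runs descend through arbitrarily many blocks. By pigeonhole on the state at each time a block boundary is first exposed, together with determinism, we get \(\beta\), \(\mu\) and a state \(q'\) with two properties. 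Reading \(\beta\) leads to \((q', se^n)\). Reading \(\mu\) from \((q', s'e^{j+1})\) pops exactly one block and returns to \(q'\), for every \(s'\) and \(j\). Replacing \(\lambda\) and \(\mu\) by suitable powers, and absorbing leftovers into \(\alpha\) and \(\beta\), aligns the two loops one block to one block.

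\textbf{Reading off \(S\).} With this alignment, after \(\alpha\lambda^n\beta\mu^m\) the configuration is \((q', se^{n-m})\) when \(m \le n\). When \(m > n\), the configuration is \(D\)'s configuration after reading \(\mu^{m-n}\) from \((q', s)\), which is independent of \(n\). This latter behaviour is ultimately periodic in \(m-n\); after a further power of \(\mu\) and a shift absorbed into \(\gamma\), it can be taken to be constant. A similar periodicity argument on \((q', se^{d})\) as a function of \(d\), followed by one more power adjustment, makes acceptance of \(\gamma\) constant for \(n > m\). It then remains to choose \(\gamma\) to separate the cases. We pick it so that membership differs between at least two of the regimes \(n>m\), \(n=m\) and \(n<m\); this is possible by the inequivalence secured in the previous step. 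Membership is then a Boolean combination of these three regimes that is neither always true nor always false. Of the six such combinations, four are literally \(S_=\), \(S_\ge\), \(S_<\) and \(S_\neq\). The remaining two are \(\{n>m\}\) and \(\{n \le m\}\). We handle them by shifting: moving one copy of \(\lambda\) into \(\alpha\) turns them into \(\{n \ge m\}\) and \(\{n < m\}\) respectively. This yields the four sets in the statement.
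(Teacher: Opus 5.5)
Invoking \cite[Theorem~1]{jancars21} is exactly what the paper does. It gives no proof of this lemma and imports it as a black box, so your first sentence already matches the paper.

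The direct argument you append ``to see why it holds'' has a genuine gap, at the step where you choose \(\gamma\). The only inequivalence you secure is between the configurations \((q,se^n)\) for distinct \(n\). You then extract \(\beta,\mu\) from an \emph{arbitrary} suffix that descends deeply, and nothing transfers that inequivalence to the configurations \((q',se^n)\) reached after \(\beta\).

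Concretely, let \(L=\{a^nb^n\mid n\geq 0\}\cup\{a^ndb^m\mid n,m\geq 0\}\). Recognize it by a DPDA that pushes \(A\) on each \(a\) in state \(q\). On \(d\) it enters an accepting state \(q_d\) that pops one \(A\) per \(b\), and reads \(b\) freely once the bottom is exposed.
\begin{itemize}
\item Your first step legitimately gives \(\alpha=\varepsilon\), \(\lambda=a\), \(e=A\), and the configurations \((q,A^n)\) are pairwise inequivalent, separated by \(b^n\).
\item The words \(db^m\) descend through arbitrarily many blocks and expose every boundary in \(q_d\). So your pigeonhole may return \(\beta=d\), \(\mu=b\), \(q'=q_d\), which have exactly the two properties you require.
\item Yet \(a^ndb^m\gamma\in L\) iff \(\gamma\in b^*\), independently of \(n\) and \(m\). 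No \(\gamma\) separates the regimes.
\end{itemize}

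Taking \(\beta,\mu\) along a suffix that \emph{witnesses} the inequivalence is necessary, but still not enough. First, the thresholds and periods in your periodicity adjustments depend on \(\gamma\), so \(\gamma\) cannot be chosen afterwards. Second, passing to powers can erase a purely modular distinction. Replace the second component above by \(\{a^ndb^mc\mid n\text{ even},\ m\geq 0\}\). Recognize it by popping one \(A\) per \(b\) with two alternating states, and popping the remaining \(A\)'s by \(\varepsilon\)-moves on \(c\).
\begin{itemize}
\item The suffix \(db^5c\) distinguishes \(a^4\) from \(a^5\) and descends through all blocks.
\item Your pigeonhole then yields \(\beta=d\), \(\mu=b^2\).
\item After aligning \(\lambda:=a^2\), membership of \(\alpha\lambda^n\beta\mu^m\gamma\) is again independent of \((n,m)\).
\end{itemize}

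The pairwise inequivalence in your first step is also asserted rather than derived. A pigeonhole on the lowest-point decomposition yields a pumpable loop, not pairwise inequivalent configurations.

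Choosing the pushing and popping loops so that a genuine, non-periodic comparison of \(n\) and \(m\) survives all these adjustments is the real content of Jan\v{c}ar and \v{S}\'ima's theorem. Either rely on the citation alone, as the paper does, or supply that argument.
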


Now recall the assumptions of the Deterministic coupling lemma: \(L\) is a DCFL recognized by a (not necessarily deterministic) PDA \(A\).
It is such that all sublanguages \(\{ux^nvy^nz \mid n \geq 0\} \subseteq L\) there is an \(n\) and \(1 \leq i \leq n\) such that, in \(ux^nvy^nz\), the \(i\)-th occurrence of \(x\) is \emph{not} \(A\)_coupled with the \((n-i+1)\)-th occurrence of \(y\).
  By contradiction, assume that \(L\) is not regular.
  We start by applying~\cref{lem:js}: there are \(  \alpha,\lambda,\beta,\mu,\gamma' \in \Sigma^{*}\) and a set \(S\) among the four possibilities such that \(\alpha\lambda^{n}\beta\mu^{m}\gamma'\in L\) if and only if \((n,m)\in S\).
  Let \(\gamma =  \mu^{\pc(\lambda,\mu)!}\gamma'\) if \(S= S_{\neq}\), \(\gamma=\mu\gamma'\) if \(S=S_{<}\) and \(\gamma=\gamma'\) otherwise.
  Let \(p=5\cdot \pc(\lambda,\mu)\cdot |\alpha\beta\gamma|\cdot |A|\), and let \(u=\alpha\), \(x=\lambda^{p}\), \(v=\beta \), \(y=\mu^{p}\) and \(z=\gamma\).
  Notice that for any choice of \(S\) and \(n\), the word \(ux^{n}vy^{n}w\) is in \(L\) for any \(n\).
  By assumption on \(L\), there is an \(n\) and \(1\leq i\le n\) such that, in \(ux^{n}vy^{n}z\), the \(i\)-th occurrence of \(x\) is not \(A\)-coupled with the \((n-i+1)\)-th occurrence of \(y\).
  Thus, there is an accepting run \(r\) for which this coupling does not stand.
  Thus all conditions of~\cref{lem:finding_spots} stand: there are \(l,k\) such
  that for all \(m\geq 0\), \(ux^{n}\lambda^{km}vy^{n}\mu^{lm}z\) is in \(L\).
  For \(S_{\geq}\), we take \(0\leq k<l\leq \pc(\lambda,\mu)\); and  \(0\leq
  l<k\leq \pc(\lambda,\mu) \) for all other cases.  We proceed by case over all
  possibilities for \(S\).
  \begin{itemize}
    \item \textbf{Case \(S_{=}\).} With \(m=1\), \(ux^{n}\lambda^{k}vy^{n}\mu^{l}z= \alpha\lambda^{np+k}\beta\mu^{np+l}\gamma'\) is in \(L\).
          Because of~\cref{lem:js}, this implies \(np+k=np+l\) and thus \(k=l\); a contradiction.
    \item \textbf{Case \(S_{\geq}\).} With \(m=1\), \(ux^{n}\lambda^{k}vy^{n}\mu^{l}z= \alpha\lambda^{np+k}\beta\mu^{np+l}\gamma'\) is in \(L\).
          Because of~\cref{lem:js}, this implies \(np+k\geq np+l\) and thus \(l\leq k\); a contradiction.
    \item \textbf{Case \(S_{<}\).} With \(m=1\), \(ux^{n}\lambda^{k}vy^{n}\mu^{l}z= \alpha\lambda^{np+k}\beta\mu^{np+l+1}\gamma'\) is in \(L\).
          Because of~\cref{lem:js}, this implies \(np+k< np+l+1\) and thus \(k\leq l\); a contradiction.
    \item \textbf{Case \(S_{\neq}\).} With \(m=\frac{\pc(\lambda,\mu)!}{k-l}\)
    (which is an integer because \(0<k-l\leq \pc(\lambda,\mu)\)),
      the word \(ux^{n}\lambda^{km}vy^{n}\mu^{lm}z= \alpha\lambda^{np+km}\beta\mu^{np+lm+\pc(\lambda,\mu)!}\gamma'\) is in \(L\).
      Because of~\cref{lem:js}, this implies \(np+km\neq np+lm+\pc(\lambda,\mu)!\) and thus \((k-l)m\neq \pc(\lambda,\mu)!\); a contradiction.
  \end{itemize}
  We reach a contradiction at the end of every road. Thus, \(L\) must be regular.

\begin{remark}
  It may be tempting to follow a similar approach for CFL, by extending \jas's
  result to \cflp.  However, a corollary of \Cref{lem:js} is that every
  nonregular DCFL embeds, in some sense, \(L_\#=\{a^nb^n \mid n \geq 0\}\).  In
  \cite{jancars21}, this is formalized using Mealy machines with
  oracles:\footnote{The exact definition of Mealy machine with oracle is not
    important to our discussion.}  for every \(L \in \dcflp\), there is such a
  machine \(A\) such that \(L_\#\) is the language of \(A\) with oracle
  \(L\).  We show that it is not possible to obtain a similar result for
  nonregular CFL.  Call a Turing machine with oracle \emph{regularity
    preserving} if its language is regular when its oracle is.  Consider \(\cC\)
  to be a recursively enumerable set of regularity preserving Turing machines
  with oracles that always halt (for instance, the set of Mealy machines).
  Then:
\end{remark}
\begin{proposition}\label{prop:nosimplecflp}
  There is no single nonregular CFL \(S\) such that for any other nonregular CFL
  \(L\), there is \(M \in \cC\) such that \(S\) is the language of \(M\) with oracle \(L\).
\end{proposition}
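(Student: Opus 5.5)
Suppose, for contradiction, that such a nonregular CFL \(S\) exists. The idea is a diagonalization over the recursively enumerable family \(\cC\): we build one nonregular CFL \(L\) such that no \(M \in \cC\) with oracle \(L\) has language \(S\). Fix an enumeration \(M_1, M_2, \ldots\) of \(\cC\). Since every \(M_i\) halts on every input and oracle, its behaviour on an input depends on only finitely many oracle queries. Also, with any regular oracle \(R\), the language \(L(M_i^R)\) is regular, hence differs from the nonregular \(S\).

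\textbf{Construction.} We build \(L\) from a sequence of regular languages that agree on longer and longer length prefixes, and take \(L\) as their limit. The plan is to keep \(L\) \emph{locally regular but globally nonregular}, and to use these stages as follows.

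\begin{enumerate}
\item At stage \(i\) we hold a regular \(R_{i-1}\) and a length bound \(N_{i-1}\), and we commit that \(L\) agrees with \(R_{i-1}\) on all words of length at most \(N_{i-1}\).
\item Since \(L(M_i^{R_{i-1}})\) is regular and \(S\) is not, there is a witness word \(w_i\) in their symmetric difference.
\item Run \(M_i\) on \(w_i\) with oracle \(R_{i-1}\). It halts after finitely many queries, all of length at most some \(N_i \ge N_{i-1}\).
\item Commit \(L\) to agree with \(R_{i-1}\) up to length \(N_i\). This guarantees \(M_i^L(w_i) = M_i^{R_{i-1}}(w_i)\), so \(L(M_i^L) \neq S\).
\item Choose \(R_i\) agreeing with \(R_{i-1}\) up to length \(N_i\), but adding some nonregular-looking piece beyond \(N_i\).
\end{enumerate}

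\textbf{Making the limit a nonregular CFL.} A natural concrete choice is \(L = \{a^n b^n \mid n \in I\} \cup \{a^n b^m \mid n \notin I\}\)-style languages, or more simply \(L = \{a^nb^n \mid n \ge 1\}\cap\) blocks switched on and off. Simplest is to take a fixed nonregular CFL \(K\) and set \(L = \{ w \in K : |w| \in J\} \cup \{w \in R : |w|\notin J\}\), with \(J\) a union of length intervals. The difficulty is that for an arbitrary \(J\) this need not be context-free.

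The cleaner version is to let \(L = \{ a^n b^m c^{t} \}\) encode the sequence \(N_i\) as a fixed "sparse" CFL-definable set. It is probably easier to avoid the problem by restricting to a regular set of lengths. Since \(N_i\) can be enlarged arbitrarily, we may insist that the switch points lie in a fixed sparse but regular-friendly set, for instance powers of a fixed base. Even then, membership of the lengths in \(J\) depends on the diagonalization and is only recursive, not regular.

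\textbf{Main obstacle.} The main obstacle is therefore guaranteeing that the limit \(L\) is context-free. A fallback that I expect to work is to diagonalize among a countable family of nonregular CFLs rather than constructing one. For \(d \ge 2\), consider \(L_d = \{a^{n}b^{n} : n \equiv 0 \bmod d\}\) together with a padding that makes all short words coincide, e.g. \(L_{d} = \{a^n b^n : n \ge d\} \cup F\) for a fixed finite/regular \(F\). Each \(L_d\) is a nonregular CFL, and \(L_d\) agrees with the regular language \(R = F\) on all words shorter than \(d\). Then:
\begin{itemize}
\item for each \(M \in \cC\), the regular language \(L(M^R)\) differs from \(S\) on some word \(w\);
\item the computation of \(M^R(w)\) queries only words of length less than some \(N\);
\item hence for \(d > N\) we get \(M^{L_d}(w) = M^R(w)\), so \(L(M^{L_d}) \neq S\).
\end{itemize}

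This step works for a single \(M\), but the quantifiers require one \(L\) defeating all \(M\) simultaneously. I would therefore take \(L\) as a disjoint union over markers, e.g. \(L = \{ a^n b^n : n \in D\}\) with \(D\) chosen so \(L\) stays context-free. This is exactly where the argument must be made to work, and verifying context-freeness of the diagonal choice (or proving that it is infinite and semilinear-compatible) is the crux.
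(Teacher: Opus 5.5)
There is a genuine gap, and you located it yourself: nothing in the proposal produces a \emph{context-free} \(L\) that defeats every \(M \in \cC\) at once.

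\textbf{Why the construction does not close.} The limit of your stagewise construction is a computable language with no reason to be context-free. The concrete families you try cannot carry the diagonalization. For instance, \(\{a^nb^n \mid n \in D\}\) is context-free only when \(D\) is ultimately periodic (erase the \(b\)'s and use that unary CFLs are regular). Such a language therefore cannot switch between regular-looking and nonregular-looking behaviour at the adaptively chosen thresholds \(N_i\).

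The fallback with the languages \(L_d\) has the quantifiers in the wrong order. It shows that for each \(M\) some \(L_d\) satisfies \(L(M^{L_d}) \neq S\). Contradicting the hypothesis instead requires one nonregular CFL \(L\) with \(L(M^L) \neq S\) for \emph{all} \(M \in \cC\). A direct diagonal construction could only be rescued by encoding Turing machine computations into a CFL (e.g.\ languages of invalid computation histories combined with the recursion theorem). That amounts to reproving a hardness result by hand.

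\textbf{The missing idea.} The paper avoids constructing \(L\) at all and argues through the arithmetical hierarchy. Under the hypothesis, a CFL \(L\) is nonregular if and only if there is \(M \in \cC\) such that for every word \(w\), \(w \in S\) exactly when \(M\) with oracle \(L\) accepts \(w\).
\begin{itemize}
\item The ``only if'' direction is the hypothesis itself.
\item The ``if'' direction is precisely your observation in step 2. If \(L\) were regular, \(L(M^L)\) would be regular by regularity preservation, hence different from the nonregular \(S\).
\item The inner condition is decidable, since \(M\) always halts and membership in the CFLs \(S\) and \(L\) is decidable. Together with the recursive enumerability of \(\cC\), this gives a \(\Sigma_2\) definition of nonregularity.
\end{itemize}
Hence regularity of CFLs would lie in \(\Pi_2\). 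But regularity of CFLs is \(\Sigma_2\)-complete (Bordihn, Holzer, and Kutrib), which would give \(\Sigma_2 \subseteq \Pi_2\), contradicting the strictness of the arithmetical hierarchy.

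Your ingredients (the enumeration of \(\cC\), totality of the machines, regularity preservation) are exactly the right ones. They just need to be assembled into this complexity argument rather than into an explicit construction.
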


\begin{proof}
  The proof hinges on the fact that regularity for CFL is a
  \(\Sigma_2\)_complete problem in the arithmetic hierarchy~\cite[p.429]{BordihnHK05}.
  Assume for a contradiction that there is such an \(S\).  Then, given a CFL
  \(L\), one can enumerate all the machines \(M \in \cC\), and check, universally,
  that \(S\) is the language of \(M\) with oracle \(L\) (for all words, check that
  they are both accepted or rejected, the latter property being decidable by
  hypothesis).  This will hold if, and only if, \(L\) is nonregular, showing that
  regularity for CFL would be in \(\Pi_2\), and the arithmetic hierarchy would
  collapse, a contradiction.
\end{proof}

\section{Shuffles and Their Interactions with Couplings}\label{sec:shuf}

\subsection{Definitions: trajectories and shuffles}

Let \(L_1\) and \(L_2\) be two languages that we will always assume to be over
disjoint alphabets in the context of shuffles.  Their \emph{full shuffle}
\(L_1 \shuf L_2\) is defined as the set of words \(u_1v_1\cdots u_nv_n\) such that
\(u_1\cdots u_n \in L_1\) and \(v_1\cdots v_n \in L_2\), where each of the
\(u_i\), \(v_i\) can be empty.

A \emph{trajectory} is a regular language \(T\) over the alphabet \(\{s, t\}\).
Let \(g\) be the morphism that maps the alphabet of \(L_1\) to \(s\) and that of
\(L_2\) to \(t\).  The \emph{shuffle according to the trajectory \(T\)} of
\(L_1\) and \(L_2\) is defined as \(L_1 \shuf_T L_2 = (L_1 \shuf L_2) \cap
g^{-1}(T)\).  If \(c \in \{s, t\}^*\), we write \(\shuf_c\) for
\(\shuf_{\{c\}}\) and call \(c\) itself a trajectory.

For \(\cC\) a class of languages (we will use \(\cC = \cflp\) or
\(\cC = \dcflp\)), a trajectory \(T\) is:
\begin{itemize}[noitemsep,topsep=0pt,parsep=0pt,partopsep=0pt]
\item \(\cC\)_safe if for all \(L_1, L_2 \in \cC\), \(L_1 \shuf_T L_2\) is CFL;
\item \(\cC\)_hostile if for all \(L_1, L_2 \in \cC\), \(L_1 \shuf_T L_2\) is \emph{not}
  CFL;
\item \(\cC\)_mixed otherwise, i.e., there are languages of \(\cC\) that shuffle
  into a CFL and others into a nonCFL.
\end{itemize}

Our ultimate goal is the classification of trajectories according to their
safety and hostility to \cflp and \dcflp.

\subsection{Resilient couplings and how to find them}

In order to use the Coupling Lemma within the context of shuffles, we
will explore how couplings of a language survive within a shuffle.  We start by
defining what ``survival'' means and prove, in the forthcoming Resilience
Lemmas, that any PDA with a nonregular language has couplings that survive
shuffles.

Let \(w_1 = xyz\) and \(w_2\) be two words on disjoint alphabets, \(c\) a trajectory,
and let \(w = w_1 \shuf_c w_2\).  We write \(w[y]\) for the smallest substring of
\(w\) that contains all of \(y\).  More formally, write \(w = x'y'z'\) such that
\(x'\) contains \(x\) (i.e., \(x\) is a scattered subword of \(x'\)), \(y'\) contains
\(y\), and \(z'\) contains \(z\), in such a way that \(y'\) starts and ends with the
same letters as \(y\).  Then \(w[y] = y'\).  We similarly use this notation for
parts of \(w_2\), which is not ambiguous since they are on disjoint alphabets.

\begin{plaindef}
  Let \(A\) be a PDA recognizing \(L = L_1 \shuf_T L_2\) for a trajectory \(T\)
  and let \(w_1 = uxvyz \in L_1\).  We say that \(x\) is \emph{resiliently}
  \(A\)_coupled with \(y\) if for any \(c \in T\) and any \(w \in w_1 \shuf_c
  L_2\), \(w[x]\) is \(A\)_coupled with \(w[y]\).  We use the same vocabulary
  for similar words in \(L_2\).
\end{plaindef}

The Coupling and Deterministic Coupling Lemma give rise to two separate lemmas
extracting pairs of resiliently coupled words, which exist when the trajectory
doesn't brutely filter either of the two languages into a different class via
word length. Let \(\pi_1: (\Sigma_1 \cup \Sigma_2)^* \to \Sigma_1^*\) be the
natural projection with \(\pi_1(h) = h\) for all \(h \in \Sigma_1\), and
\(\eps\) otherwise. Defining \(\pi_2\) similarly, we have the following:

\begin{lemma}[Resilience Lemma]
  Let \(A\) be a PDA recognizing \(L = L_1 \shuf_T L_2\) for a trajectory \(T\)
  with \(L_1\) in \(\cflp\). If \(\pi_1(L_1 \shuf_T L_2)\) is nonregular, then,
  for all \(k_1, k_2, k_3 \in \bbN\), there is a word \(uxvyz \in L_1\) with
  \(|u| = k_1, |v| = k_2, |z| = k_3\), and \(x\) is resiliently \(A\)_coupled
  with \(y\). The same holds for \(L_2\) with respect to \(\pi_2\), defined
  analogously.
\end{lemma}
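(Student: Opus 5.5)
Transfer the Coupling Lemma from \(L = L_1 \shuf_T L_2\) to the projected language \(L' = \pi_1(L)\), and argue by contrast. Fix \(k_1, k_2, k_3\) and suppose, towards a contradiction, that no word \(uxvyz \in L_1\) with \(|u|=k_1\), \(|v|=k_2\), \(|z|=k_3\) has \(x\) resiliently \(A\)-coupled with \(y\). Unfolding resilience, each such factorization then has a witness \(c \in T\), a word \(w \in w_1 \shuf_c L_2\) with \(w_1 = uxvyz\), and an accepting run \(r\) of \(A\) on \(w\) in which \(w[x]\) and \(w[y]\) are not \(r\)-coupled. The target is a PDA \(A'\) for \(L'\) satisfying the hypothesis of the Coupling Lemma. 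That lemma then gives \(L'\) regular, contradicting the assumption that \(\pi_1(L_1 \shuf_T L_2)\) is nonregular. The case of \(L_2\) is symmetric.

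The construction of \(A'\) is the standard closure of CFLs under morphisms. \(A'\) simulates \(A\), and on every transition reading a letter of \(\Sigma_2\) it instead reads \(\eps\), nondeterministically guessing the \(\Sigma_2\)-letter. Thus \(L(A') = \pi_1(L)\). Accepting runs of \(A'\) on \(w_1\) are in bijection with pairs \((w, r)\) where \(w \in L\), \(\pi_1(w) = w_1\), and \(r\) is an accepting run of \(A\) on \(w\). The bijection preserves the transition sequence and the stack behaviour exactly, only relabelling some transitions as \(\eps\)-transitions.

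Under this correspondence, the transitions \(r'\ang{x}\) of the run \(r'\) of \(A'\) form a subset of \(r\ang{w[x]}\), since \(w[x]\) is the minimal factor of \(w\) spanning \(x\). Similarly \(r'\ang{y}\) is contained in \(r\ang{w[y]}\). Coupling is a property of positions and stack heights, which are identical in \(r\) and \(r'\). Hence if \(w[x]\) and \(w[y]\) are not \(r\)-coupled, then \(x\) and \(y\) are not \(r'\)-coupled. The hypothesis of the Coupling Lemma for \(A'\) asks that for every \(w_1 = uxvyz \in L'\) with the prescribed lengths, \(x\) is not \(A'\)-coupled with \(y\), that is, some accepting run of \(A'\) avoids the coupling. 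The witness \((c, w, r)\) supplies exactly such a run, once its inclusion subtlety is settled.

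The main obstacle is a domain mismatch. The Coupling Lemma quantifies over words of \(L' = \pi_1(L)\), while non-resilience is only assumed for factorizations of words of \(L_1\). These coincide because \(\pi_1(L_1 \shuf_T L_2) \subseteq L_1\), so every \(w_1 \in L'\) is in \(L_1\). Words of \(L_1\) outside \(L'\) are irrelevant. Conversely, the witness \(w\) produced from \(w_1 \in L'\) lies in \(w_1 \shuf_c L_2\) for some \(c \in T\), so \(w \in L\), and the run \(r\) is a genuine accepting run of \(A\) lifting to \(A'\).

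A second point to check is the boundary of \(w[x]\). The \(\eps\)-transitions of \(A'\) at the extremities of \(x\) may include relabelled \(\Sigma_2\)-transitions lying outside \(w[x]\). This causes no harm, because the coupling definition uses \(r'\ang{x}\), which excludes extremal \(\eps\)-transitions. I therefore expect the argument to go through cleanly, with the inclusion \(r'\ang{x} \subseteq r\ang{w[x]}\) being the step requiring careful bookkeeping.
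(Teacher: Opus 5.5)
Your proof is correct and follows the paper's route. You build the projected PDA (the paper's \(A_p\)), invoke the Coupling Lemma on \(\pi_1(L)\), and transfer non-couplings along the correspondence between accepting runs of \(A\) on \(w\) and of the projected PDA on \(\pi_1(w)\); arguing by contradiction is just the contrapositive of the paper's direct argument, and you rightly skip the paper's Parikh-based detour, since the projected PDA already shows \(\pi_1(L)\) is context-free.
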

\begin{proof}
  We demonstrate that \(\pi_1(L_1 \shuf_T L_2)\) is always a CFL. Then, with
  the nonregularity condition, we may apply the coupling lemma. Let \(\pkh\colon
  \{s,t\}^* \to \bbN^2\) be the Parikh morphism defined by \(\pkh(s) = (1, 0)\)
  and \(\pkh(t) = (0,1)\), extended to languages where convenient. Parikh's
  theorem~\cite{Parikh66} notes that for every context free language there is a
  regular one with equal image under this morphism, and so \(\pkh(R_2) =
  \pkh(L_2)\) for some regular \(R_2\). Further, let \(R_1 = \{w_1 \in
  \Sigma_1^* \mid \exists w_2 \in R_2, |w_1| = |w_2|\}\), which is similarly a
  regular language. Then, we have it that \(\pi_1(L_1 \shuf_T R_2)\) =
  \(\pi_1(L_1 \cap R_1) = L_1 \cap R_1\) by construction. As \cfl is closed
  under intersection with a regular language, we have it that \(\pi_1(L)\)
  remains in \(\cfl\).

  Let \(A_p\) be the PDA obtained by applying \(\pi_1\) to each transition
  label, mapping every letter of \(\Sigma_2\) to \(\eps\). Naturally, \(A_p\)
  accepts \(L_p = \pi_1(L_1 \shuf_T L_2)\) which by premise is a nonregular
  sublanguage of \(L_1\).  Since it is nonregular, the Coupling Lemma tells us
  that for every \(k_1, k_2, k_3 \in \bbN\), there is a word \(w_1 = uxvyz \in
  L_p\) with \(|u| = k_1, |v| = k_2, |z| = k_3\), for which \(x\) and \(y\) are
  \(A_p\)_coupled. Given any word \(w_2 \in \Sigma_2^*\), any accepting run for
  \(w = w_1 \shuf_T w_2\) in \(A\) can be projected into an accepting run in
  \(A_p\) for \(w_1\) in which \(x\) and \(y\) are coupled. Projecting back,
  \(w[x]\) and \(w[y]\) are coupled in \(r\), and as this holds for all runs,
  \(w[x]\) and \(w[y]\) are \(A\)_coupled, concluding the proof.
\end{proof}

Using the same technique, when \(L_1 \in \dcflp\), we can find a uniform family of
resiliently coupled words using the Deterministic Coupling Lemma:
\begin{lemma}[Deterministic Resilience Lemma]\label{lem:detcoupres}
  Let \(A\) be a PDA recognizing \(L = L_1 \shuf_T L_2\) for a trajectory \(T\)
  with \(L_1 \in \dcflp\). If \(\pi_1(L_1 \shuf_T L_2)\) is nonregular, then,
  there is a sublanguage \(\{ux^nvy^nz \mid n \geq 0\} \subseteq L_1\) such that
  for every \(n\) and \(i \leq n\), in the word \(ux^nvy^nz\), the \(i\)-th
  occurrence of \(x\) is resiliently \(A\)_coupled with the \((n-i+1)\)-th
  occurrence of \(y\). The same holds for \(L_2\), analogously.
\end{lemma}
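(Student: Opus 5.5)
The plan is to mirror the proof of the Resilience Lemma, substituting the Deterministic Coupling Lemma (\Cref{lem:detcoup}) for the Coupling Lemma.

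\emph{Step 1: the projection is a nonregular DCFL.} Let \(L_p = \pi_1(L_1 \shuf_T L_2)\). As in the previous proof, take a regular \(R_2\) with \(\pkh(R_2) = \pkh(L_2)\) given by Parikh's theorem. The trajectory constraint acts only through lengths, so \(L_p = L_1 \cap R\) for a suitable regular \(R \subseteq \Sigma_1^*\). (Here \(R\) is the set of words \(w_1\) such that some \(c \in T\) has \(|c|_s = |w_1|\) and \(|c|_t\) realized by a word of \(L_2\); regularity of \(R\) follows from the semilinearity of \(\pkh(T)\) and of \(\pkh(L_2)\), projected to the first coordinate.) DCFLs are closed under intersection with regular languages, so \(L_p\) is a DCFL. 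By hypothesis it is nonregular, hence \(L_p \in \dcflp\).

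\emph{Step 2: project the PDA.} Let \(A_p\) be \(A\) with every \(\Sigma_2\)-label replaced by \(\eps\). Then \(A_p\) recognizes \(L_p\), since every accepting run of \(A\) on some \(w \in L\) projects to an accepting run of \(A_p\) on \(\pi_1(w)\), and conversely. The key point is that \(A_p\) need not be deterministic, and \Cref{lem:detcoup} explicitly permits a nondeterministic PDA as long as the recognized language is a DCFL. Applying the contrapositive of \Cref{lem:detcoup} to \(A_p\), we obtain a sublanguage \(\{ux^nvy^nz \mid n \geq 0\} \subseteq L_p \subseteq L_1\) such that, for every \(n\) and every \(1 \le i \le n\), the \(i\)-th occurrence of \(x\) is \(A_p\)-coupled with the \((n-i+1)\)-th occurrence of \(y\).

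\emph{Step 3: transfer the coupling back to \(A\).} Fix \(n\), \(i\), \(c \in T\) and \(w \in ux^nvy^nz \shuf_c L_2\), and take any accepting run \(r\) of \(A\) on \(w\). Relabeling the \(\Sigma_2\)-transitions of \(r\) as \(\eps\)-transitions gives a run \(r_p\) of \(A_p\) on \(\pi_1(w) = ux^nvy^nz\). This run is accepting and has the very same sequence of stack operations. By Step 2, \(r_p\) contains a transition reading a letter of the \(i\)-th \(x\) that pushes a symbol, which is later popped by a transition reading a letter of the \((n-i+1)\)-th \(y\). These positions are identical in \(r\), where they lie inside \(r\ang{w[x]}\) and \(r\ang{w[y]}\) respectively, because \(w[\cdot]\) is the smallest factor of \(w\) containing that occurrence. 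Therefore the two occurrences are \(r\)-coupled. Since \(r\) was an arbitrary accepting run, they are \(A\)-coupled, and since \(c\) and \(w\) were arbitrary, they are resiliently \(A\)-coupled. The argument for \(L_2\) is symmetric.

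\emph{Main obstacle.} The delicate point is Step 1: justifying that \(L_p\) really is \(L_1\) intersected with a regular language, so that it stays deterministic. This requires that membership of \(w_1\) in \(L_p\) depends only on \(|w_1|\) through a regular length condition. That holds because the only interaction between the two components is via \(|w_1|\), \(|w_2|\) and \(\pkh(T)\), all of which are semilinear, and one-dimensional semilinear sets are ultimately periodic. A secondary point to check is that the occurrences in \(w\) refer to the same positions in \(r\) and \(r_p\); this is immediate from the projection being a relabeling.
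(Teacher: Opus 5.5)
Your proposal is correct and follows the paper's route: the paper gives no separate proof, only ``the same technique'' as the Resilience Lemma, i.e.\ project \(A\) to \(A_p\), apply \Cref{lem:detcoup} contrapositively, and lift the couplings back along relabeled runs. Your Step 1 usefully makes explicit something the deterministic case needs and the paper glosses over: \(L_p = L_1 \cap R\) for a regular length condition \(R\) that accounts for \(T\), so \(L_p\) is a DCFL. One small imprecision in Step 3 is harmless: the coupled positions in \(r_p\) may be relabeled \(\Sigma_2\)-transitions inside \(r_p\ang{x}\) rather than transitions reading letters of \(x\), but \(r_p\ang{x}\) and \(r\ang{w[x]}\) are the same set of positions, so your conclusion stands.
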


\subsection{Resilient couplings and CFL\('\)_hostility}

\begin{theorem}\label{thm:cflp-hostile}
  The trajectories \(s^*t^*s^*t^*\), \((st)^*t^*(st)^*t^*\) are \cflp_hostile.
  Any trajectory that includes them (e.g., \((st^+)^*\)) is also \cflp_hostile.
\end{theorem}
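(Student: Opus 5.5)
The plan is a proof by contradiction: produce two resilient couplings that must cross in every accepting run. First, the inclusion claim reduces to the two base trajectories. If \(T \supseteq T_0\), then \(L_1 \shuf_{T_0} L_2 = (L_1 \shuf_T L_2) \cap g^{-1}(T_0)\). Since \(g^{-1}(T_0)\) is regular and CFLs are closed under intersection with regular languages, CFL-ness of the \(T\)-shuffle would give CFL-ness of the \(T_0\)-shuffle. So it suffices to treat \(T_0 = s^*t^*s^*t^*\) and \(T_0 = (st)^*t^*(st)^*t^*\).

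Fix \(L_1, L_2 \in \cflp\) and suppose \(L = L_1 \shuf_{T} L_2\) is recognized by a PDA \(A\). The first step is to check the hypothesis of the Resilience Lemma on both sides.
\begin{itemize}
\item For \(s^*t^*s^*t^*\), every pair of words can be concatenated, so \(\pi_1(L) = L_1\) and \(\pi_2(L) = L_2\), both nonregular.
\item For \((st)^*t^*(st)^*t^*\), a pair \((w_1,w_2)\) is shuffleable iff \(|w_1| \le |w_2|\). Since \(L_2\) is infinite, \(\pi_1(L) = L_1\). Also \(\pi_2(L)\) is \(L_2\) minus the finitely many words shorter than the shortest word of \(L_1\), hence still nonregular.
\end{itemize}

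Next, apply the Resilience Lemma twice, with parameters chosen so that the couplings can be forced to cross. For \(L_1\), take any \(k_1,k_2,k_3\), for instance all \(0\), to get \(w_1 = uxvyz \in L_1\) with \(x\) resiliently \(A\)-coupled to \(y\). Note that \(x\) and \(y\) are necessarily nonempty. For \(L_2\), take \(k_1' = |ux|\), \(k_2' = |vyz|\), \(k_3' = 0\) to get \(w_2 = u'x'v'y'z' \in L_2\) with \(x'\) resiliently coupled to \(y'\).

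Then choose one trajectory \(c\) per case:
\begin{itemize}
\item For \(s^*t^*s^*t^*\), take \(c = s^{|ux|}t^{|u'x'|}s^{|vyz|}t^{|v'y'z'|}\). It yields \(w = ux\,u'x'\,vyz\,v'y'z'\).
\item For \((st)^*t^*(st)^*t^*\), take \(c = (st)^{a}t^{b}(st)^{c'}t^{d}\) with \(a = |ux|\), \(b = |u'x'| - a \ge 0\), \(c' = |vyz|\) and \(d\) the remainder. The first block contains all of \(ux\), interleaved with a prefix of \(u'\). The \(t^b\) block finishes \(u'x'\). The third block interleaves \(vyz\) with a prefix of \(v'\); this is possible because \(|v'| = k_2' = c'\). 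The final block contains \(y'z'\).
\end{itemize}
In both cases \(c \in T_0\), and in \(w\) the segment \(w[x]\) lies entirely before \(w[x']\), which lies before \(w[y]\), which lies before \(w[y']\).

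Finally, derive the contradiction from stack discipline. Since \(w \in L\), there is an accepting run \(r\). By resilience, some push at position \(i\) in \(r\ang{w[x]}\) is popped at \(j\) in \(r\ang{w[y]}\). Likewise some push at \(i'\) in \(r\ang{w[x']}\) is popped at \(j'\) in \(r\ang{w[y']}\), and \(i < i' < j < j'\). The symbol pushed at \(i'\) sits above the one pushed at \(i\) and survives past \(j\). But the coupling definition requires that the stack never drop below \(|\stk(r,i)|\) on \([i,j)\) and that \(\stk(r,j) = \stk(r,i-1)\). Hence the symbol pushed at \(i'\) must be popped at or before \(j\), contradicting its pop at \(j' > j\).

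The main obstacle I expect is the bookkeeping in the second trajectory: the parameters \(k_1', k_2'\) must be chosen so that the interleaved blocks align exactly, with no \(t\)-letters of \(x'\) or \(y'\) falling inside the \((st)\)-blocks that carry \(x\) or \(y\). The choice \(k_1' = |ux|\), \(k_2' = |vyz|\) is designed precisely for this.
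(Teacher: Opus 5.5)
Your proof is correct and is essentially the paper's argument: the same two applications of the Resilience Lemma, with your \(k_1'=|ux|\), \(k_2'=|vyz|\), \(k_3'=0\) reducing to the paper's choice \(|x_1|,|y_1|,0\) when \(u=v=z=\varepsilon\). You also build the same crossing words and reach the same contradiction, and your additions are welcome: the reduction for supertrajectories via intersection with \(g^{-1}(T_0)\), the explicit stack-height argument for why couplings cannot cross, and the observation that for \((st)^*t^*(st)^*t^*\) one only has \(\pi_2(L)=L_2\) up to finitely many short words (the paper asserts equality, which is slightly inaccurate but harmless). One caveat, which the paper shares: the parenthetical example \((st^+)^*\) contains neither base trajectory (it excludes \(ss\) and \(t\)), so your inclusion reduction does not literally reach it; however, the trajectories you build in the second case lie in \((st)^+t^+(st)^+t^+\subseteq(st^+)^*\), so intersecting with \(g^{-1}((st)^+t^*(st)^*t^*)\) instead and rerunning your argument handles that example.
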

\begin{proof}
  Note that these particular trajectories have the property that \(\pi_1(L) =
  L_1\) and \(\pi_2(L) = L_2\), and so either projection is nonregular for all
  nonregular languages. Hence, shuffle resilience is always available.

  \emph{(\(T = s^*t^*s^*t^*\)).}\quad Let \(L_1, L_2\) be two nonregular CFLs and write
  \(L = L_1 \shuf_T L_2\).  Assume, towards a contradiction, that \(L\) is
  recognized by a PDA \(A\).  We apply the Resilience Lemma with
  \(k_1 = k_2 = k_3 = 0\) to both \(L_1\) and \(L_2\).  This provides, for
  \(i = 1,2\), a word \(w_i = x_iy_i \in L_i\) such that \(x_i\) and \(y_i\) are
  resiliently \(A\)_coupled.  The word \(x_1x_2y_1y_2\), which is in \(L\), thus has
  \(x_1\) \(A\)_coupled with \(y_1\) and \(x_2\) \(A\)_coupled with \(y_2\); since couplings
  cannot cross, this is a contradiction.

  \emph{(\(T = (st)^*t^*(st)^*t^*\)).}\quad Let \(L_1, L_2\) be two nonregular
  CFLs and write \(L = L_1 \shuf_T L_2\).  Assume, towards a contradiction, that
  \(L\) is recognized by a PDA \(A\).  We apply the Resilience Lemma with \(k_1
  = k_2 = k_3 = 0\) to \(L_1\), providing a word \(w_1 = x_1y_1 \in L_1\) with
  \(x_1\) and \(y_1\) resiliently \(A\)_coupled.  Set now \(k_1 = |x_1|, k_2 =
  |y_1|, k_3 = 0\).  By the Resilience Lemma for \(L_2\), there is \(w_2 =
  u_2x_2v_2y_2 \in L_2\) with \(x_2\) and \(y_2\) resiliently \(A\)_coupled.
  Let \(S = (st)^*\).  Consider the word: \[w = (x_1 \shuf_S u_2)x_2(y_1 \shuf_S
  v_2)y_2.\] Since \(|x_1| = |u_2|\) and \(|y_1| = |v_2|\), \(w\) is well
  defined and \(w \in L\).  But the resilient couplings cross, a contradiction.
\end{proof}

\subsection{Resilient couplings and \(\text{DCFL}'\)_hostility}

The goal of this section is to prove a characterization of
\(\text{DCFL}'\)_hostility, based on the automaton for the trajectory.  We also
show that this characterization is decidable and briefly outline related
complexities. We start by identifying patterns within automata, and then show
that they can be used to characterize \(\text{DCFL}'\)_hostility.

\subsubsection{Hard and easy strongly-connected components in DFA}

Let \(A\) be a DFA over \(\{s, t\}\).  A (recurrent) strongly-connected
component (SCC) in \(A\) is a set of states \(P\) such that \(\forall p, q \in
P\), there is a nonempty path from \(p\) to \(q\).  An \emph{SCC-pattern} in
\(A\) is a sequence \(\cS = u_1P_1u_2\ldots u_nP_nu_{n+1}\) such that \(P_1
\ldots P_n\) are SCC in \(A\), each \(u_i\) is a path in \(A\), \(u_1u_2\ldots
u_{n+1}\) is also a path in \(A\), and \(u_i\) goes from a state in \(P_{i-1}\)
to a state in \(P_i\).  The \emph{paths of} \cS are the paths of \(A\) that
start with \(u_1\), then a path within \(P_1\), then \(u_2\), and so forth.  The
SCC-pattern is \emph{accepting} if \(u_1\) starts in the initial state of \(A\)
and \(u_{n+1}\) ends in an accepting state.  Classically:
\begin{proposition}\label{prop:scc}
  There is a finite computable set of accepting SCC-patterns such that the set
  of accepting paths of \(A\) is equal to the union of the paths of the
  SCC-patterns.
\end{proposition}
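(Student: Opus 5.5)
The plan is to decompose accepting paths by the sequence of SCCs they visit and then apply a pumping-style finiteness argument. Fix a DFA \(A\) over \(\{s,t\}\) with state set \(Q\). Every path \(\rho\) of \(A\) visits a sequence of states. Group consecutive states according to the maximal recurrent SCC that contains them. A state lying on no cycle is transient and belongs to no recurrent SCC.

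The SCC graph is acyclic, so each recurrent SCC is visited along \(\rho\) in at most one contiguous block. Hence \(\rho\) factors uniquely as \(u_1\rho_1u_2\rho_2\cdots u_n\rho_nu_{n+1}\), where:
\begin{itemize}
\item each \(\rho_i\) is a maximal subpath staying inside a maximal recurrent SCC \(P_i\), possibly of length zero if the path merely passes through a state of \(P_i\);
\item each \(u_i\) is the connecting segment from the last state of \(\rho_{i-1}\) to the first state of \(\rho_i\).
\end{itemize}
Each connecting segment \(u_i\) visits only transient states in its interior. Since the SCCs are distinct and ordered topologically, \(n \le |Q|\).

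To make \(u_1u_2\cdots u_{n+1}\) itself a path, as the definition of SCC-pattern requires, I would normalize each block. Let \(q_i\) be the first state of \(\rho_i\) and \(q_i'\) its last state. Because \(P_i\) is strongly connected, there is a path inside \(P_i\) from \(q_i'\) back to \(q_i\), so any block can equivalently be described from a fixed entry point.

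The cleaner route is to choose the pattern so that \(u_{i+1}\) starts at the same state where \(u_i\) ends. Concretely, take \(u_{i+1}\) to be a path from \(q_i\) that first goes inside \(P_i\) to \(q_i'\), then follows the original connecting segment. This simple in-SCC prefix has length at most \(|Q|\), and it can also be read as part of \(\rho_i\). Then:
\begin{itemize}
\item the concatenation \(u_1\cdots u_{n+1}\) is a genuine path of \(A\);
\item \(\rho\) is a path of the pattern, because the loop material inside each \(P_i\) is absorbed into "a path within \(P_i\)".
\end{itemize}

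Finiteness follows because each \(u_i\) can be taken simple: it is a within-SCC simple path followed by a transient-only segment, which cannot repeat a state. This gives a length bound polynomial in \(|Q|\). Consequently only finitely many sequences \((u_1,P_1,\dots,u_{n+1})\) exist, and they can be enumerated. We keep those that start in the initial state and end in an accepting state.

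The converse inclusion is immediate: every path of an accepting SCC-pattern is an accepting path, since it starts and ends where the pattern does.

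The main obstacle is the bookkeeping needed so that each connecting segment is simultaneously a path and compatible with the "path within \(P_i\)" that precedes it. In particular, the degenerate case where \(\rho_i\) is empty must be handled. I would settle this by allowing the path within \(P_i\) to be the empty path at an entry state, and by fixing entry states \(q_i\) as part of the pattern data.
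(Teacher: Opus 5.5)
\textbf{There is a gap in the covering direction.} Your patterns contain each maximal SCC at most once. The loop-insertion point is the entry state $q_i$, and $u_{i+1}$ begins with a simple path $\pi_i$ from $q_i$ to $q_i'$. So the only walks through $P_i$ your patterns generate are $c_i\pi_i$, with $c_i$ a closed walk at $q_i$. Your claim that the block $\rho_i$ ``can be read'' this way is false. A suffix of $\rho_i$ starting at $q_i$ must start at an occurrence of $q_i$, and even the shortest such suffix need not be simple.

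Take states $q,a,b,q'$ with transitions $q\xrightarrow{s}a$, $a\xrightarrow{s}b$, $b\xrightarrow{s}a$, $b\xrightarrow{t}q'$, $q'\xrightarrow{s}q$, where $q$ is initial and $q'$ is accepting. This is a single SCC. The accepting path $q\,a\,b\,a\,b\,q'$ (label $sssst$) visits $q$ once and is not simple. So the pattern you build from it, with junction $q$ and $u_2=q\,a\,b\,q'$, does not contain it.

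The problem is structural and cannot be fixed by choosing junctions or length bounds differently. Replace the cycle through $a,b$ by two disjoint cycles $a\xrightarrow{s}a'\xrightarrow{s}a$ and $b\xrightarrow{s}b'\xrightarrow{s}b$, linked by $q\xrightarrow{s}a\xrightarrow{t}b\xrightarrow{t}q'\xrightarrow{s}q$. Consider the accepting paths $q\,a\,(a'a)^m\,b\,(b'b)^{m'}\,q'$. In a pattern where this SCC occurs once, the inserted closed walk is a single contiguous factor that starts and ends in the same state. This path never returns from the $b$-part to the $a$-part, so that factor can absorb only one of the two loop families. The other family must then fit in a bounded $u_i$. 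Hence, once $m,m'$ exceed your length bound on the $u_i$, no pattern of your form covers the path.

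\textbf{The missing idea} is that a pattern must be allowed to revisit the same SCC with several junction states. The definition of SCC-pattern permits $P_i=P_{i+1}$, and even non-maximal strongly connected sets. The standard argument is chronological loop erasure:
\begin{enumerate}
\item Let $x_0$ be the first state of $\rho$, and cut $\rho$ at the last occurrence of $x_0$. This gives a closed walk $c_0$ at $x_0$.
\item Follow the next transition $e_1$ to a state $x_1$, cut at the last occurrence of $x_1$ to get $c_1$, and continue until the end of $\rho$.
\end{enumerate}
The $x_k$ are pairwise distinct, since $x_k$ never occurs after its last occurrence. So $x_0x_1\cdots$ is a simple path from the initial state to an accepting state. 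Moreover $\rho=c_0e_1c_1e_2\cdots$, and each nonempty $c_k$ is a closed walk at $x_k$, hence lies in the maximal SCC of $x_k$.

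Now take as patterns all simple accepting paths, together with a choice of which of their non-transient vertices carry a loop, each chosen vertex $x$ carrying the maximal SCC of $x$. This family is finite and computable. Every path of these patterns is accepting, and every accepting path is covered. Your observations about transient states and the acyclicity of the condensation are correct, but they cannot yield the decomposition by themselves.
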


The \emph{alphabet} of an SCC is the set of labels of transitions internal to
the SCC.  We will talk of \(s\)-SCC, \(t\)-SCC, and \(st\)-SCC for SCCs with alphabets
\(\{s\}\), \(\{t\}\) and \(\{s,t\}\), respectively.  An SCC-pattern is \emph{hard} if
it contains either an \(st\)-SCC or two \(s\)-SCCs and two \(t\)-SCCs that are
interleaved in the pattern (as \(stst\) or \(tsts\)).

We fix an arbitrary SCC-pattern decomposition of the accepting paths of \(A\), as
provided by \Cref{prop:scc}.  We call \(\hard(A)\) the set of labels of the paths of the
hard SCC-patterns of \(A\), and \(\easy(A)\) for the rest.

We further make use of common techniques in the study of semilinear sets.
However, we only need such sets over 2 dimensions, ultimately corresponding to the
two letters $s$ and $t$ above. This case lends itself well to geometric
intuition on the plane, and hence, we use the following definition from
\cite{nguyen2018polyhedra}: A set is \emph{semilinear} if it can be defined as
\(\{(x, y) \mid \bbN \models \phi(x, y)\}\) with \(\phi\) a DNF where each
conjunct is \((cx + dy + e > 0)\) or \((cx + dy + e \equiv 0 \text{ mod } p)\)
for constants \(c, d, e, p\).
Any formula of the form \(cx + dy + e > 0\) corresponds to a half-plane delimited
by the line defined by \( cx + dy + e = 0\).
Thus, a clause corresponds to a polyhedron from which periodic points are selected.
We will have a closer look at the unbounded polyhedra associated to a formula and classify them in \emph{cones} and \emph{bands}.
The precise definitions are only needed for the proof of~\cref{prop:decide_antigrids}, and are therefore deferred here.

In particular, we will be interested in the following property of the semilinear
set \(\pkh(\hard(A))\), recalling that \(\pkh\colon \{s,t\}^* \to \bbN^2\) is
the Parikh morphism defined by \(\pkh(s) = (1, 0)\) and \(\pkh(t) = (0,1)\). An
example of this property is presented in \Cref{fig:grids}.
\begin{plaindef}
  We call \emph{periodic} any set \(i+p\bbN\) for \(i,p\in\bbN\).
  Let \(S \subseteq \bbN^2\) be a semilinear set.  A \emph{grid} for \(S\) is a pair \(U, V \subseteq
  \bbN\) of periodic sets such that \(U \times V \subseteq S\).  An \emph{antigrid} for \(S\) is a
  grid for the complement of \(S\).
\end{plaindef}

\begin{figure}[!h]
  \centering
  \begin{tikzpicture}[>=stealth,thick]

\begin{scope}[scale=0.8, xshift=-5cm]
    \fill[fill=gray!42] (0, 0) -- (0, 2) -- (4, 6) -- (6, 6) -- (6, 4) -- (2, 0);

    \draw[->, solid, thick] (0,0) -- (0,6.5);
    \draw[->, solid, thick] (0,0) -- (6.5,0);
    \foreach \y in {1, 2, 3, 4, 5} {
        \draw[solid, thin] (-0.1, \y) -- (0.1, \y);
        \draw[dotted, very thin] (-0.1, \y) -- (6, \y);
    }
    \foreach \x in {1, 2, 3, 4, 5} {
        \draw[solid, thin] (\x, -0.1) -- (\x, 0.1);
        \draw[dotted, very thin] (\x, -0.1) -- (\x, 6);
    }

    \draw[->, dashed, thick] (0, 2) -- (4, 6);
    \draw[->, dashed, thick] (2, 0) -- (6, 4);
    \foreach \x in {1, 3, 5} {
        \draw[->, solid, thin] (\x, 0) -- (\x, 6);
    }

    \filldraw[fill=white] (1, 0) circle (4pt);
    \filldraw[fill=white] (1, 1) circle (4pt);
    \filldraw[fill=white] (1, 2) circle (4pt);
    \filldraw[fill=white] (3, 2) circle (4pt);
    \filldraw[fill=white] (3, 3) circle (4pt);
    \filldraw[fill=white] (3, 4) circle (4pt);
    \filldraw[fill=white] (5, 4) circle (4pt);
    \filldraw[fill=white] (5, 5) circle (4pt);

    \node[left] at (-0.25,6) {$y$};
    \node[above] at (6,0) {$x$};
    \node[fill=white] at (1.5, 4.5) {Points in $S$};
\end{scope}

\begin{scope}[scale=0.8, xshift=3cm]
    \fill[fill=gray!42] (0, 2) -- (4, 6) -- (0, 6);
    \fill[fill=gray!42] (2, 0) -- (6, 4) -- (6, 0);

    \draw[->, solid, thick] (0,0) -- (0,6.5);
    \draw[->, solid, thick] (0,0) -- (6.5,0);
    \foreach \y in {1, 2, 3, 4, 5} {
        \draw[solid, thin] (-0.1, \y) -- (0.1, \y);
        \draw[dotted, very thin] (-0.1, \y) -- (6, \y);
    }
    \foreach \x in {1, 2, 3, 4, 5} {
        \draw[solid, thin] (\x, -0.1) -- (\x, 0.1);
        \draw[dotted, very thin] (\x, -0.1) -- (\x, 6);
    }

    \draw[->, solid, thick] (0, 2) -- (4, 6);
    \draw[->, solid, thick] (2, 0) -- (6, 4);
    \foreach \x in {0, 2, 4, 6} {
        \draw[->, solid, thin] (\x, 0) -- (\x, 6);
    }

    \foreach \x in {0, 2, 4, 6} {
        \foreach \y in {0, 2, 4, 6} {
            \filldraw[fill=gray] (\x, \y) circle (4pt);
        }
    }
    \node[left] at (-0.25,6) {$y$};
    \node[above] at (6.5, 0) {$x$};
    \node[fill=white] at (1.5, 4.5) {Antigrid for $S$};
\end{scope}

\end{tikzpicture}
  \caption{
    A semilinear set $S$, along with an antigrid of $S$, where \(S := (x - y + 2
    > 0)\) \(\land\ (-x + y + 2 > 0)\) \(\land\ (x + 1 \equiv 0 \text{ mod } 2)\).
    Notice that in the graph of $S$, the inequalities form a "band", while in
    the complement, this becomes a pair of "cones". In either case, the modulus
    condition filters these into "windows" of alternating parity. It is further
    worth noting that $S$ cannot have a grid itself; only antigrids.
  }
  \label{fig:grids}
\end{figure}
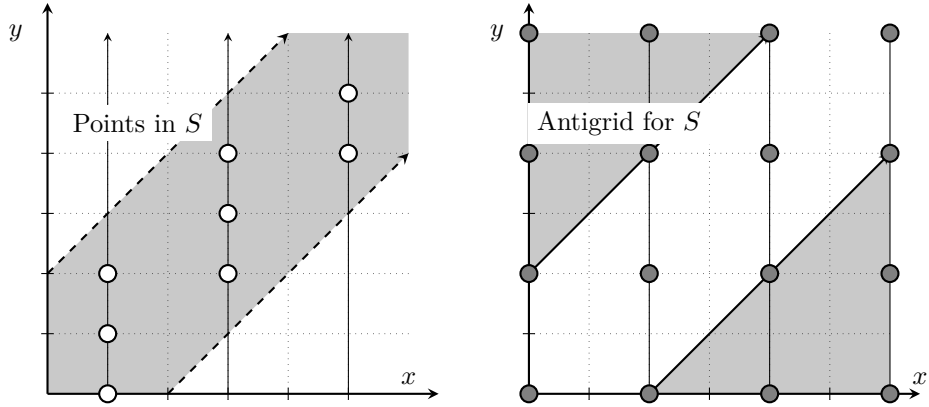

Our forthcoming characterization of \(\text{DCFL}'\)_hostility will focus on
finding antigrids in \(\pkh(\hard(A))\); we note that this is a decidable problem:
\begin{restatable}{proposition}{PropDecideAntigrids}\label{prop:decide_antigrids}
  It is decidable whether a given semilinear set contains a grid or an antigrid.
\end{restatable}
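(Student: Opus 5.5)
Given a semilinear set \(S \subseteq \bbN^2\), presented by a quantifier-free Presburger formula \(\phi\) in the DNF form above, we decide grids and antigrids in the same way. The complement of a semilinear set is effectively semilinear, since Presburger arithmetic is closed under negation and the normal form can be recomputed. So the antigrid question for \(S\) is the grid question for \(\bbN^2 \setminus S\), and it suffices to decide whether \(S\) contains a grid \(U \times V\) with \(U = i + p\bbN\), \(V = j + q\bbN\).

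First, we read the notion of periodic set so that grids are nontrivial: the periods \(p, q\) are positive, since with \(p = q = 0\) every nonempty \(S\) would contain a grid. The existence of a grid is then the statement
\[\exists i, j, p, q\ \bigl(p > 0 \land q > 0 \land \forall a, b\ \phi(i + pa,\ j + qb)\bigr).\]
The terms \(pa\) and \(qb\) multiply variables, so this is not directly a Presburger sentence, and the real task is to remove the multiplication.

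The main step is to bound the periods. Let \(M\) be the lcm of all moduli occurring in \(\phi\). We show that if a grid exists, then one exists with \(p\) and \(q\) multiples of \(M\); in fact we may take \(p = q = M\) after shifting the base point. Starting from a grid with periods \(p, q\), the set \(i + pM\bbN \times j + qM\bbN\) is still a grid. It then remains to show that congruence and half-plane constraints behave uniformly along the chosen residue classes.

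For this, we use the geometric decomposition alluded to before the statement. Each clause of \(\phi\) defines a polyhedron intersected with a lattice coset modulo \(M\). A grid is infinite in both directions, so it contains points \((i + pa, j + qb)\) with \(a\) and \(b\) independently large. By pigeonhole over the finitely many clauses, together with a Ramsey-type argument on \(\bbN^2\) (arranged so that all points lie in one clause), we obtain a sub-grid inside a single clause. A single clause contains a grid only if its polyhedron contains a translated quadrant \((x_0, y_0) + \bbN^2\). This happens exactly when every half-plane constraint \(cx + dy + e > 0\) has \(c \ge 0\) and \(d \ge 0\), which rules out bands and cones other than full quadrants. We also need a nonempty compatible residue class, which is trivially checkable modulo \(M\).

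The Ramsey step is the main obstacle, since clauses may overlap and the grid's points are spread across several of them. I would handle it by colouring each pair \((a, b)\) with a clause containing the corresponding point, and then applying a bipartite Ramsey theorem for infinite product sets: any finite colouring of \(\bbN \times \bbN\) has infinite \(A, B\) with \(A \times B\) monochromatic. Arithmetic progressions are then recovered from the clause's quadrant-plus-congruence shape. This reduces the problem to a finite check per clause of the sign conditions and the residues, which is clearly decidable. Running the same check on the complement's DNF decides antigrids.
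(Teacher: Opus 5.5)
\textbf{There is a genuine gap.} Your reduction to a single clause fails, and the statement it aims at is false, so the per-clause check cannot be repaired.

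The bipartite Ramsey theorem you invoke does not hold. Colour $(a,b)$ according to whether $a<b$. For any infinite $A,B$, the product $A\times B$ contains pairs of both colours.

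The conclusion itself also fails. Take $\phi=(x-y>0)\lor(y-x>0)$, the complement of the diagonal. It contains the grid $(2\mathbb{N}+1)\times 2\mathbb{N}$. Yet each clause is a cone with a negative coefficient, so your sign test answers ``no''. The test even depends on the representation: $\mathbb{N}^2$ written as $(y-x>0)\lor(x-y+1>0)$ is declared grid-free, while $\mathbb{N}^2$ written as the single clause $0x+0y+1>0$ is not. A test that varies with the chosen DNF cannot decide a property of the set.

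The same example refutes your earlier claim that one may take $p=q=M$. Here $M=1$, and every grid with periods $1$ is a translated quadrant, which meets the diagonal.

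The missing idea is that a grid must meet every wide region of the arrangement formed by \emph{all} lines of $\phi$ at once. So the criterion has to be global, not per clause. The paper does this as follows:
\begin{itemize}
\item It shifts the origin past all intersection points.
\item It partitions the plane into cones (between non-parallel lines) and bands (between parallel lines, including the lines themselves).
\item It shows that a grid exists if and only if some residue class modulo the common modulus lies in $S$ in every cone simultaneously, i.e.\ $\bigcap_{c}\mathsf{core}(c)\neq\emptyset$.
\end{itemize}
The last point works because cones contain arbitrarily large squares, so every grid hits each cone. Bands, by contrast, are finite unions of lattice lines, which a grid can dodge by coarsening its period along the band's direction.

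If you want to keep your Presburger route, the salvageable version needs an effective bound on the periods. That bound must be a multiple of $M$ large enough to jump over the bands, which again requires the cone/band analysis. Once you have it, for each fixed pair $p,q$ below the bound, $\exists i,j\,\forall a,b\ \phi(i+pa,j+qb)$ is a genuine Presburger sentence and can be decided.
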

\begin{proof}[Proof sketch.]
  Since semilinear sets are effectively closed under complement, it is sufficient to
  argue that we can find grids in them.  Consider a representation of a given
  \(S \subseteq \bbN^2\) in the plane.  When sufficiently away from the origin
  to avoid any bounded polyhedra, \(S\) is the union of affine cones and bands
  (or, more precisely, \(S\) is the union of periodic windows within these cones
  and bands). The gaps left by the cones of \(S\), seen continuously, are also
  cones and bands.  Let us call the former \(S\)-cones and \(S\)-bands, and the
  latter non-\(S\)-cones and non-\(S\)-bands.

  If there is a non-\(S\)-cone, then no grid can exist.  Assume otherwise.  Notice
  that if there is a grid \(U \times V\), then any infinite periodic set
  \(U' \subseteq U\) is such that \(U' \times V\) is also a grid.  We can thus
  assume that the period of the grid is such that it ``jumps'' over the bands,
  and thus, there are no bands.  Now, a grid exists if and only if the
  \(S\)-cones are compatible, in the sense that extending the periodic windows
  of the \(S\)-cones to the whole plane, they have a nonempty intersection.  All
  of these properties are decidable.
\end{proof}

\begin{remark}\label{rem:complexity}
The previous decidability algorithm yields a \(\PSPACE\) upper bound. Briefly,
this follows from the fact that while the product of periods of \(S\) may be
exponential, and so too the regions to look for "windows" to line up, this
problem admits a \(\coPSPACE\) solution by guessing a point in the intersection.
Conclude by Savitch's theorem. In the special case where the periods are all
related (or equal, in the best case), this algorithm is only \(PTIME\).
\end{remark}

\subsubsection{Characterization of \(\text{DCFL}'\)_hostility}

\begin{theorem}\label{thm:dcflhost}
  Let \(T\) be a regular trajectory recognized by a DFA \(A\).
  \begin{enumerate}
  \item \(\pkh(\hard(A))\) does not have an antigrid if, and only if, \(\shuf_T\)
    is \dcflp_hostile;
  \item \(\pkh(\hard(A))\) is nonempty and has an antigrid if, and only if,
    \(\shuf_T\) is \dcflp_mixed;
  \item \(\pkh(\hard(A))\) is empty if, and only if,
    \(\shuf_T\) is \cflp_safe.
  \end{enumerate}
\end{theorem}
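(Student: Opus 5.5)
Since the three conditions on \(\pkh(\hard(A))\) (empty; nonempty with an antigrid; no antigrid, which forces nonemptiness because \(\emptyset\) trivially has antigrids) are mutually exclusive and exhaustive, and likewise for the three behaviours of \(\shuf_T\) (\cflp_safe implies the DCFL-restriction is safe, so the three classes are disjoint once we know a safe trajectory is not hostile), it suffices to prove three implications. First, \emph{\(\hard(A)=\emptyset\) implies \cflp_safe}. Second, \emph{no antigrid implies \dcflp_hostile}. Third, \emph{a nonempty \(\hard(A)\) with an antigrid implies \dcflp_mixed}. The converses then follow by exclusivity.

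\textbf{Safety.} By \Cref{prop:scc}, \(T\) is a finite union of easy SCC-patterns. For each pattern, the SCCs are \(s\)-SCCs or \(t\)-SCCs with no \(stst/tsts\) interleaving, so the pattern has the shape \(R_1 R_2 R_3\) (up to finite connecting words). Here each \(R_j\) is a regular subset of \(s^*\) or of \(t^*\), with at most one alternation pattern like \(s\ldots s\,t\ldots t\,s\ldots s\) and no second \(t\)-block after it. A PDA can simulate one language entirely, and for the other it needs at most a concatenation-like split. Concretely, for a pattern \(s^*t^*s^*\), guess the split of \(L_1\)'s word, run \(L_1\)'s PDA on the first \(s\)-block, keep its stack while running \(L_2\)'s PDA above it, discard \(L_2\)'s stack, then resume. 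Discarding is impossible, so instead I would use that the middle block is a single language read completely, and accept via the CFL \(L_1 \cap\) (length constraints) combined by a substitution argument: \(L_1\shuf_{s^*t^*s^*}L_2\) is the image of \(L_1\) under inserting one word of \(L_2\), which is a CFL by closure under \(\sigma\)-substitution of a marker. Intersecting with the regular length constraints of the pattern (Mateescu et al.'s result for regular languages helps here) keeps it a CFL. Unions of patterns preserve CFL.

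\textbf{Hostility.} Assume no antigrid and let \(L_1,L_2\in\dcflp\), with a PDA \(A'\) for the shuffle. I first show that \(\pi_1(L_1\shuf_T L_2)\) is nonregular. Otherwise, since a DCFL' contains \(\alpha\lambda^n\beta\mu^m\gamma\) patterns (\Cref{lem:js}), the set of lengths used by \(L_1\) and \(L_2\) would yield periodic \(U,V\) with \(U\times V\) avoiding \(\pkh(\hard(A))\), i.e.\ an antigrid. The same holds for \(\pi_2\). The Deterministic Resilience Lemma (\Cref{lem:detcoupres}) then gives families \(u_ix_i^nv_iy_i^nz_i\) with nested resilient couplings. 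Lengths in \(\{|u_1x_1^nv_1y_1^nz_1|\}\times\{\ldots\}\) form a grid-like set of periodic progressions; the absence of an antigrid means some pair of lengths hits \(\pkh(\hard(A))\) in a hard pattern with large exponents. Via pumping inside an \(st\)-SCC, or inside interleaved \(s,t,s,t\) SCCs, I would realize a trajectory \(c\) placing an occurrence of \(x_1\) before one of \(x_2\), before the coupled \(y_1\), before \(y_2\). The resulting couplings cross, which is a contradiction as in \Cref{thm:cflp-hostile}.

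\textbf{Mixed.} Given an antigrid \(U\times V\), choose \(L_1=\{a^nb^n\}\) restricted to lengths in \(U\) and \(L_2\) similarly in \(V\), padded to remain \dcflp. Then their words never follow hard patterns, and the shuffle equals an easy-pattern shuffle, which is CFL by the safety argument. Nonemptiness of \(\hard(A)\) gives a point \((m,n)\); taking \(L_1,L_2\) nonregular DCFLs whose length sets are the periodic sets containing \(m,n\), the hostility argument applied locally to that hard pattern yields a nonCFL. The main obstacle is the hostility step: converting "no antigrid" into an explicit trajectory word that interleaves two resilient coupling families crosswise, especially for interleaved \(s\)- and \(t\)-SCCs, where periods must be aligned with the pumping constants \(|x_i|,|y_i|\). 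I would handle this first by replacing \(x_i,y_i\) by suitable powers matching all SCC periods.
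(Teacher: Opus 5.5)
Your reduction to three forward implications is sound: the conditions on $\pkh(\hard(A))$ are exhaustive, and the behaviours are pairwise exclusive. Proving safety through closure properties (insert a marker into $L_1$, substitute $L_2$ for it, intersect with $g^{-1}(T)$) is a workable alternative to the paper's explicit PDA, provided you also guess the boundedly many stray letters coming from connecting paths. Incidentally, a PDA \emph{can} discard its stack down to a bottom marker with $\varepsilon$-pops.

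The genuine gap is in the hostility step. Knowing that some pair of lengths hits $\pkh(\hard(A))$ does not let you force a crossing when the pattern hit is \emph{synchronized}, i.e.\ when all its loops have the same ratio $|l|_s/|l|_t$. In that case the position in the shuffle of every block of $w_1$ and of $w_2$ is fixed up to a bounded offset. For families centred alike, such as $a^nb^n$ and $c^nd^n$, the two systems of couplings then stay mutually nested, whatever pumping or period alignment you perform; for instance $\{a^nb^n\}\shuf_{(st)^*}\{c^nd^n\}=\{(ac)^n(bd)^n\}$ is a CFL. The missing idea is the one in \Cref{lem:noantigrid}. Synchronized patterns have Parikh images on finitely many lines $|w|_s=\alpha|w|_t+\delta$. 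You thin the grid formed by the length progressions of the two Jan\v{c}ar--\v{S}\'ima families so that it avoids these lines, as in the proof of \Cref{prop:decide_antigrids}. Since $\pkh(\hard(A))$ has no antigrid, the thinned grid still meets it at arbitrarily large points, necessarily in a non-synchronized hard pattern. Only there is a crossing available. It comes either from four interleaved $s,t,s,t$-SCCs, after moving surplus loops into them so that each carries many blocks, or from two loops with ratios $\alpha<\beta$, where integers $i,j$ with $\alpha(j+1)<i<\beta(j-1)$ make the corresponding coupled pairs interleave.

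The non-CFL half of your mixed case, on which the converse of item~3 also rests, breaks for the same reason. A single hard point gives you no control over which pattern is used, and that pattern may be synchronized. Moreover, mere nonemptiness does not in general supply the nonregularity hypothesis of the Resilience Lemmas. Concretely, $T=(st)^*$ is DCFL$'$-mixed, yet your recipe with $L_1=\{a^nb^n\}$ and $L_2=\{c^nd^n\}$ yields the CFL above, even though their length sets $2\mathbb{N}$ do contain hard points such as $(2,2)$. The paper avoids generic witnesses and uses $L_1=\{a^nb^ka^nb^\ell\mid n,k,\ell\ge 0\}$ and $L_2=\{d^kc^nd^\ell c^n\mid n,k,\ell\ge 0\}$ (\Cref{lem:notcfl}). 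The free padding blocks give complete freedom on lengths. Crucially, they also shift the second coupling family against the first, so that even the rigid schedule $(st)^*$ produces $(ad)^N(bc)^N(ad)^N(bc)^N$, in which the coupled $a$-blocks and the coupled $c$-blocks cross. Resilience of these specific couplings is then established directly (\Cref{lem:anbna}) rather than through the Resilience Lemma.
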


All the conditions in the theorem are decidable.  Indeed, \(\hard(A)\) is again
regular and therefore \(\pkh(\hard(A))\) is (effectively) semilinear by Parikh's
theorem~\cite{Parikh66}.  Thus we can invoke~\cref{prop:decide_antigrids}.

The theorem is proved through a series of lemmas.

\begin{lemma}\label{lem:safe}
  If \(\pkh(\hard(A)) = \emptyset\), then \(\shuf_T\) is \cflp_safe.
\end{lemma}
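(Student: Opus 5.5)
If \(\pkh(\hard(A))=\emptyset\), then \(\hard(A)=\emptyset\), since the Parikh image of a nonempty language is nonempty. By \Cref{prop:scc}, \(T\) is then the finite union of the path-label languages of the easy SCC-patterns. Shuffle along a union of trajectories is the union of the shuffles, and CFLs are closed under finite union. It therefore suffices to fix one easy accepting SCC-pattern \(\cS = u_1P_1u_2\cdots u_nP_nu_{n+1}\), with label language \(T_\cS\), and show that \(L_1 \shuf_{T_\cS} L_2\) is a CFL for arbitrary CFLs \(L_1, L_2\).

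Because \(\cS\) is easy, it has no \(st\)-SCC. So every \(P_i\) is an \(s\)-SCC, a \(t\)-SCC, or trivial. A trivial SCC cannot occur, because the definition requires nonempty internal paths. Hence, modulo the finite connecting words \(u_i\), \(T_\cS\) is contained in a language of the form \(w_0 \sigma_1^* w_1 \sigma_2^* \cdots \sigma_n^* w_n\) with \(\sigma_i\in\{s,t\}\). More precisely, the internal paths of an \(s\)-SCC \(P_i\) from its entry state to its exit state form a regular sublanguage of \(s^*\), i.e.\ a finite union of sets \(s^{a}(s^{p})^*\). Thus \(T_\cS\) is a finite union of trajectories of the shape \(w_0 \sigma_1^{a_1}(\sigma_1^{p_1})^* w_1 \cdots w_n\), and I will treat each such piece separately.

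Since the pattern has no interleaved pair \(stst\) or \(tsts\) among its SCCs, the SCC letters read, say, \(s\ldots s\,t\ldots t\,s\ldots s\) with at most one alternation block in which both letters occur. More precisely, after dropping the finite words \(w_i\), the sequence \(\sigma_1\cdots\sigma_n\) collapses to one of \(s^+\), \(t^+\), \(s^+t^+\), \(t^+s^+\), \(s^+t^+s^+\), \(t^+s^+t^+\). In all these shapes the \(t\)-portion (say) is read in one contiguous stretch, up to bounded interruptions by the finite \(w_i\).

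The construction is then a PDA that simulates the PDA for \(L_1\) on the \(s\)-letters. When the \(t\)-stretch begins, it pushes a marker and simulates the PDA for \(L_2\) on top of the stack, above the marker. When the \(t\)-stretch ends, its acceptance status is recorded in the finite control, the stack is popped down to the marker, and the simulation of \(L_1\) resumes. This is sound because \(L_2\)'s acceptance does not depend on the final stack contents. In parallel, a finite automaton for the regular \(T_\cS\) is run on \(g\) of the input.

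The bounded interruptions inside the \(t\)-stretch are handled as follows. The finite \(w_i\) contribute only boundedly many \(s\)-letters there. The PDA nondeterministically guesses the effect of these boundedly many \(s\)-letters on \(L_1\)'s simulation, which is just a bounded segment of \(L_1\)'s run. It must verify the guesses in order, with only the current state and stack top available.

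The cleaner alternative is to use closure properties directly. One writes \(L_1 \shuf_{T_\cS} L_2\) as a finite union over finite splittings of the bounded letters, using quotients, concatenations and substitutions of CFLs by CFLs. For example, for \(s^*t^*s^*\) one has \(\bigcup\{x\,y\,z : xz\in L_1, y\in L_2\}\), which is obtained by substituting \(L_2\) for a fresh marker in \(\{x\#z : xz\in L_1\}\). That set is a CFL, being an insertion of \(\#\) into a CFL.

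This marker-substitution formulation is the one I would actually write. I would insert fresh markers \(\#_i\) into \(L_1\) at the positions where each \(t\)-block starts, intersect with a regular language that enforces the length and period constraints of the \(s\)-parts, and then substitute for the marker a CFL obtained from \(L_2\). In the \(t^+s^+t^+\) case the roles are symmetric. The one subtlety is a \(t\)-portion split across several SCCs separated by finite \(w_i\) containing \(s\). Since those \(s\)-letters are boundedly many, they can be handled by finitely many markers, each substituted by a regular-constrained piece of \(L_2\). I would handle this by marking \(L_2\) at bounded cut points as well and intersecting with a regular language.

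I expect the main obstacle to be the precise combinatorial claim that an easy pattern's SCC letter sequence has at most one nested alternation, together with the bookkeeping of the finite \(w_i\). The closure-property part is routine once the shape is pinned down.
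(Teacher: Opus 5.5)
Your final route is sound in outline and genuinely different from the paper's, but one step, as you state it, would fail.

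\textbf{The step that fails.} You cannot insert several markers into $L_1$ and substitute each one by its own regular-constrained piece of $L_2$. Substitution acts independently on each marker, so it produces $y_1\in K_1$, $y_2\in K_2$ with no guarantee that $y_1y_2\in L_2$. No intersection with a regular language can restore that link.

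\textbf{The fix: a single marker.} Take the shape $s^+t^+s^+$ and split $T_{\mathcal S}=\tau_1\tau_2\tau_3$. Every word of $\tau_1$ and of $\tau_3$ has a fixed number of $t$'s, and every word of $\tau_2$ has a fixed number of $s$'s. Take the finite union over the corresponding letter strings $\alpha,\gamma$ over $\Sigma_2$ and $\beta$ over $\Sigma_1$. Each summand is then $\sigma(M)$, where:
\begin{itemize}
\item $M$ is obtained from $L_1$ by \emph{replacing} one factor $\beta$ by $\#$ (a rational transduction), then shuffling in the fixed word $\alpha\gamma$, then intersecting with the regular language of words $\omega_1\#\omega_3$ with $g(\omega_1)\in\tau_1$, $g(\omega_3)\in\tau_3$ and $\Sigma_2$-projections $\alpha$ and $\gamma$;
\item $\sigma(\#)=\bigl(\{\beta\}\shuf\alpha^{-1}L_2\gamma^{-1}\bigr)\cap g^{-1}(\tau_2)$.
\end{itemize}
The marker must replace $\beta$; merely inserting it would count the middle $\Sigma_1$-letters twice. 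Once $\alpha,\beta,\gamma$ are fixed, the constraints on the two words fully decouple, and that is what makes the substitution legitimate. Your closing remark about ``marking $L_2$ at cut points'' points this way, but it is the single marker, not a regular intersection, that does the work. Conversely, the obstacle you anticipated is trivial: after collapsing runs, any alternation of length $4$ contains $stst$ or $tsts$.

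\textbf{Comparison with the paper.} The paper instead builds a PDA directly from PDAs $A_1,A_2$ for $L_1,L_2$:
\begin{enumerate}
\item it simulates $A_1$ on the first block, storing that block's boundedly many $\Sigma_2$-letters in the control;
\item it pushes a bottom marker, runs $A_2$ on the stored letters and then on the middle block, storing the middle block's bounded $\Sigma_1$-letters;
\item it pops back to the marker and resumes $A_1$, replaying the stored $\Sigma_1$-letters.
\end{enumerate}
The trailing $\Sigma_2$-letters of the last block are handled by compressing $A_2$'s stack, while popping, into its $k$-suffix-equivalence class, and checking those letters at the end.

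Your closure-based version replaces that most delicate step by the quotient $\alpha^{-1}L_2\gamma^{-1}$. It needs nothing beyond standard closure properties of CFLs and is easier to verify rigorously. The paper's construction is more hands-on and reuses its suffix-equivalence machinery.

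Your discarded PDA sketch would have needed exactly the paper's ingredients. The middle $\Sigma_1$-letters must be stored and replayed after returning to the marker rather than ``guessed'', since boundedly many letters can pop unboundedly deep via $\varepsilon$-moves. The trailing $\Sigma_2$-letters must also be dealt with before $A_2$'s stack is discarded, which that sketch omits.
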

\begin{proof}
  The condition is equivalent to \(\hard(A)\) being empty.  Thus, \(A\) can be
  decomposed into a disjoint union of SCC-patterns of the form:
  \[\underbrace{u_0U_1u_1\cdots U_ku_k}_{u \text{ block}} \cdot
  \underbrace{v_0V_1v_1\cdots V_\ell v_\ell}_{v \text{ block}} \cdot \underbrace{w_0W_1w_1\cdots W_m
    w_m}_{w \text{ block}}\]%
  where the \(U_i\) and \(W_i\) are all \(s\)-SCC and the \(V_i\) all \(t\)-SCC
  (or vice versa).  Given two CFL \(L_1, L_2\) over \(\Sigma_1, \Sigma_2\)
  recognized by PDAs \(A_1\) and \(A_2\), the shuffle \(L_1 \shuf_T L_2\) can be
  simulated using the standard technique used to simulate substring insertion
  (e.g., the trajectory \(s^*t^*s^*\)), with the addition of storing the
  (finite) letters of \(\Sigma_2\), \(\Sigma_1\) which occur in the \text{u},
  \text{v} blocks respectively, while using $k$-suffix equivalence to handle the
  letters of \(\Sigma_2\) which occur in the \text{w} block.

  For each pattern, let \(x, y, z\) be defined as the letters of \(\Sigma_2\) in
  the \(u\) block, \(\Sigma_1\) in the \(v\) block, and \(\Sigma_2\) in the
  \(w\) block, respectively. These are all finite, as their letters can only
  occur outside of the SCC within their block.  We construct a PDA in 5 parts as
  follows: Let \(A'_1\) be the PDA over states \(Q_{A_1} \times \Sigma^{|x|}_2\)
  which simulates \(A_1\), only storing up to \(|x|\) letters of \(\Sigma_2\) in
  a state; define \(A'_3\) over \(\{Q_{A_2} \times \Sigma^{|y|}_1\}\) similarly.
  Further, let \(A'_2\) be the PDA over \(Q_{A_2} \times \Sigma^{|x|}\) which
  simulates \(A_2\), which, rather than reading letters from the word, "reads"
  the letters in \(\Sigma_2\) from the states along \(\eps\) transitions.
  Define \(A_4\) over \(\{Q_{A_1} \times \Sigma^{|y|}_1\}\) similarly.

  Recall that two stack values are called \(k\)-suffix equivalent if they have
  identical behavior over \(k\) letters, that this forms an equivalence relation
  with a finite number of classes per \cref{lem:finclass}, and let \(S\) be
  labels for the set of \(|z|\)-suffix equivalent stacks on \(A_2\). Then, let
  \(A_6\) be the PDA over \(\{Q_{A_1} \times \Sigma^{|z|}_1\} \cup S\) which
  simulates \(A_1\) while storing accumulated letters from \(z\) in states,
  checking their status against a (preset) \(S\) at run completion. We similarly
  need a component to translate an arbitrary stack into the relevant
  suffix-equivalent classes. Let \(A_5\) be a new PDA over \(\{Q_2 \times
  \Sigma^{|z|}\} \cup S\) with only \(\eps\) transitions which, given an
  initial stack value, simulates each possible start-state times \(|z|\)
  combination to get the possible end states, the whole of which then maps to
  \(S\) for (finite) storage.

  To assemble these into the final PDA, take the concatenation \(A' = A'_1 A'_2
  \ldots A'_6\), where \(A'\) uses \(A'_1\) to read the \(u\) block, inserts a
  new bottom of stack symbol \$ in order to preserve the stack from \(L_1\), and
  then \(A'_2 A'_3 A'_4\) to read the \(v\) block. In order to retrieve the
  beginning of \(A_1\)'s stack, use \(A'_5\) to remove any values on the stack
  until arriving back at \$, compressing the result into \(S\); lastly, remove
  the \$ and run \(A'_6\).
\end{proof}

\noindent We import the well known Ogden's lemma, stated in
automata-theoretic terms:
\begin{lemma}\label{lem:ogden}
Suppose \(L\) is a CFL recognized by \(A\). Then, there is some \(p\) such that,
for any word \(w\in L\), if \(p\) any letters are 'marked' then,
there is a factorization \(w = uxvyz\) such that \(x, y\) contain between 1 and
\(p\) marked positions, and for any \(n\), the run \(r \langle u
\rangle r \langle x \rangle^n r \langle v \rangle r \langle y \rangle^n r
\langle z \rangle\) is both valid and accepting.
\end{lemma}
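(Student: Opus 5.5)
This is Ogden's lemma, a classical result, so I will give the standard proof phrased for PDA runs. The plan is to pass from the PDA \(A\) to the grammar-free, run-based view of ``summits'' and ``valleys'' of the stack. I will pump pairs of subruns that push and then pop the same stack segment from the same states.

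\textbf{Normalisation and constant.} First I would normalise \(A\) so that every transition either pushes exactly one symbol, pops exactly one symbol, or leaves the stack unchanged. I would also make \(A\) bounded: no run contains more than a fixed number of consecutive stack-neutral transitions without reading a letter, which is achieved by removing \(\eps\)-loops that do not change the stack. The normalisation must preserve the set of accepting runs up to the label, since the statement talks about runs of \(A\) itself. I would therefore instead argue directly on \(A\) and choose \(p\) as \(d^{K}\), where \(K = 2|Q|^2|\Gamma|+1\) and \(d\) bounds the ``branching'' described next.

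\textbf{The tree.} Given an accepting run \(r\) on \(w\), I would view it as a tree. Each matched push/pop pair \((i,j)\) is a node whose children are the maximal matched pairs strictly inside \((i,j)\). Unmatched pushes, whose symbol remains at the end, are handled by treating the end of the run as closing them. A node is \emph{branching} if at least two of its children, or its own two endpoints, cover marked positions. The positions read between consecutive children are finite and without stack change, so they contain at most a bounded number of marked letters unless a child does.

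\textbf{The branching path.} The key step, and the main obstacle, is the standard Ogden counting. If at least \(p\) positions are marked, then a root-to-leaf path that always descends into the child carrying the most marks contains at least \(K\) branching nodes. To make this work I would have to bound the marks attached directly to a node, not inside its children, by a constant. That constant is the difficulty with unbounded \(\eps\)-free stack-neutral stretches, and I would handle it by treating long neutral stretches as repeated states that can themselves be pumped. Among \(K\) branching nodes, two of them, \((i,j)\) outside \((i',j')\), share the same state at \(i-1\), at \(i'-1\), at \(j\) and at \(j'\), and the same pushed symbol. I would set \(x = w\) read on \([i,i')\), \(y = w\) read on \((j',j]\), \(v\) the part in between, and \(u,z\) the remainders. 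Because the outer node is branching, both \(x\) and \(y\) contain a marked position. Choosing the two nodes among the lowest \(K\) branching ones on the path bounds their marks by \(p\).

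\textbf{Pumping.} I would then verify that the run \(r\langle u\rangle r\langle x\rangle^n r\langle v\rangle r\langle y\rangle^n r\langle z\rangle\) is valid. The subrun on \(x\) goes from a state to the same state and only pushes a stack suffix that is later popped exactly by the subrun on \(y\), again between matching states. Repetition therefore pushes \(n\) copies and pops them in reverse, and the stack below is never inspected. The final state is unchanged, so the run is accepting.
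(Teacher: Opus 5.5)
\textbf{Gap: your tree has unbounded out-degree, so the Ogden counting fails.} Your final pumping step is fine. The gap is in the counting step that is supposed to produce the two nested nodes. In a tree whose children are the maximal matched pairs strictly inside a node, a node can have arbitrarily many children. It can also read arbitrarily many letters directly at its own level: stack-neutral stretches are finite but not bounded, and your sentence ``finite \ldots\ so they contain at most a bounded number of marked letters'' conflates the two.

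Ogden's counting needs every node to have at most \(d\) mark-carrying units. That is what makes a path descending into the heaviest child meet at least \(K\) branching nodes when \(d^K\) positions are marked. It is also what bounds the marks under the lowest \(K\) branching nodes by \(p\).

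A concrete failure: let \(A\) push a symbol on each \(a\) and pop it on the following \(b\), and mark every letter of \((ab)^m\). The root has \(m\) children, each a leaf pair whose two endpoints are marked. So every root-to-leaf path contains at most two branching nodes, only one of which carries a label. No label repeats however large \(p\) is, and your argument outputs nothing, although the same-level pump \(x = ab\), \(y = \varepsilon\) exists. Your patch (pumping long neutral stretches via a repeated state) does not touch this example, since no letters are read between consecutive children. Even where the patch applies, you would still need to bound by \(p\) the marks inside such a horizontal loop when the children it spans are heavy.

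\textbf{How to repair it.} Binarize each level as a right comb, i.e.\ use the derivation trees of the triple construction \([q,X,q']\) from PDA to grammar, which correspond exactly to runs.
\begin{itemize}
\item A node is a time \(t\), labelled by the current state, the symbol at the current height, and the state at the first moment the run drops below that height. Add one bit for ``never drops, the run ends''; this plays the role of your factor \(2\).
\item Its children are the letter read, the subnode opened by a push, and the continuation at the same level.
\item The degree is now at most \(3\), so the usual count with \(p = 3^{K}\), where \(K\) is one more than the number of labels, goes through.
\item Two equal labels on the path yield either your vertical pump or a same-level pump with \(y=\varepsilon\). Both are valid pumpings of the run.
\end{itemize}
This is exactly what the paper takes for granted by importing the classical Ogden's lemma instead of reproving it.

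\textbf{A further error.} Your claim that both \(x\) and \(y\) receive a marked position is wrong. The outer branching node only guarantees a second marked unit on one side of the path. The statement must be read as \(xy\) jointly containing between \(1\) and \(p\) marks. The version with each of \(x\) and \(y\) marked is false: take \(\{a^nb^n \mid n \geq 0\}\) with only the \(a\)'s marked.
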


\begin{lemma}\label{lem:anbna} Consider \(L = \{a^n b^{*}a^n b^* \mid n \in
\mathbb{N}\}\) and some \(L'\).  For any PDA \(A\) recognizing \(L\shuf L'\), there is a constant \(k\)
such that, for any \(w=a^{n}b^{k_{1}}a^{n}b^{k_{2}} \in L\) with \(n,k_{1},k_{2}\geq 1\), the first \(a^{n}\) has a resilient \(A\)-coupling with the second \(a^{n}\).
\end{lemma}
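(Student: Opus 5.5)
The plan is a proof by contradiction that turns a missing coupling into an independent pumping of the first block of \(a\)'s. I read the constant \(k\) in the statement as a size threshold: I will take \(k\) to be an Ogden constant (\Cref{lem:ogden}) for a CFL built from \(A\) below, and prove the claim for \(n \geq k\), which is the regime the lemma is used in.

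\textbf{Setup and the cut point.} Fix \(w=a^{n}b^{k_1}a^{n}b^{k_2}\), a trajectory \(c\), and \(w'\in w\shuf_c L'\). Write \(w'[a^n_{(1)}]\) and \(w'[a^n_{(2)}]\) for the factors spanning the two blocks. Assume some accepting run \(r\) of \(A\) on \(w'\) has no position of \(r\ang{w'[a^n_{(1)}]}\) coupled with a position of \(r\ang{w'[a^n_{(2)}]}\). Let \(j\) be the leftmost position, from the end of \(r\ang{w'[a^n_{(1)}]}\) to the end of \(r\ang{w'[a^n_{(2)}]}\), at which the stack height is minimal on that stretch.
\begin{itemize}
\item Every symbol pushed during the first block and still present after \(j\) lies below height \(|\stk(r,j)|\), so it is never touched during \(r\ang{w'[a^n_{(2)}]}\).
\item Every symbol pushed during the first block above that height is popped before the second block, since a pop inside the second block would be a forbidden coupling.
\end{itemize}
Hence \(r\) splits as \(r=r_{\le j}\,r_{>j}\), where the suffix \(r_{>j}\), covering the rest of the middle part, the second block and the tail, depends only on \(\stt(r,j)\) and on the top of the stack. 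As in \Cref{lem:surgery}, only the \(|w'|-j\)-suffix-equivalence class of the stack matters, and by \Cref{lem:finclass} there are finitely many classes. I would also record the min-endcap data at \(j\), exactly as in the proof of the Coupling Lemma.

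\textbf{Pumping the prefix alone.} Let \(P\) be the set of labels of runs from the initial configuration that end in state \(\stt(r,j)\) with min-endcap and stack-effect class equal to those of \(r_{\le j}\). A PDA can track this finite information, so \(P\) is a CFL. Apply Ogden's lemma to \(P\), marking the \(n\ge k\) letters \(a\) of the first block inside the label of \(r_{\le j}\).
\begin{itemize}
\item This gives a factorization with pumpable pieces containing between \(1\) and \(k\) marked \(a\)'s.
\item Pumping these pieces yields prefixes \(p_m\in P\). By \Cref{lem:surgery}, each \(p_m\) followed by the untouched suffix of \(w'\) is still accepted by \(A\).
\item The pumped pieces lie entirely before the second block, because \(j\) does. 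Pumping therefore changes the number of \(a\)'s before the \(b^{k_1}\) region, or breaks the \(a^*b^*a^*b^*\) shape, while the second block stays at exactly \(n\) letters \(a\).
\end{itemize}
So for some \(m\) the \(\Sigma_1\)-projection \(\pi_1\) of the accepted word is not in \(L\). But every word of \(L\shuf L'\) projects into \(L\), because the alphabets are disjoint. This is the contradiction, and it shows the coupling holds in every accepting run and for every \(c\) and \(w'\), so it is resilient.

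\textbf{Main obstacle.} The delicate step is making the cut at \(j\) genuinely sound. I must check that the surgery of \Cref{lem:surgery} applies with the suffix length \(|w'|-j\), which is unbounded in \(n\); the constant must not depend on this length. To handle this, I would instead record only the state and the stack-effect class relative to the suffix \(r_{>j}\), which never dips below height \(|\stk(r,j)|\) except through symbols pushed before the first block. Those symbols are the same in the pumped and unpumped runs, so only the state at \(j\) and the pushed-part behaviour matter, and no suffix-equivalence bound is needed.

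I also have to ensure that the Ogden pieces do not straddle into \(L'\)-letters in a way that leaves the first block's \(a\)-count unchanged. Marking only first-block \(a\)'s guarantees that each piece contains at least one such \(a\), so pumping strictly changes the count of \(a\)'s in the first block.
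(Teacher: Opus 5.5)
There is a genuine gap, and it sits where you suspected: the cut at \(j\) does not make the suffix independent of what your pumping changes. Your contradiction hypothesis only forbids couplings between the first and the second block of \(a\)'s. It says nothing about the tail (the \(b^{k_2}\) part and the interleaved \(L'\)-letters), which has unbounded length and may pop symbols pushed during the first block. You note yourself that first-block symbols can survive below height \(|\mathsf{stk}(r,j)|\). Your repair in the last paragraph then asserts that the suffix only dips into symbols pushed \emph{before} the first block, and this is not justified. Without it, re-attaching the untouched suffix to a pumped prefix needs \((|w'|-j)\)-suffix-equivalence. Then \(P\) and its Ogden constant depend on \(n\), \(k_2\) and \(r\), so no uniform \(k\) exists. \Cref{lem:surgery} does not help either: its suffix is bounded, and it needs identical min-endcaps, including their word components.

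There is a second, independent failure. \(j\) is chosen in the stretch up to the \emph{end} of the second block, so it can lie inside or at the end of that block: the second block may pop symbols pushed in the middle part, which your hypothesis allows. So ``the pumped pieces lie entirely before the second block'' is false in general. Worse, you apply Ogden's lemma to the \emph{language} \(P\), so the pumping pieces come from some derivation of the prefix, not from \(r\). One piece may sit in the first block and the other in the second, and pumping them together stays inside \(L\). The no-coupling hypothesis on \(r\) is never used to constrain the pumping.

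The missing idea is to pump the accepting run itself, in its entirety, rather than a language of prefixes glued to a fixed suffix. This is what the paper does. It applies Ogden's lemma in its run form (\Cref{lem:ogden}) to an arbitrary accepting run on the whole word, marking the first-block \(a\)'s. This gives \(r=r_1r_2r_3r_4r_5\) with \(r_1r_2^mr_3r_4^mr_5\) accepting for every \(m\). Since \(\pi_1\) of every accepted word must lie in \(L\), \(r_2\) must read first-block \(a\)'s and \(r_4\) the same number of second-block \(a\)'s. Since \(r_2\) and \(r_4\) cannot be pumped independently, \(r_2\) pushes a symbol that \(r_4\) pops, which is exactly the coupling. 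The tail is carried along untouched, so no cut or suffix-equivalence is needed, and the constant is just Ogden's constant for \(A\). Resilience follows because the same argument runs verbatim on any shuffled word \(w'\).
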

\begin{proof}
  We prove the existence of \(A\)-couplings in any PDA \(A\) for \(L\), the resilience
  falling from the same techniques already used.
  We employ elementary methods given the simplicity of these languages.  Mark
  the first \(p\) letters of \(w\), i.e, only \(a\), and apply
  \Cref{lem:ogden}. Then, any run \(r\) of \(w\) must have a factorization \(r =
  r_1r_2r_3r_4r_5\) with \(r_2, r_4\) containing at least one letter of the first \(a^{n}\),
  such that \(r_1r_2^nr_3r_4^nr_5\) accepts for all \(n\).

  Surely, we must have it that \(r_2\) reads at least one \(a\) from the first block, and \(r_4\) an
  \(a\) from the second block: At least one \(a\) is read, and if no \(a\) from the second block were read, we could loop
  \(r_2, r_4\) to break the equinumerosity. Similarly, we must also
  have it that \(r_2\) pushes a value to the stack which \(r_4\) pops, otherwise
  they could be repeated independently.
  This gives the \(r\)-coupling, and thus the \(A\)-coupling.
\end{proof}

\begin{lemma}\label{lem:notcfl}
  If \(\pkh(\hard(A)) \neq \emptyset\), then \(a^nb^{*}a^nb^* \shuf_T d^*c^nd^{*}c^{n}\) is
  not a CFL.
\end{lemma}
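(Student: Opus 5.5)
Suppose, towards a contradiction, that \(L = L_1 \shuf_T L_2\) is recognized by a PDA \(B\), where \(L_1 = a^nb^*a^nb^*\) and \(L_2 = d^*c^nd^*c^n\). Since \(\hard(A)\neq\emptyset\), fix a hard accepting SCC-pattern \(\cS\) of \(A\). There are two cases, according to why \(\cS\) is hard. In each case the plan is to build a single word \(w \in L\) in which the two \(a\)-blocks of its \(L_1\)-component and the two \(c\)-blocks of its \(L_2\)-component sit in crossing positions. \Cref{lem:anbna}, applied to \(L_1\) and symmetrically to \(L_2\) with the roles of the letters swapped, gives resilient \(B\)-couplings between the two \(a\)-blocks and between the two \(c\)-blocks. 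Two such couplings in crossing order violate stack discipline, which is the desired contradiction.

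\emph{Case of four interleaved SCCs \(P_1,P_2,P_3,P_4\) with alphabets \(s,t,s,t\) (the case \(tsts\) is symmetric).} Pick cycles \(s^{p_1}, t^{p_2}, s^{p_3}, t^{p_4}\) in the respective SCCs. A path of \(\cS\) has the form \(u_1 s^{p_1m_1} u_2 t^{p_2m_2} u_3 s^{p_3m_3}u_4 t^{p_4m_4}u_5\), and the connecting paths \(u_i\) have bounded length. Take the \(L_1\)-word \(a^Nb^{k_1}a^Nb^{k_2}\) with \(N\) a large multiple of \(p_1p_3\). Place the first \(a^N\) inside the \(P_1\) loop and the second inside the \(P_3\) loop. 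The finitely many \(s\)-letters in the \(u_i\) and any leftover positions are absorbed by the \(b^*\)-parts: choose \(k_1\) to cover the \(s\)'s of \(u_2u_3\) plus padding, and \(k_2\) to cover those of \(u_4u_5\). The \(s\)'s of \(u_1\) are the one awkward spot, since only \(a\)'s can sit at the start of an \(L_1\)-word. To handle this, let the first \(a\)-block begin in \(u_1\) and absorb the discrepancy through the choice of \(m_1\), adjusting by adding \(\mathrm{lcm}\) multiples so that the block counts stay equal. Do the same for \(c\) and \(d\) in the \(t\)-SCCs \(P_2,P_4\): the \(d^*\)-parts absorb the \(t\)'s outside the loops, and \(c^N\) fills the loops of \(P_2\) and \(P_4\). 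The resulting word has its first \(a\)-block before its first \(c\)-block, which is before its second \(a\)-block, which is before its second \(c\)-block, so the couplings cross.

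\emph{Case of an \(st\)-SCC \(P\).} Take a cycle in \(P\) through a fixed state whose label \(\sigma\) contains both \(s\) and \(t\). Since \(P\) is strongly connected, cycles labelled \(s^\alpha\)-heavy and \(t^\beta\)-heavy can be concatenated freely. Choose a cycle word \(\rho = s^{\alpha}\,\tau\,t^{\beta}\,\tau'\), or more simply use the fact that both letters occur, to realize a trajectory whose restriction to one period is \(s\cdots t\cdots s\cdots t\cdots\). Iterating this cycle \(N\) times yields a trajectory in which \(s\)- and \(t\)-positions alternate in blocks throughout a long middle segment. In that segment, lay out \(a^N\) in the \(s\)-positions of the first half and \(c^N\) in the \(t\)-positions of the second quarter, and so on. Concretely, use the \(s\)-positions of the four consecutive quarters of the iteration for \(a^N\), \(b^*\), \(a^N\), \(b^*\), and the \(t\)-positions for \(d^*\), \(c^N\), \(d^*\), \(c^N\), shifted by one quarter. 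This gives the order: first \(a\)-block, then first \(c\)-block beginning before the first \(a\)-block ends, then second \(a\)-block, then second \(c\)-block. This order still crosses: the first \(a\)-block is coupled to the second \(a\)-block, and the first \(c\)-block, which lies strictly between them, is coupled to the second \(c\)-block, which lies after them. The prefix and suffix outside the loop are again absorbed into \(b^*\) and \(d^*\), with boundary adjustments as in the first case.

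The main obstacle is the bookkeeping that keeps the two \(a\)-blocks, and likewise the two \(c\)-blocks, of exactly equal length. The quarters of the iterated cycle contain counts of \(s\) and \(t\) that are fixed multiples of the per-cycle counts, so equality of the blocks can be enforced by choosing the block length as a common multiple. The real work is ensuring that the letters at the very start of the word are \(a\)'s and \(d\)'s, as the two languages require. I would handle this by first choosing the \(L_1\)- and \(L_2\)-words to fit a chosen path, using the \(b^*\) and \(d^*\) slack, and only then checking that the crossing order survives. Resilience in \Cref{lem:anbna} then applies, since couplings are required for every trajectory word, including the one we chose.
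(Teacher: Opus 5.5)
Your strategy is the paper's: apply \Cref{lem:anbna} to both languages, then for a hard pattern build one shuffled word in which the two \(a\)-blocks and the two \(c\)-blocks cross. Your quarter layout in the \(st\)-SCC case (\(a/d\), \(b/c\), \(a/d\), \(b/c\)) is exactly the paper's word, and you are more careful than the paper about the connecting paths.

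The step that fails is ``the case \(tsts\) is symmetric''. The pair \(L_1=a^nb^*a^nb^*\), \(L_2=d^*c^nd^*c^n\) is not invariant under exchanging \(s\) and \(t\). The rigid block of \(L_1\) is at the \emph{start} of its words, while that of \(L_2\) is at the \emph{end}. The only symmetry preserving the pair (reversal composed with the swap) maps \(tsts\) patterns to \(tsts\) patterns. This is the same boundary issue you spotted for \(u_1\).

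Take \(T=s\,t^*s^*t^*s^*\). Its minimal DFA has a single hard pattern, of type \(tsts\), with \(u_1=s\). The first letter of the shuffled word is forced to be the first \(a\) of the first \(a\)-block. That block must be long for \Cref{lem:anbna} to apply, so it continues into \(P_2\), and \(w[a_1]\) contains all of \(P_1\). No placement then gives four disjoint spans in crossing order. Nothing can precede \(w[a_1]\). If \(c_1\) follows \(w[a_1]\), then \(a_2\) must lie in \(P_4\), and no \(t\) is left for \(c_2\). Resilient coupling only provides coupled positions somewhere inside the spans \(w[\cdot]\), so overlapping spans give no contradiction: the \(a\)-coupling may start at the very first letter and simply enclose the \(c\)-coupling. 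The same obstruction appears whenever \(u_1\) contains an \(s\), or the last connecting path contains a \(t\) (which forces the second \(c\)-block to end after \(P_4\)).

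The lemma is still true in such cases. For example, \(L\cap a\,d\,c^+d\,a^+b\,c^+a^+b=\{a\,d\,c^md\,a^{n-1}b\,c^ma^nb \mid m\geq 1, n\geq 2\}\) is not context-free by the pumping lemma. But you need an argument that sees the bulk of each block inside its loop. One option is to intersect with a regular language that fixes the boundedly many boundary letters and pump directly. Another is to prove a variant of \Cref{lem:anbna} whose coupled positions are confined to the loop portions; these do appear in the crossing order \(P_1<P_2<P_3<P_4\). The paper's own sketch treats only the \(st\)-SCC case explicitly and dismisses the interleaved case as trivial, so it does not cover this either.

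Two smaller repairs are needed.
\begin{itemize}
\item In the \(stst\) case, the \(t\)'s of the last connecting path must be \(c\)'s, not \(d\)'s. Handle them as you handled \(u_1\); this is harmless because \(c_2\) comes last.
\item In the \(st\)-SCC case, drop the phrase ``first \(c\)-block beginning before the first \(a\)-block ends'', since overlapping spans would break the argument for the same reason. The concrete quarter layout avoids this. Writing \(Q_i\) for the \(i\)-th quarter, \(w[a_1]\) lies in the prefix path followed by \(Q_1\), \(w[c_1]\) in \(Q_2\), \(w[a_2]\) in \(Q_3\), and \(w[c_2]\) in \(Q_4\) followed by the suffix path.
\end{itemize}
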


\begin{proof}
  Let \(L_1 = a^nb^{*}a^nb^*\), and let \(L_2 = d^*c^nd^{*}c^{n}\).
By \Cref{lem:anbna}, there
   is some \(k_1\) such that any big enough word \(w \in L_1\) has the first \(a\)'s
   coupled with the last \(a\)'s.
   Similarly, there is a constant \(k_{2}\) for \(L_{2}\) and the \(c\)'s.
   Take \(k=\max(k_{1},k_{2})\).

  Fix an SCC-pattern \(P\), and suppose first that \(P\) contains an \(st\)-SCC \(S\)
  with \(n_s\), \(n_t\) letters of \(s, t\) for some path, both nonzero, with a
  ratio of \(r = \frac{n_s}{n_t}\).
  We assume \(r\geq1\), the other case being symmetric.
  Consider the shuffle of \(w_1 =a^{rk}b^{rk}a^{rk}b^{rk}\) and \(w_2 = d^{k}c^{k}d^{k}c^{k} \), for \(k\) a multiple of \(n_{t}\).
  By construction, \(w_{1}\shuf_{S} w_{2} = (a^{rk}\shuf_{S}d^{k})(b^{rk}\shuf_{S}c^{k})(a^{rk}\shuf_{S}d^{k})(b^{rk}\shuf_{S}c^{k})\).
  In the shuffled word, the first part is coupled with the third part, and the second part with the last part.
  The couplings cross. We conclude,
  noting that the \(s, t, s, t\) case is comparatively trivial.
\end{proof}

\begin{lemma}\label{lem:noantigrid}
  If \(\pkh(\hard(A))\) has no antigrid, then \(\shuf_T\) is \dcflp_hostile.
\end{lemma}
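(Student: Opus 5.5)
Fix $L_1, L_2 \in \dcflp$ and suppose, towards a contradiction, that $L = L_1 \shuf_T L_2$ is recognized by a PDA $A'$. The plan is to combine the Deterministic Resilience Lemma with the absence of an antigrid, producing a shuffled word whose two resilient couplings cross, as in \Cref{thm:cflp-hostile} and \Cref{lem:notcfl}. Concretely, \Cref{lem:js} gives a sublanguage $\{ux^nvy^nz\}\subseteq L_1$ and $\{u'x'^nv'y'^nz'\}\subseteq L_2$. The Deterministic Resilience Lemma then yields families in which the $i$-th copy of $x$ is resiliently coupled with the $(n-i+1)$-th copy of $y$, and likewise for $L_2$.

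\textbf{Step 1 (nonregular projections).} Before invoking the Deterministic Resilience Lemma we must check that $\pi_1(L)$ and $\pi_2(L)$ are nonregular. Here we use the no-antigrid hypothesis. By \Cref{lem:js}, the length set $\{|ux^nvy^nz|\}$, and more generally the length set of any nonregular-witnessing slice of $L_1$, contains a periodic set $U$; similarly $L_2$ contains a periodic set $V$. Since $\pkh(\hard(A))$ has no antigrid, $U\times V \not\subseteq \bbN^2\setminus\pkh(\hard(A))$. Refining to periodic subsets, this gives infinitely many pairs $(|w_1|,|w_2|)$ realized by hard trajectories. To obtain nonregularity of $\pi_1(L)$, we argue that $\pi_1(L)\cap\{ux^nvy^nz'\ldots\}$ restricted to lengths in some periodic set still recovers the $S$-relation of \Cref{lem:js}. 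The reason is that restricting $\alpha\lambda^n\beta\mu^m\gamma$ to lengths in a periodic set preserves the nonregular pattern along $n=m$ with $n$ in an arithmetic progression.

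\textbf{Step 2 (choosing words and a hard trajectory).} Pick $n$ large and copies $w_1=ux^nvy^nz\in L_1$ and $w_2=u'x'^{n}v'y'^{n}z'\in L_2$, with lengths $(|w_1|,|w_2|)\in\pkh(\hard(A))$. Being free to choose both sizes from periodic sets is exactly what the grid property supplies. Take an accepting path $c$ of a hard SCC-pattern with this Parikh image. There are two cases.
\begin{itemize}
\item \emph{An $st$-SCC.} We pump inside it so that a long portion of $c$ interleaves $s$ and $t$ with a fixed ratio. As in \Cref{lem:notcfl}, we align the blocks $x^n$ and $x'^n$ to fall into this interleaved region, pad with $u,u'$, and scale $n$ along the SCC's periods so that the positions match. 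Then $w[x^{(i)}]$ lies before $w[x'^{(j)}]$, and the matching $y$ and $y'$ copies lie inside a later interleaved region.
\item \emph{An $stst$ pattern of four one-letter SCCs.} We place $x$-copies in the first $s$-SCC, $x'$-copies in the first $t$-SCC, and the $y$- and $y'$-copies in the later ones.
\end{itemize}
In both cases we get coupled pairs $(w[x^{(i)}], w[y^{(n-i+1)}])$ and $(w[x'^{(j)}], w[y'^{(n-j+1)}])$ ordered as $X\,X'\,Y\,Y'$. These couplings cross, which is the contradiction.

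\textbf{Main obstacle.} The hard step is the alignment: making the Parikh constraint of the chosen hard path compatible with words from the $\{ux^nvy^nz\}$ families while keeping the nested coupling structure. The nested structure matters because the $i$-th copy pairs with the $(n-i+1)$-th; this is why I match one innermost-outermost pair rather than whole blocks. I would first try to prove a lemma that, for a hard SCC-pattern and arithmetic progressions in $n$, the set of achievable $(n_1,n_2)$ contains a grid intersected with the pattern's own semilinear set. I would then use the no-antigrid hypothesis once more, now on the sets $\{|ux^nvy^nz|: n\in\bbN\}$ and the analogous set for $L_2$, to pick compatible $n_1,n_2$. Note that the two families may use different exponents $n_1\ne n_2$; the crossing argument survives this.
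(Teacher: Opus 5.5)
Your Step~1 matches the paper's first step, and your overall plan (Deterministic Resilience Lemma, then a hard pattern that forces two resilient couplings to cross) is the paper's plan. The gap is in your \(st\)-SCC case. Inside a single interleaved region with a fixed ratio \(\alpha=|l|_s/|l|_t\), the families \(\{ux^nvy^nz\}\) and \(\{u'x'^mv'y'^mz'\}\) produce \emph{nested} couplings, not crossing ones. Roughly (equal block lengths, bounded offsets), \(w[x^{(i)}]\) precedes \(w[x'^{(j)}]\) iff \(i<\alpha j\). Counting from the end in a later region of the same ratio, the partner \(w[y^{(n-i+1)}]\) then lies \emph{after} \(w[y'^{(m-j+1)}]\), so the order is \(X\,X'\,Y'\,Y\). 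Your appeal to \Cref{lem:notcfl} does not transfer. There, \(a^nb^*a^nb^*\) and \(d^*c^nd^*c^n\) were built so that each coupled pair is separated by an intervening block. Here the \(i\)-th \(x\) is coupled with the \(i\)-th \(y\) from the end. Concretely, your case argument uses the hypothesis only to find a point of \(\Psi(\mathsf{hard}(A))\). It would therefore equally ``prove'' that \(\{a^nb^n\}\shuffle_{(st)^*}\{c^nd^n\}\) is not a CFL at the points \((2n,2n)\), yet this language is \(\{(ac)^n(bd)^n\}\). The alignment lemma you propose at the end cannot repair this, because the obstruction is the relative order of the blocks, not their positions.

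The missing idea is a second use of the no-antigrid hypothesis, to rule out \emph{synchronized} patterns: those in which every loop has the same \(s/t\) ratio, so that their Parikh images lie on finitely many lines. Inside the grid \((|uvz|+|xy|\mathbb{N})\times(|u'v'z'|+|x'y'|\mathbb{N})\), pass to a subgrid avoiding these lines, as in the proof of \Cref{prop:decide_antigrids}. Since \(\Psi(\mathsf{hard}(A))\) has no antigrid, this subgrid still meets it, and any hard pattern realizing such a point is non-synchronized. You must select the path \(c\) this way rather than taking an arbitrary hard path. In the \(st\)-SCC case you then have two loops with ratios \(\alpha<\beta\); read the \(x\)-blocks in the first and the \(y\)-blocks in the second. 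The crossing condition becomes \(\alpha(j+1)<i<\beta(j-1)\), which is solvable for large \(j\) exactly because \(\alpha\neq\beta\). In the \(stst\) case you also cannot simply ``place'' blocks into SCCs, since the path is constrained by its Parikh image. You must shift loop iterations between the two \(s\)-SCCs, and between the two \(t\)-SCCs, preserving the Parikh image. The aim is that \(w_1\) and \(w_2\) are each split inside their \(x\)-blocks, with the partner \(y\)-blocks landing in the later SCCs.
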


\begin{proof}
  We first show that this condition implies that \(\pi_1(L_1 \shuf_T L_2)\)
  is nonregular for any nonregular \(L_1, L_2\).  Suppose \(L_1\) is in
  \(\dcflp\), and \(L_2\) an infinite language in \(\dcfl\).  By \Cref{lem:js},
  \(L_1\) has a sublanguage of the form \(\{u_1x_1^nv_1y_1^nz_1 | n \geq 0\}\).
  The Parikh image of this sublanguage \(L_1\) is thus a periodic set \(U\) with
  base \(|u_1v_1z_1|\) and period \(|x_1y_1|\). By Parikh's
  theorem~\cite{Parikh66}, \(\pkh(L_2)\) similarly contains a periodic set
  \(V\). As \(\pkh(\hard(A))\) has no antigrid, these two must intersect
  infinitely often under \(T\), and \(\pi_1(L)\) must be nonregular for \(L =
  L_1 \shuf_T L_2\). The same symmetrically holds for \(\pi_2(L)\), and hence,
  shuffle resilience is available for both.

  Now, suppose \(L_1, L_2\) are both \(\dcflp\). By \Cref{lem:detcoupres}, there
  is a sublanguage of \(L_1\) of the same form such that the \(i\)-th occurrence
  of \(x_1\) has a shuffle resilient coupling with the \(n-i+1\)-th of \(y_1\),
  with an analogous \(\{u_2x_2^nv_2y_2^nz_2 | n \geq 0\}\) for \(L_2\).  By
  taking bigger blocks, we can assume that \(|x_{1}|=|x_{2}|=|y_{1}|=|y_{2}|\).

  Let \(k\) be a constant such that every run of length greater than \(k\) must
  repeat the same simple loop at least \(|A|!\) times.  Such a constant clearly
  exists by the pigeonhole principle.

  We say that an SCC-pattern is \emph{synchronized} if every loop \(l\) on the
  path has the same ratio \(\alpha = \frac{|l|_s}{|l|_t}\).  Let \(P=
  r_1P_1r_2\ldots r_nP_nr_{n+1} \) be a synchronized pattern.  Let \(\delta\) be
  the difference between the number of \(s\) and \(t\) in \(r_1\ldots r_{n+1}\).
  It is clear that every word \(w\) accepted by \(P\) is such that \(|w|_{s} =
  \alpha |w_{t}|+\delta\).  That is, its Parikh image lies on a line.  Consider
  the subgrid of \((|u_1x_1v_1y_1z_1| + |x_1y_1|\mathbb{N}) \times
  (|u_2x_2v_2y_2z_2| + |x_2y_2| \mathbb{N})\) avoiding all lines for the
  synchronized patterns of \(A\).  This can be constructed following the proof
  of~\cref{prop:decide_antigrids}.  By rebasing the grid, we can assume that
  \(n\) and \(m\) are as large as we want.  Because \(\pkh(\hard(A))\) has no
  antigrid, there are \(n,m\geq 9k\) such that \(w_{1}=u_1x_1^nv_1y_1^nz_1\) can
  be shuffled with  \(w_{2}=u_2x_2^mv_2y_2^mz_2\).  Let \(P\) be the pattern for
  which the shuffle it possible, with a run \(r\) in \(A\).  By construction,
  \(P\) cannot be a synchronized pattern. We treat two cases.

  First assume that \(P\) contains an \(s\)-SCC, \(t\)-SCC, \(s\)-SCC, and
  \(t\)-SCC: Label these as the \(B_1 \ldots B_4\) blocks.  Let \(\lambda_1\)
  (resp. \(\lambda_2\)) be the length of some simple loop in \(B_{1}\) (resp.
  \(B_{4}\)).  We decompose \(r =
  r_{1}\rho_{1}r_{2}\rho_{2}r_{3}\rho_{3}r_{4}\rho_{4}r_{5}\) where each
  \(\rho_{i}\) is the part of the run in \(B_{i}\).  If \(|r_{i}| > k\) for some
  \(i\), then some loop is repeated in \(r_{i}\) at least \(\lambda_1\lambda_2\)
  times, accounting for some \(l\lambda_1\lambda_2\) number of \(s\) or \(t\).
  We can remove these loops and add the corresponding number of loops in
  \(B_{1}\) or \(B_{4}\) to have another run with the same Parikh image.  By
  balancing loops between \(B_{1}\) and \(B_{3}\), and \(B_{2}\) and \(B_{4}\),
  we can assume \(|\rho_{i}| \geq k\).  With this run, the \((k+1)\)-th
  occurrence of \(x_1\) is shuffled by \(B_{1}\), and the \((n-k+1)\)-th
  occurrence of \(y_{1}\) by \(B_{3}\).  Similarly, the \((k+1)\)-th occurrence
  of \(x_2\) is shuffled by \(B_{2}\), and the \((m-k+1)\)-th occurrence of
  \(y_{2}\) by \(B_{4}\).  This gives a crossing.

  As for the case where \(P\) contains an \(st\)-SCC: it is not synchronized
  hence there are two loops \(l_{1}\) and \(l_{2}\) with respective, nonequal
  ratios \(\alpha = |l_1|_s/|l_1|_t\) and \(\beta = |l_2|_s/|l_2|_t\); assume
  \(\alpha < \beta\).  Denote by \(B_{1}\) and \(B_{2}\) the SCCs of these two
  loops, and assume that \(B_{1}\) is before \(B_{2}\) on the path. Then, we
  have a decomposition \(r=r_1\rho_1r_2 \rho_2r_3\), where \(\rho_1, \rho_2\)
  read \(l_1^{n_1}, l_2^{n_2}\) for some large \(n_1, n_2\). Note that iterating
  these loops by any multiple of \(|x_1y_1|\) results in a valid word in the
  shuffle.

  Let \(L\) be the shared length of the blocks. In \(\rho_1\), the \(i\)-th
  block of \(w_1\) and \(j\)-th block of \(w_2\) are at positions \(|r_{1}| + i
  L (1 + \alpha^{-1})\) and \(|r_{1}| + j L (1 + \alpha)\) respectively.
  Similarly, in \(\rho_2\), their positions relative to the end of the run are
  \(-|r_{3}| - i L (1 + \beta^{-1})\) and \(-|r_{3}| - j L (1 + \beta)\).  A
  crossing occurs if we can find indices \(i, j\) such that the \(j\)-th block
  of \(x_2\) ends before the \(i\)-th block of \(x_1\) begins (\(x_2 < x_1\)),
  but the \(j\)-th block of \(y_2\) ends before the \(i\)-th block of \(y_1\)
  begins (\(y_2 < y_1\)).  The condition \(x_2 < x_1\) translates to
  \((j+1)L(1+\alpha) < iL(1+\alpha^{-1})\), which simplifies to \(\alpha(j+1) <
  i\).  The condition \(y_2 < y_1\) means \(y_2\) is further from the end than
  \(y_1\); in terms of offsets from the run's end, the end of \(y_2\) must be at
  a larger offset than the start of \(y_1\).  This gives \((j-1)L(1+\beta) >
  iL(1+\beta^{-1})\), which simplifies to \(i < \beta(j-1)\).  Combining these,
  we require \(\alpha(j+1) < i < \beta(j-1)\).  As \(\alpha < \beta\), the
  interval \((\alpha(j+1), \beta(j-1))\) grows linearly with \(j\), ensuring
  that such integers \(i, j\) exist for large \(j\).  Choosing \(n_1, n_2\) to
  accommodate these indices yields the desired crossing.
\end{proof}

\begin{lemma}\label{lem:antigrid}
  If \(\pkh(\hard(A))\) has an antigrid, then there are \(L, M\) nonregular DCFLs
  such that \(L \shuf_T M\) is a CFL.
\end{lemma}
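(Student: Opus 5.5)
The antigrid directly supplies the two languages. Take the antigrid $U \times V$ of $\pkh(\hard(A))$, with $U = i_1 + p_1\bbN$ and $V = i_2 + p_2\bbN$. Let
\[L = \{a^{i_1} (a^{p_1})^n b^{p_1 n} \mid n \geq 0\}\]
and let $M$ be defined analogously over $\{c,d\}$ using $i_2, p_2$, padding with $a$'s or $d$'s as needed. Each of these is a nonregular DCFL. Every word of $L$ has length in $U' = i_1 + p_1'\bbN$ with $p_1' = 2p_1$. Since any infinite periodic subset of $U$ still gives an antigrid, I first shrink $U$ and $V$ so that the word lengths of $L$ (resp.\ $M$) lie exactly in $U$ (resp.\ $V$), for instance by choosing $L = \{a^{i_1} a^{pn} b^{pn}\}$ with $2p$ a multiple of $p_1$. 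Then for any $w_1 \in L$, $w_2 \in M$ and any $c \in T$ with $\pkh(c) = (|w_1|,|w_2|)$, the point $\pkh(c)$ lies outside $\pkh(\hard(A))$. Hence $c \in \easy(A)$, and therefore
\[L \shuf_T M = L \shuf_{T'} M, \qquad T' := \easy(A) = \text{union of the non-hard accepting SCC-patterns of } A.\]

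It remains to show that $L \shuf_{T'} M$ is a CFL. Because CFLs are closed under finite union and $L \shuf_{T_1 \cup T_2} M = (L\shuf_{T_1}M) \cup (L \shuf_{T_2} M)$, it suffices to handle a single non-hard SCC-pattern $P$. Such a $P$ contains no $st$-SCC and no interleaved $stst$ or $tsts$ arrangement of unary SCCs. So, exactly as in the proof of \Cref{lem:safe}, its paths have the shape (finite)$\cdot s$-block$\cdot t$-block$\cdot s$-block$\cdot$(finite), or the symmetric shape, where each block is a sequence of unary SCCs of one letter joined by bounded connecting paths. The trajectory language of $P$ is regular and is contained in a finite union of sets $\{\text{bounded words}\} \cdot$ ``mostly $s$'' $\cdot$ ``mostly $t$'' $\cdot$ ``mostly $s$'', with only boundedly many letters out of place. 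The general construction in the proof of \Cref{lem:safe}, which buffers the bounded out-of-place letters in the finite control, runs the second PDA on its own stack segment above a marker, and then uses $k$-suffix-equivalence classes (\Cref{lem:finclass}) to resume the first PDA, then yields a PDA for $L \shuf_P M$. Restricting to our concrete $L, M$ makes this easier still: inserting a word of $M$ into a word of $L$ at a single position with bounded perturbations is context-free. The automaton guesses the insertion point, uses the stack for $M$'s $c^n d^n$ matching, and must then return to $L$'s counting. The subtlety is that the stack is shared, so I would instead exploit the simplicity of the languages and write the shuffle directly as a CFL. When $M$ is inserted inside $L$'s $a$-block, the resulting word has the form $a^{j}\, (c^m d^m\text{ with bounded noise})\, a^{N-j} b^{N}$, and a grammar generates this by nesting the $c^m d^m$ derivation inside the $a^N b^N$ one. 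The other insertion positions are handled similarly, and the intersection with the regular trajectory constraint $g^{-1}(P)$ preserves context-freeness.

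The main obstacle is the first step: ensuring that the length sets of $L$ and $M$ land exactly inside the antigrid's periodic sets while the languages remain nonregular DCFLs, and arguing that no hard trajectory can be used by these word pairs. This needs the closure of antigrids under passing to periodic subsets, and possibly shifting the bases by a padding prefix.
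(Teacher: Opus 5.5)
Your proposal is correct and follows the paper's argument. You pick nonregular DCFLs whose length sets lie in the antigrid's periodic sets. Your $\{a^{i_1}(a^{p_1})^n b^{p_1 n}\}$ already has lengths in $i_1+2p_1\bbN\subseteq U$, so the ``main obstacle'' you flag is immediate. You then note that $L \shuf_T M = L \shuf_{\easy(A)} M$, since no compatible trajectory can be hard, and conclude via the construction of \Cref{lem:safe}, which you apply pattern by pattern. Your direct nested-grammar argument for the concrete languages is an optional, more self-contained replacement for that last step.
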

\begin{proof}
  Assume \((i + p\bbN) \times (j + q \bbN)\) is a grid in the complement of
  \(\pkh(\hard(A))\). Let \(L_1\) be any \dcflp language of words of length
  \((i + p\bbN)\) and similarly \(L_2\) any \dcflp language of words with lengths in
  \((j + q\bbN)\).  Clearly, \(L_1 \shuf_T L_2 = L_1 \shuf_{\easy(A)} L_2\), and the
  \Cref{lem:safe} concludes.
\end{proof}

\begin{remark}\label{rem:complexity2}
  The decision algorithm of \Cref{thm:dcflhost} is overall \(\TWOEXPSPACE\):
  Computing a formula for the Parikh image from an automaton yields a doubly
  exponential blowup~\cite{KW10}, which yields the above when applied to the
  complexity noted in~\cref{rem:complexity}.
\end{remark}

\section{Conclusion and Perspectives}

We have examined the conditions under which shuffling two nonregular CFL along a
regular trajectory results in another CFL. In doing so, we investigated the
notion of \emph{couplings}, i.e., portions of words where a PDA has to push and
pop respectively. We demonstrated that these couplings must exist (Coupling
Lemmas), that they can survive being shuffled (Resilience Lemmas), and applied
these tools to the study of trajectories of shuffles. There, we demonstrate that
some trajectories never shuffle two nonCFLs into CFLs, and further characterized
the behavior of said trajectories with respect to shuffling DCFLs.

The most pressing question left open by this work is to extend this latter
characterization to CFLs.  Bérard~\cite{berard87} hints at the fact that this
will be much harder: Indeed, \((st)^*(s^*+t^*)\) can shuffle some nonregular
CFLs into a CFL, and it is not known if the same holds for
\((s^*+t^*)(st)^*(s^*+t^*)\).  We do not venture to conjecture one way or
another. Similarly, this question is mirrored in the general coupling lemma:
does there exist a stronger description of when couplings do (or do not) exist
in the general case?

Going further than shuffles along regular trajectories, one could study
\emph{context-free} trajectories, an avenue of research started by Mateescu et
al.~\cite{mateescurs98}.  Further, (trajectory-based) variants of the so-called
\emph{synchronized} shuffle~\cite{BeekMM05}, in which some letters act as
synchronizations, ought to be studied in the same way, in particular given their
relevance in molecular biology.  (See also~\cite{BeekK09} for an extensive list
of references pertaining to this shuffle.)

\bibliographystyle{plainurl}
\bibliography{bib}

@Book{hopcroft79,
  author =	"John Hopcroft and Jeffrey Ullman",
  title =	"Introduction to Automata Theory, Languages, and
		 Computation",
  publisher =	"Addison Wesley",
  year = 	"1979",
}

@Book{sipser97,
  author =	"Michael Sipser",
  title =	"Introduction to the Theory of Computation",
  publisher =	"PWS Publishing Co.",
  year = 	"1997",
  address =	"Boston, Massachusetts",
  ISBN = 	"0-534-944728-X",
}

@book{baeten90,
  author       = {Jos C. M. Baeten and
                  W. P. Weijland},
  title        = {Process algebra},
  series       = {Cambridge tracts in theoretical computer science},
  volume       = {18},
  publisher    = {Cambridge University Press},
  year         = {1990},
  isbn         = {978-0-521-40043-5},
  bibsource    = {dblp computer science bibliography, https://dblp.org}
}

@article{bergstrak84,
  author       = {Jan A. Bergstra and
                  Jan Willem Klop},
  title        = {Process Algebra for Synchronous Communication},
  journal      = {Inf. Control.},
  volume       = {60},
  number       = {1-3},
  pages        = {109--137},
  year         = {1984},
  url          = {https://doi.org/10.1016/S0019-9958(84)80025-X},
  doi          = {10.1016/S0019-9958(84)80025-X},
  bibsource    = {dblp computer science bibliography, https://dblp.org}
}

@article{salomaay99,
  author       = {Kai Salomaa and
                  Sheng Yu},
  title        = {Synchronization Expressions and Languages},
  journal      = {J. Univers. Comput. Sci.},
  volume       = {5},
  number       = {9},
  pages        = {610--621},
  year         = {1999},
  url          = {https://doi.org/10.3217/jucs-005-09-0610},
  doi          = {10.3217/JUCS-005-09-0610},
  bibsource    = {dblp computer science bibliography, https://dblp.org}
}

@inproceedings{iwama83,
  author       = {Kazuo Iwama},
  editor       = {David S. Johnson and
                  Ronald Fagin and
                  Michael L. Fredman and
                  David Harel and
                  Richard M. Karp and
                  Nancy A. Lynch and
                  Christos H. Papadimitriou and
                  Ronald L. Rivest and
                  Walter L. Ruzzo and
                  Joel I. Seiferas},
  title        = {Unique Decomposability of Shuffled Strings: {A} Formal Treatment of
                  Asynchronous Time-Multiplexed Communication},
  booktitle    = {Proceedings of the 15th Annual {ACM} Symposium on Theory of Computing,
                  25-27 April, 1983, Boston, Massachusetts, {USA}},
  pages        = {374--381},
  publisher    = {{ACM}},
  year         = {1983},
  url          = {https://doi.org/10.1145/800061.808768},
  doi          = {10.1145/800061.808768},
  bibsource    = {dblp computer science bibliography, https://dblp.org}
}

@Inbook{nivat1982,
author="Nivat, Maurice",
title="Behaviors of Processes and Synchronized Systems of Processes",
bookTitle="Theoretical Foundations of Programming Methodology: Lecture Notes of an International Summer School, directed by F. L. Bauer, E. W. Dijkstra and C. A. R. Hoare",
year="1982",
publisher="Springer Netherlands",
address="Dordrecht",
pages="473--551",
isbn="978-94-009-7893-5",
doi="10.1007/978-94-009-7893-5_14",
url="https://doi.org/10.1007/978-94-009-7893-5_14"
}

@inproceedings{OgdenRR78,
  author       = {William F. Ogden and
                  William E. Riddle and
                  William C. Rounds},
  editor       = {Alfred V. Aho and
                  Stephen N. Zilles and
                  Thomas G. Szymanski},
  title        = {Complexity of Expressions Allowing Concurrency},
  booktitle    = {Conference Record of the Fifth Annual {ACM} Symposium on Principles
                  of Programming Languages, Tucson, Arizona, USA, January 1978},
  pages        = {185--194},
  publisher    = {{ACM} Press},
  year         = {1978},
  url          = {https://doi.org/10.1145/512760.512780},
  doi          = {10.1145/512760.512780},
  bibsource    = {dblp computer science bibliography, https://dblp.org}
}

@article{Shaw78,
  author       = {Alan C. Shaw},
  title        = {Software Descriptions with Flow Expressions},
  journal      = {{IEEE} Trans. Software Eng.},
  volume       = {4},
  number       = {3},
  pages        = {242--254},
  year         = {1978},
  url          = {https://doi.org/10.1109/TSE.1978.231501},
  doi          = {10.1109/TSE.1978.231501},
  bibsource    = {dblp computer science bibliography, https://dblp.org}
}

@article{BeekMM05,
  author       = {Maurice H. ter Beek and
                  Carlos Mart{\'{\i}}n{-}Vide and
                  Victor Mitrana},
  title        = {Synchronized shuffles},
  journal      = {Theor. Comput. Sci.},
  volume       = {341},
  number       = {1-3},
  pages        = {263--275},
  year         = {2005},
  url          = {https://doi.org/10.1016/j.tcs.2005.04.007},
  doi          = {10.1016/J.TCS.2005.04.007},
  bibsource    = {dblp computer science bibliography, https://dblp.org}
}

@article{BeekK09,
  author       = {Maurice H. ter Beek and
                  Jetty Kleijn},
  title        = {Associativity of Infinite Synchronized Shuffles and Team Automata},
  journal      = {Fundam. Informaticae},
  volume       = {91},
  number       = {3-4},
  pages        = {437--461},
  year         = {2009},
  url          = {https://doi.org/10.3233/FI-2009-0051},
  doi          = {10.3233/FI-2009-0051},
  bibsource    = {dblp computer science bibliography, https://dblp.org}
}

@article{berard87,
  author       = {B{\'{e}}atrice B{\'{e}}rard},
  title        = {Literal Shuffle},
  journal      = {Theor. Comput. Sci.},
  volume       = {51},
  pages        = {281--299},
  year         = {1987},
  url          = {https://doi.org/10.1016/0304-3975(87)90037-5},
  doi          = {10.1016/0304-3975(87)90037-5},
  bibsource    = {dblp computer science bibliography, https://dblp.org}
}

@article{mateescurs98,
  author       = {Alexandru Mateescu and
                  Grzegorz Rozenberg and
                  Arto Salomaa},
  title        = {Shuffle on Trajectories: Syntactic Constraints},
  journal      = {Theor. Comput. Sci.},
  volume       = {197},
  number       = {1-2},
  pages        = {1--56},
  year         = {1998},
  url          = {https://doi.org/10.1016/S0304-3975(97)00163-1},
  doi          = {10.1016/S0304-3975(97)00163-1},
  bibsource    = {dblp computer science bibliography, https://dblp.org}
}

@article{latteux79,
  author       = {Michel Latteux},
  title        = {C{\^{o}}nes rationnels commutatifs},
  journal      = {J. Comput. Syst. Sci.},
  volume       = {18},
  number       = {3},
  pages        = {307--333},
  year         = {1979},
  url          = {https://doi.org/10.1016/0022-0000(79)90039-4},
  doi          = {10.1016/0022-0000(79)90039-4},
  bibsource    = {dblp computer science bibliography, https://dblp.org}
}

@inproceedings{jancars21,
  author       = {Petr Jancar and
                  Jiv{r}'{i} v{S}'{i}ma},
  editor       = {Filippo Bonchi and
                  Simon J. Puglisi},
  title        = {The Simplest Non-Regular Deterministic Context-Free Language},
  booktitle    = {46th International Symposium on Mathematical Foundations of Computer
                  Science, {MFCS} 2021, August 23-27, 2021, Tallinn, Estonia},
  series       = {LIPIcs},
  volume       = {202},
  pages        = {63:1--63:18},
  publisher    = {Schloss Dagstuhl - Leibniz-Zentrum f{\"{u}}r Informatik},
  year         = {2021},
  url          = {https://doi.org/10.4230/LIPIcs.MFCS.2021.63},
  doi          = {10.4230/LIPICS.MFCS.2021.63},
  bibsource    = {dblp computer science bibliography, https://dblp.org}
}

@article{BordihnHK05,
  author       = {Henning Bordihn and
                  Markus Holzer and
                  Martin Kutrib},
  title        = {Unsolvability levels of operation problems for subclasses of context-free
                  languages},
  journal      = {Int. J. Found. Comput. Sci.},
  volume       = {16},
  number       = {3},
  pages        = {423--440},
  year         = {2005},
  url          = {https://doi.org/10.1142/S0129054105003078},
  doi          = {10.1142/S0129054105003078},
  bibsource    = {dblp computer science bibliography, https://dblp.org}
}

@article{BordihnM20,
  author       = {Henning Bordihn and
                  Victor Mitrana},
  title        = {On the degrees of non-regularity and non-context-freeness},
  journal      = {J. Comput. Syst. Sci.},
  volume       = {108},
  pages        = {104--117},
  year         = {2020},
  url          = {https://doi.org/10.1016/j.jcss.2019.09.003},
  doi          = {10.1016/J.JCSS.2019.09.003},
  bibsource    = {dblp computer science bibliography, https://dblp.org}
}

@article{Stearns67,
  author       = {Richard Edwin Stearns},
  title        = {A Regularity Test for Pushdown Machines},
  journal      = {Inf. Control.},
  volume       = {11},
  number       = {3},
  pages        = {323--340},
  year         = {1967},
  url          = {https://doi.org/10.1016/S0019-9958(67)90591-8},
  doi          = {10.1016/S0019-9958(67)90591-8},
  bibsource    = {dblp computer science bibliography, https://dblp.org}
}

@INPROCEEDINGS{KW10,
  author={Kopczynski, Eryk and To, Anthony Widjaja},
  booktitle={2010 25th Annual IEEE Symposium on Logic in Computer Science}, 
  title={Parikh Images of Grammars: Complexity and Applications}, 
  year={2010},
  volume={},
  number={},
  pages={80-89},
  doi={10.1109/LICS.2010.21}}

@article{Parikh66,
  author       = {Rohit Parikh},
  title        = {On Context-Free Languages},
  journal      = {J. {ACM}},
  volume       = {13},
  number       = {4},
  pages        = {570--581},
  year         = {1966},
  url          = {https://doi.org/10.1145/321356.321364},
  doi          = {10.1145/321356.321364},
  bibsource    = {dblp computer science bibliography, https://dblp.org}
}

@article{nguyen2018polyhedra,
  title = {Enumerating Projections of Integer Points in Unbounded Polyhedra},
  author = {Nguyen, Danny and Pak, Igor},
  journal = {SIAM Journal on Discrete Mathematics},
  volume = {32},
  number = {2},
  pages = {986-1002},
  year = {2018},
  doi = {10.1137/17M1118907}
}

\ifbool{for_arxiv}{
    \newpage
\appendix

\section{Proofs of \Cref{sec:couppda}}

\FactEndcap*
\begin{proof}
  The first two properties are immediate by construction.  For the last one, let
  \(e \in \Gamma^+\) be such that \(\stk(r, j) = \stk(r, i)e\); we show that all the
  symbols of \(e\) were pushed by \(r\ang{x}\).  Using the notations of the
  definition, we must have that \(|\stk(r, i)| = h_x\), and so the symbols of
  \(e\) are not pushed by \(r\ange{u[i+1:]}\).  Further, consider the rightmost
  \(t \in r\ang{x}\) with \(\stk(r,i) = \stk(r,t)\) and let
  \(k = \max r\ang{x}\).  If \(\stk(r, t) = \stk(r, k)\), then \(h_v\), the lowest
  stack-height in \(r\ange{v}\), is achieved at the boundary between
  \(r\ang{x}\) and \(r\ange{v}\), so that \(k = j\) and
  \(\stk(r, i) = \stk(r, j)\), which is not possible.  Hence
  \(\stk(r, t) = \stk(r, i)e'\) with a nonempty \(e' \in \Gamma^+\) pushed while reading
  \(x\).  The lowest stack-height in \(r\ange{v}\) is first achieved either at the
  boundary, or when a transition of \(r\ange{v}\) pops a symbol of \(e'\).  In both
  cases, when reaching \(j\), \(e\) contains a nonempty prefix of \(e'\).
\end{proof}

\LemFinClass*
\begin{proof}
  That it is an equivalence relation is immediate by definition.  Let
  \(e \in \Gamma^*\) and define the sets:
  \begin{align*}
    Q_e^= & = \{(q, w, \stt(r)) \mid q \in Q, w \in \Sigma^{\leq k}, \text{ and \(r\) is a run
            for \(w\) in \(A_{e,q}\) with \(\stk(r) = \eps\)}\}\\
    Q_e^> & = \{(q, w, \stt(r)) \mid q \in Q, w \in \Sigma^{\leq k}, \text{ and \(r\) is a run
            for \(w\) in \(A_{e,q}\) with \(\stk(r) \neq \eps\)}\}\\
  \end{align*}
  Naturally, since \(k\) is fixed, there are finitely many such sets for all
  values of \(e\).  We show that if \(e, e' \in \Gamma^*\) are such that
  \((Q_e^=, Q_e^>) = (Q_{e'}^=, Q_{e'}^>)\), then they are
  \(k\)_suffix-equivalent, proving the claim.  Let \(s \in \Gamma^*, q \in Q\), and assume
  \(A_{se,q}\) accepts \(w \in \Sigma^{\leq k}\) with some run \(r\).  We show that
  \(A_{se', q}\) also accepts \(w\).  We distinguish two cases:
  \begin{enumerate}[({Case} 1.)\quad,wide,font=\it]
  \item If \(\stk(r, i) = s\) at some point, consider the smallest such
    \(i\).  The subrun \(r[:i]\) is thus a valid run in \(A_{e,q}\) to some state
    \(q'\) with a label \(w'\) of length smaller than \(k\).  When seen as a run in
    \(A_{e,q}\), we have that \(\stk(r[:i]) = \eps\).  Consequently, \((q, w', q') \in
    Q_e^=\).  Since \(Q_e^= = Q_{e'}^>\), there is a run \(r'\) in \(A_{e',q}\) to the
    same state \(q'\) and with that same label \(w'\), with \(\stk(r') = \eps\).
    Thus, the path \(r'r[i+1:]\) is an accepting path in \(A_{se',q}\) with label
    \(w\).
  \item If \(|\stk(r, i)| > |s|\) for all \(i\), then \(r\) is also an accepting run
    in \(A_{e,q}\), and thus \((q, w, \stt(r)) \in Q_e^>\).  Since \(Q_e^> = Q_{e'}^>\),
    there is an accepting run in \(A_{e',q}\) for \(w\), and it is also an accepting
    run in \(A_{se',q}\).\qedhere
  \end{enumerate}
\end{proof}

\LemSurgery*
\begin{proof}
  Let the min-endcap of \(r_1\) be between \(i_1\) and \(j_1\), and that of \(r_2\)
  between \(i_2\) and \(j_2\).  Let \(r = r_1r_yr_z\) be an accepting run, with \(z\) the
  label of \(r_z\) such that \(|z| \leq k\).  We distinguish two cases:
  \begin{enumerate}[({Case} 1.)\quad,wide,font=\it,ref=\arabic*]
  \item If the stack-effects of the
    min-endcap are empty, then the path:
    \[r_{121} = r_1[:i_1]r_2[i_2+1:j_2]r_1[j_1+1:]\]%
    is valid and equivalent, in stack and state reached, to \(r_1\).  This is
    because \(r_2[i_2+1:j_2]\) never pops below the starting state of its stack,
    and adds nothing to the stack.  Additionally, \(r_{121}\) is labeled
    \(w_2 = u_2x_2v_2\).  Hence \(r_{121}r_yr_z\) is an accepting run with label
    \(w_2yz\).
  \item\label{c} Otherwise, \Cref{fact:endcap} indicates that \(\stk(r_1, j_1) = \stk(r_1,
    i_1)e_1\), with \(e_1\), the stack-effect of the min-endcap of \(r_1\), a nonempty
    sequence pushed by \(r\ang{x}\).  Let \(t\) be the first position in \(r\ange{z}
    \setminus r\ange{y}\) such that \(\stk(r, j_1) = \stk(r, t)\); we distinguish two
    subcases:
    \begin{enumerate}[({Case \ref{c}.}1.)\quad,wide,font=\it]
    \item If no such \(t\) exists, then the run:
      \[r_1[:i_1]r_2[i_2+1:j_2]r_1[j_1+1:]r_yr_z\]
      is valid and accepting, with label \(w_2yz\).  This is because the subrun
      \(r_1[j_1+1:]r_yr_z\) never pops below the starting state of its stack.
    \item If such a \(t\) exists, and seeing \(t\) as a position in \(r_z\), consider the runs:
      \begin{align*}
        r_1' & = r_1[:i_1]r_1[i_1+1:j_1]r_1[j_1+1:]r_yr_z[:t] = r_1r_yr_z[:t]\\
        r_2' & = r_1[:i_1]r_2[i_2+1:j_2]r_1[j_1+1:]r_yr_z[:t].
      \end{align*}
      Note that \(r_2[i_2+1:j_2]\) has stack-effect \(e_2\) and never pops below the
      starting state of its stack.  In addition, \(r_1[j_1+1:]r_yr_z[:t]\) leaves
      the stack unchanged and never pops below the starting state of its stack
      (since \(r_1\) does not and neither does \(r_y\) since \(x\) and \(y\) are not
      coupled).  This means that, writing \(s = \stk(r_1, i_1)\), we have
      \(\stk(r_1') = se_1\) and \(\stk(r_2') = se_2\) and both end in the same state
      \(q\).  Let \(z'\) be the label of \(r_z[t+1:]\).  Since \(e_1\) and
      \(e_2\) are \(k\)_suffix-equivalent and \(|z'| \leq k\), and since
      \(A_{se_1, q}\) accepts \(z'\) by following \(r_z[t+1:]\), so does
      \(A_{se_2}\).  Finally, we note that the label of \(r_2'\) concatenated with
      \(z'\) is precisely \(w_2yz\), showing the claim.\qedhere
    \end{enumerate}
  \end{enumerate}
\end{proof}

\section{Proofs of \Cref{sec:coupdpda}}

\LemUnbalancedOgden*

\begin{proof}
 As a point of technique, pumping lemmata are generally much simpler to prove
 using context-free grammars instead of PDAs. Thus, we adopt this formalism for
 this proof only, noting that we only make a statement about the language
 itself. We refer the reader to~\cite{hopcroft79} for the basic definitions and
 facts about context-free grammars (CFG).

  Take a CFG \(G\) for \(L\), with \(N\) as set of non-terminals and \(S\) as starting non-terminal.
  Since \(L\) does not contain the empty word, we can assume it is in Chomsky reduced form.
  That is, all productions are of the form \(X\rightarrow YZ\) or \(X\rightarrow a\) with \(X,Y,Z\in N\) and \(a\in \Sigma\).

  Set \(p=2^{|N|+1}\) for the rest of the proof.
  We consider derivation trees \(t\) for words of the form \(w_{1}\#w_{2}\in L\).
  The subword left (resp. right) of \(\#\) is referred as the left part (resp. right part).
  A \emph{spot} is a pair of nodes \(s=(x,y)\) such that \(x\) is an ancestor of \(y\).
  The subtree with \(x\) as root and \(y\) identified as a leaf is called \(\st(x,y)\).
  The derivation tree with \(x\) and \(y\) merged, and the merged node labeled by \(\st(x,y)\) is called \(\excl(t;x,y)\)
  We can define the \emph{left yield} \(g_{s}\) (resp. \emph{right yield} \(r_{s}\)) as the word formed by the leaves left (resp. right) of \(y\) in that subtree.
  We say that \(s\) is a \emph{pumping spot} if \(x\) and \(y\) are labeled by the same non-terminal.
  It is further \emph{biased} if either: \(g_{s}\) is in the left part, \(r_{s}\) is in the right part and \(p\geq |g_{s}|/ |\lambda| > |r_{s}|/|\mu|\); or both \(g_{s}\) and \(r_{s}\) are in the \(\lambda\)-part and \(p\geq |g_{s}|+|r_{s}|\).
  By applying a pumping spot, we mean adding an occurrence of \(\st(x,y)\) at \(s\); which is possible because \(x\) and \(y\) are labeled by the same non-terminal.

  We prove the following by induction on \(|w_{1}|+|w_{2}|\).
  For any derivation tree \(t\) for \(w=w_{1}\#w_{2}\in L\) such that \(|\mu||w_{1}|>|\lambda||w_{2}|\), there is a derivation tree \(t'\) (for some other word) with a biased pumping spot \(s\) such that \(\excl(t';s)=t\).
  The base case \(|w_{1}|+|w_{2}|=0\) is always true as there exists no such derivation tree.
  We go on to the general case.
  We start by applying the same technique as in the classical pumping lemma.
  Because \(|w|>2^{|N|}\), there is a path of length at least \(|N|+1\) in the tree.
  Thus we can find a repetition of a non-terminal at most at distance \(|N|+1\) of the leaf of the path.
  This give a pumping spot \(s=(x,y)\) with \(0<|g_{s}|+|r_{s}|\leq 2^{|N|+1}\).
  There is exactly one symbol \(\#\) in every word of \(L\), hence none of \(g_{s}\) and \(r_{s}\) contain \(\#\) and they are both included in either part of \(w\).
  There are three cases.
  \begin{itemize}
    \item If both \(g_{s}\) and \(r_{s}\) are in the left part. Then, \(s\) is a
    biased pumping spot according to the second part of the definition.
    Applying \(s\) once leads to the conclusion.

    \item If both \(g_{s}\) and \(r_{s}\) are in the right part. Then, \(\excl(t;x,y)\) is a derivation tree for a word \(w'=w_{1}\# w'_{2}\) with \(|w'_{2}|<|w_{2}|\).
          Let \(z\) be the node in \(\excl(t;x,y)\) that corresponds to the merging of \(x\) and \(y\).
          Thus we can apply the induction hypothesis to obtain a derivation tree \(t'\) with a biased pumping spot \(s\) such that \(\excl(t';s)=\excl(t;x,y)\).
          Thanks to that, we can find in \(\excl(t';s)\) a copy of \(z\) that can be replaced by \(\st(x,y)\).
          The new derivation tree is exactly \(t\), and so, putting back
          \(\st(s)\) completes the construction.
    \item If \(g_{s}\) is in the left part and not \(r_{s}\).
          If further \(|g_{s}|/ |\lambda| > |r_{s}|/|\mu|\), \(s\) is a biased pumping spot according to the first part of the definition. Applying \(s\) once concludes.
          In the other case, we consider \(\excl(x,y)\) and the word \(w'=w'_{1}\# w'_{2}\) it derives.
          We have, \(\frac{|w'_{1}|}{|\lambda|} = \frac{|w_{1}|}{|\lambda|} -
          \frac{|g_{s}|}{|\lambda|} > \frac{|w_{2}|}{|\mu|} -
          \frac{|r_{s}|}{|\mu|} = \frac{|w'_{2}|}{|\mu|}\).
          Thus we can apply the induction hypothesis to obtain a derivation tree \(t'\) with a biased pumping spot \(s\) such that \(\excl(t';s)=\excl(t;x,y)\).
          Thanks to that, we can find in \(\excl(t';s)\) a copy of \(z\) that can be replaced by \(\st(x,y)\).
          The new derivation tree is exactly \(t\), and so, putting back
          \(\st(s)\) completes the construction.
  \end{itemize}

  Take any \(w=w_{1}\# w_{2}=\lambda'\lambda^{n}\#\mu^{n'}\mu'\in L\) with \(n\geq p\) and \(n>n'+1\), and any derivation tree \(t\) for \(w\).
  We evaluate \(|w_{1}| = |\lambda'|+n|\lambda|\geq n|\lambda|\) and \(|w_{2}|=|\mu'|+n'|\mu|\leq (n'+1)|\mu|\), giving \(|\mu||w_{1}|>|\lambda||w_{2}|\).
  Thus, with the result with proved by induction, there is a derivation tree \(f'\) for \(w'=w'_{1}\# w'_{2}\) with a biased pumping spot \(s\) and \(\excl(t';s)=f\).
  If \(r_{s}\) is in the left part, let \(k'=|g_{s}|+|r_{s}|\) and \(l'=0\); otherwise let \(k'=|g_{s}|\) and \(l'=|r_{s}|\).
  For \(m\geq 1\), applying the pumping \(m-1\) times gives a word \(w''=w''_{1}\# w''_{2}\in L\) with \(|w''_{1}|=|w_{1}| + k'm\) and \(|w''_{2}|=|w_{2}| + l'm\).
  However, there is a unique word in \(L\) with these length conditions: there are \(k,l\) such that \(k'= k|\lambda|\) and \(l'=l|\mu|\), and \(w''=\lambda'\lambda^{n+km}\# \mu^{n'+lm}\mu'\).
  Finally, for either definitions of \(k'\) and \(l'\), we have \(k>l\).
\end{proof}

\LemPumpingTwo*
\begin{proof}
  Let \(r\) be such a run, and consider the states encountered in \(r\) following the given decomposition, where \(\varepsilon\)-transitions are assigned arbitrarily to a part.
  That is we let \(p_{1} = \stt(r\ange{u})\), \(p_{2}= \stt(r\ang{ \lambda'\lambda^{n}})\), \(p_{3} = \stt(r\ange{v})\) and \(p_{4}= \stt(r\ang{\mu^{n'} \mu'})\).
  Let \(A_{1}\) be the PDA defined as \(A\) with no final state and \(p_{1}\) being the only initial state.
  Let \(A_{2}\) be the PDA defined as \(A\) with no initial state and \(p_{4}\) being the only final state.
  Let \(A'\) be the automaton that consists of disjoint copies of \(A_{1}\) and \(A_{2}\) with a transition \(t\) from \(p_{2}\) to \(p_{3}\) labeled by a fresh alphabet symbol \(\#\) and leaving the stack untouched.
  The usual requirement that every transition has a stack action can be met via matching push–pop operations.
  We require that the transition \(t\) and acceptance can only be triggered if the last transition used is \emph{not} an \(\varepsilon\)-transition.
  We add some conditions into \(A'\), depending on \(r\).
  \begin{itemize}
    \item If \(\stk(r\ange{u}) = \stk(r\ang{\lambda^{n}})\): the transition \(t\) can only be used if the stack is empty.
    \item If \(\stk(r\ang{ \lambda'\lambda^{n}}) \neq \stk(r\ange{v}):\) taking the transition \(t\) empties the stack.
    \item If \(\stk(r\ang{ \lambda'\lambda^{n}}) \neq \stk(r\ange{v})\) and \(\stk(r\ange{v}) = \stk(r\ang{\mu^{n'} \mu'})\): \(p_{4}\) is accepting only if the stack is empty.
    \item If \(\stk(r\ang{ \lambda'\lambda^{n}}) = \stk(r\ange{v})\) and \(\stk(r\ange{u}) = \stk(r\ang{\mu^{n'} \mu'})\): \(p_{4}\) is accepting only if the stack is empty.
  \end{itemize}
  The PDA \(A'\) can be tweaked to take these modifications into account with standard techniques, yielding that the language \(L'\) of \(A'\) is a CFL.
  We will use Ogden's lemma on \(L' \cap  \lambda'\lambda^{*}\# \mu^{*} \mu'\), hence we first need to relate this language with \(L\).

  \begin{claim}
    \label{claim:subclaim_pumping2}
    We have that \(L'\cap  \lambda'\lambda^{*}\#\mu^{*} \mu'\) is a subset of \(\{ \lambda'\lambda^{i}\# \mu^{j} \mu' \ |\ u\lambda'\lambda^{i}v\mu^{j} \mu'z\in L\}\) containing \(  \lambda'\lambda^{n}\#\mu^{n'} \mu'\).
   \end{claim}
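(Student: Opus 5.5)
The claim has two parts, and I will prove them separately: the membership of \(\lambda'\lambda^{n}\#\mu^{n'}\mu'\) in \(L'\), and the inclusion of \(L'\cap\lambda'\lambda^*\#\mu^*\mu'\) into the set on the right.

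\textbf{Membership.} Cut the run \(r\) along the decomposition \(u\cdot\lambda'\lambda^n\cdot v\cdot\mu^{n'}\mu'\cdot z\). Since the blocks \(\lambda'\lambda^n\) and \(\mu^{n'}\mu'\) are not \(r\)-coupled with \(u\), \(v\) or \(z\), two facts follow. First, \(r\ang{\lambda'\lambda^n}\) never pops below \(\stk(r\ange{u})\). Second, \(r\ang{\mu^{n'}\mu'}\) never pops symbols pushed during \(u\) or \(v\).

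The construction plan is as follows.
\begin{itemize}
\item Translate the subrun \(r\ang{\lambda'\lambda^n}\) into \(A_1\), started from \(p_1\) with empty stack. It is a valid run there, because it never touches the symbols below its starting stack height, and it reaches \(p_2\).
\item Take the \(\#\)-transition \(t\). The guards attached to \(t\) hold because they were defined from exactly the stack relations realised by \(r\). If the segment read along \(v\) is stack-neutral, the stack is kept; otherwise it is emptied.
\item Replay \(r\ang{\mu^{n'}\mu'}\) in \(A_2\). This is valid because it only pops what \(\lambda'\lambda^n\) pushed (kept on the stack) or what it pushes itself. It ends in \(p_4\), and the emptiness conditions on acceptance hold by the same case analysis.
\end{itemize}

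\textbf{Inclusion.} Take an accepting run \(r'\) of \(A'\) on \(\lambda'\lambda^i\#\mu^j\mu'\). Split it at \(t\) into \(\rho_1\), which starts at \(p_1\) and ends at \(p_2\), and \(\rho_2\), which starts at \(p_3\) and ends at \(p_4\). Then build the run
\[
r\ange{u}\,\rho_1\,\bigl(r\ange{v}\bigr)\,\rho_2\,\bigl(r\ange{z}\bigr)
\]
in \(A\) and check that it is valid. The key point is that the stack conditions encoded at \(t\) and at acceptance force \(\rho_1\) and \(\rho_2\) to have the same net stack behaviour relative to the untouched parts as the original subruns had. The cases are:
\begin{itemize}
\item If \(\rho_1\) must start and end with empty stack, it is stack-neutral. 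Then the original stack below it is intact when \(r\ange{v}\) resumes.
\item If \(t\) empties the stack, whatever \(\rho_1\) pushed is discarded in \(A'\). In \(A\), however, these symbols remain below the \(v\)-part, and \(r\ange{v}\) never pops below its start, by the non-coupling hypothesis.
\item If acceptance requires an empty stack, \(\rho_2\) is stack-neutral relative to its start. So \(r\ange{z}\) sees the same stack it saw in \(r\).
\end{itemize}
The replay of \(r\ange{v}\) and \(r\ange{z}\) therefore succeeds, and the resulting run accepts \(u\lambda'\lambda^iv\mu^j\mu'z\).

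\textbf{Main obstacle.} The subtle step is the case where the stack is not emptied at \(t\). There, \(\rho_2\) may legitimately pop symbols pushed by \(\rho_1\), while \(r\ange{v}\) sits in between in the run of \(A\). I will argue this case goes through because the guard requires \(v\) to be stack-neutral and never to dip below its start, so \(\rho_2\) finds exactly the symbols \(\rho_1\) left. I will also have to verify that the four guard conditions exhaust the stack-shape configurations of \(r\). Finally, the no-\(\varepsilon\)-at-boundary rule must be used to align the \(\varepsilon\)-transitions with the parts.
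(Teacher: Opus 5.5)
Your route is the same as the paper's. For membership you replay the pieces of \(r\) inside \(A'\); for the inclusion you splice the two halves of an accepting \(A'\)-run into \(r\) in place of \(r\ang{\lambda'\lambda^n}\) and \(r\ang{\mu^{n'}\mu'}\), checking validity guard by guard. Membership is fine, but the inclusion has a hole at the replay of \(r\ange{z}\). Write \(\stk(r\ang{\lambda'\lambda^n})=\stk(r\ange{u})\,e_\lambda\), and let \(e_1\) be the net push of \(\rho_1\). Your third bullet says an emptiness guard at acceptance makes \(\rho_2\) neutral, so that \(r\ange{z}\) ``sees the same stack it saw in \(r\)''. This fails in several configurations:
\begin{itemize}
\item Under the guard ``\(v\) neutral and \(\stk(r\ange{u})=\stk(r\ang{\mu^{n'}\mu'})\)'', the stack is not emptied at \(t\). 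The guard therefore only makes \(\rho_1\rho_2\) jointly neutral, and \(\rho_2\) typically pops what \(\rho_1\) pushed, as your own last paragraph notes.
\item Under the guard ``\(v\) not neutral and \(\stk(r\ange{v})=\stk(r\ang{\mu^{n'}\mu'})\)'', \(\rho_2\) is indeed neutral. But if \(e_\lambda\neq\eps\), the stack under \(z\) is \(\stk(r\ange{u})\,e_1e_v\) in the spliced run and \(\stk(r\ange{u})\,e_\lambda e_v\) in \(r\), where \(e_v\) is the net push of \(r\ange{v}\). Nothing forces \(e_1=e_\lambda\), so the stacks differ.
\item In the two configurations where acceptance carries no emptiness guard, the stack under \(z\) is arbitrary and you give no argument. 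Checking that the guards ``exhaust the configurations'' does not supply one.
\end{itemize}

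The missing ingredient is the non-coupling of \(z\) with \(\lambda'\lambda^n\) and \(\mu^{n'}\mu'\). You use non-coupling for \(v\) but never for \(z\); it says that \(r\ange{z}\) never pops a symbol pushed by those blocks. In the second guarded case above, \(r\ange{z}\) therefore pops at most \(e_v\), and the discrepancy between \(e_1\) and \(e_\lambda\) underneath is invisible to it. In the unguarded cases, the top symbol at the start of \(z\) was pushed by the \(\lambda\)- or \(\mu\)-block. So \(r\ange{z}\) never goes below its starting height and can be replayed on any stack. This is exactly how the paper finishes the \(z\)-step.

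Your ``main obstacle'' paragraph has a similar misattribution. The guard at \(t\) only records that \(v\) is net-neutral. That \(r\ange{v}\) never dips below its start comes from non-coupling with \(\lambda'\lambda^n\), and only when \(e_\lambda\neq\eps\). When \(e_\lambda=\eps\), \(r\ange{v}\) may pop symbols pushed by \(u\), since nothing forbids \(u\)--\(v\) couplings. Validity then comes instead from the guard forcing \(\rho_1\) to be neutral, so that the stack is literally the original one. Your second bullet has the same issue. With a case split on whether \(e_\lambda=\eps\), plus these non-coupling arguments for \(z\), your plan becomes the paper's proof.
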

  \begin{proof}
    \renewcommand{\qedsymbol}{$\vartriangleleft$}
    We prove it in the case \(\lambda'=\mu'=0\) for simplicity of notation, the general
    case being similar.
            Let \(\lambda^{i}\# \mu^{j} \in L'\) accepted by a run \(s\).
            We will replace parts of \(r\) by parts of \(s\) to obtain an accepting run labeled by \(u\lambda^{i}v \mu^{j}z\) in \(A\).
            To do that, we identify every transition of \(A'\) not labeled by \(\#\) by the corresponding transition in \(A\).
            We replace \(r\ang{\lambda^{n}}\) by \(s\ang{\lambda^{i}}\) and  \(r\ang{\mu^{n'}}\) by \(s\ang{\mu^{j}}\), giving a new run \(r'\).
            Assuming that the stack semantic is never violated, it is clear that the obtained state decomposition still goes through \(p_{1}, p_{2}, p_{3}\) and \(p_{4}\) to end in an accepting state.
            We argue that iteratively, for every possible version of \(A'\).
            \begin{itemize}
              \item First, it is clear that the stack semantic is correct in \(r'\ang{u\lambda^{i}}\): \(s\ang{\lambda^{i}}\) is valid no matter the content of \(\stk(r\ange{u})\).
              \item If \(\stk(r\ange{u}) = \stk(r\ang{\lambda^{n}})\), then \(\stk(r'\ang{\lambda^{i}}) = \stk(r\ang{\lambda^{n}})\).
            Otherwise, while reading \(\lambda^{i}\), \(r\) cannot pop anything from \(\stk(r\ange{u})\) (as \(u\) and \(\lambda^{i}\) are not \(r\)-coupled) and thus \(r\) pushes an element that is not popped.
            For the same reason with \(\lambda^{i}\) and \(v\) that are not \(r\)-coupled, \(v\) cannot pop that element, and thus cannot pop anything from \(\stk(r\ange{u})\).
            This means that \(r\ange{v}\) is valid no matter the content of the stack.
            In both cases, the stack semantic of \(r'\ange{u\lambda^{i}v}\) is correct.
              \item If \(\stk(r\ang{\lambda^{n}}) \neq \stk(r\ange{v})\), then there is an element pushed but not popped by \(r\) while reading \(v\).
            As before, because \(v\) and \(\mu^{n'}\) are not \(r\)-coupled, this implies that nothing in \(r\ang{\mu^{n'}}\) is popped from before.
            The condition on \(A'\) that the stack is emptied before reading \(\mu^{j}\) gives that the replacement is correct.
            Otherwise, \(\stk(r'\ange{u\lambda^{i}v}) = \stk(r'\ange{u\lambda^{i}} \). Thus adding \(s\ang{\mu^{j}}\) is valid, because \(r'\ange{u\lambda^{i}\mu^{j}}\) is.
            In both cases, the stack semantic of \(r'\ang{u\lambda^{i}v\mu^{j}}\) is correct.
              \item If \(\stk(r\ang{\lambda^{n}}) \neq \stk(r\ange{v})\) and \(\stk(r\ange{v}) = \stk(r\ang{\mu^{n'}})\), then while reading \(z\), \(r\) can only pop elements that were pushed by \(v\).
                    Because the definition of \(A'\) imposes in that case that \(\mu^{j}\) leaves the stack intact in \(s\), \(r'\ange{z}\) is valid.
                    If \(\stk(r\ang{\lambda^{n}}) \neq \stk(r\ange{v})\) and \(\stk(r\ange{v}) \neq \stk(r\ang{\mu^{n'}})\), then \(r\ang{\mu^{n'}}\) pushes an element on the stack that is not popped.
                    Because \(\mu^{n'}\) and \(z\) are not \(r\)-coupled, \(z\) only pops elements that it pushed itself, and thus \(r'\ange{z}\) is valid.
                    If \(\stk(r\ang{\lambda^{n}}) = \stk(r\ange{v})\) and \(\stk(r\ange{u}) = \stk(r\ang{\mu^{n'}})\), then while reading \(z\), \(r\) can only pop elements that were pushed by \(u\).
                    Because the definition of \(A'\) imposes in that case that \(\lambda^{i}\#\mu^{j}\) leaves the stack intact in \(s\), \(r'\ange{z}\) is valid.
                    If \(\stk(r\ang{\lambda^{n}}) = \stk(r\ange{v})\) and \(\stk(r\ange{u}) \neq \stk(r\ang{\mu^{n'}})\), then \(r\ang{\lambda^{n}}\) or \(r\ang{\mu^{n'}}\) pushes an element on the stack that is not popped.
                    Because none of \(\lambda^{n}\) and \(\mu^{n'}\) are \(r\)-coupled with \(z\), \(z\) only pops elements that it pushed itself, and thus \(r'\ange{z}\) is valid.
                    In all four cases, the stack semantic of \(r'\ang{u\lambda^{i}v\mu^{j}z}\) is correct.
            \end{itemize}
            We have shown that \(u\lambda^{i}v\mu^{j}z\in L\).

            It remains to show that \(\lambda^{n}\# \mu^{n'}\) is in \(L'\).
            The reasons are similar as before, depending of the definition of \(A'\), thus we only sketch them.
            Let \(r'\) be the run in \(A'\) equal to \(r\ang{\lambda^{n}}er\ang{\mu^{n'}}\).
            First, \(u\) and \(\lambda^{n}\) are not \(r\)-coupled and thus \(\lambda^{n}\) does not pop anything pushed by \(u\).
            Thus the stack semantic of \(\lambda^{n}\) in \(A'\) is valid.
            If \(\stk(r\ange{u}) = \stk(r\ang{\lambda^{n}})\), then \(\stk(r'\ang{\lambda^{n}}) = \emptyset\) and thus the transition \(e\) can be used.
            If \(\stk(r\ang{\lambda^{n}}) \neq \stk(r\ange{v})\), then, by the non \(r\)-coupling, \(r\ang{\mu^{n'}}\) does not pop anything pushed by \(r\ang{\lambda^{n}}\) and the stack can correctly be emptied.
            If \(\stk(r\ang{\lambda^{n}}) \neq \stk(r\ange{v})\) and \(\stk(r\ange{v}) = \stk(r\ang{\mu^{n'}})\), then, by the non \(r\)-coupling, \(r\ang{\mu^{n'}}\) does not pop anything pushed before and reach \(p_{4}\) with an empty stack.
            If \(\stk(r\ang{\lambda^{n}}) = \stk(r\ange{v})\) and \(\stk(r\ange{u}) = \stk(r\ang{\mu^{n'}})\), then \(r\ang{\mu^{n'}}\) pops everything that was pushed by \(r\ang{\lambda^{n}}\) and thus reaches \(p_{4}\) with an empty stack.
  \end{proof}

  Intersection with a regular language preserves CFLness, thus  \(L'\cap  \lambda'\lambda^{*}\#\mu^{*} \mu'\) is again a CFL and Ogden's lemma can be applied.
  The bound \(p\) in the statement of the lemma is uniform in all runs, therefore its definition must not depend on the precise run.
  We notice that there are only finitely many possible PDA \(A'\): it depends only on \(p_{1},p_{2},p_{3},p_{4}\), the prefix and suffix, and the four stack conditions.
  In turn, for all possible accepting runs, there are finitely many possible  \(L'\cap  \lambda'\lambda^{*}\#\mu^{*} \mu'\).
  Let \(p\) be the maximum of the pumping constants given by~\cref{lem:unbalanced_ogden} for these languages; it does not depend on \(r\).
  By~\cref{claim:subclaim_pumping2}, \(\lambda'\lambda^{n}\#\mu^{n'}\mu'\in L'\cap  \lambda'\lambda^{*}\#\mu^{*} \mu'\).
  Applying~\cref{lem:unbalanced_ogden}, there are \(0\leq l<k\leq p\) such that for all \(m\geq -1\), \(\lambda'\lambda^{n+km}\#\mu^{n'+lm}\mu'\in L'\).
  For any \(m\geq -1\), the other part of~\cref{claim:subclaim_pumping2} gives that \(u\lambda'\lambda^{n+km}v\mu^{n'+lm}\mu'z\in L\).
\end{proof}

\LemFindingSpots*
\begin{proof}
  We only prove the first conclusion, the second being symmetric.
  Write the word \(ux_{1}\cdots x_{n}vy_{n}\cdots y_{1}z\) where every \(x_{i}=x\) and every \(y_{i}=y\).
  First, notice that every \(x_{i}\) can be decomposed into five parts of size \(\pc(\lambda,\mu)\cdot |\alpha\beta\gamma|\cdot|A|\cdot |\lambda|\).
  We prove that if any part of \(x_{i}\) is not \(r\)-coupled with \(y^{n}\) at all, then the pumping condition holds.

  For brevity, we say a given letter \(w[i]\) is in an \(r\)-coupling if it's
  transition \(r\langle i \rangle\) is. Note, that \(u\),\(v\) and \(w\) can
  only be coupled with at most \(|\alpha\beta\gamma|\cdot |A|\) letters of the
  part of \(x_{i}\).  By the definition of \(p\), there is a contiguous portion
  of the part of \(x_{i}\) with at least \(\pc(\lambda,\mu)\) repetitions of
  \(\lambda\) that is not \(r\)-coupled with any letter; though they may be
  coupled with a loop of \(\eps\)-transitions, these would notably be pumpable.
  Thus we can apply~\cref{lem:pumping2}, to have \(k\leq \pc(\lambda,\mu)\) such
  that \(ux^{n}\lambda^{km}vy^{n}z\) is in \(L\) for all \(m\geq -1\).  This
  gives the pumping condition by allocating furthermore \(l=0\).

  Let \(i\) such that \(x_{i}\) is not \(r\)-coupled with \(y_{i}\).
  Thus we have that the middle part of \(x_{i}\) is \(r\)-coupled with \(y^{n}\); say with \(y_{j}\) starting at position \(s\) in \(x_{i}\) and ending at position \(t\) in \(y^{n}\).
  We assume that \(j<i\), the other case being symmetric.
  We know that the first part of \(x_{1}\) is \(r\)-coupled with \(y^{n}\) as well; say from position \(s'\) in \(x_{1}\) and ending at position \(t'\) in \(y^{n}\).
  We decompose \(w\) into \(w_{1}w_{2}w_{3}w_{4}w_{5}\) with \(w_{1} = w[:s'[\), \(w_{2} = w[s':s]\), \(w_{3}=]s:t[\), \(w_{4}=[t:t']\) and \(w_{5}=]t':]\).
  It is impossible for \(w_{2}\) and \(w_{4}\) to be \(r\)-coupled with \(w_{1},w_{3}\) and \(w_{5}\), unless there is a crossing of couplings.
  Moreover, on one hand, \(w_{2}\) contains at least all of \(x_{2}\cdots x_{i-1}\), plus four parts out of five of \(x_{1}\), plus two parts of \(x_{i}\), giving \(|w_{2}|/|\lambda|\geq (i-1)p + \pc(\lambda,\mu)\cdot |\alpha\beta\gamma|\cdot|A|\cdot |\lambda|\)
  On the other hand, \(w_{4}\) is included in \(y_{i-1}\cdots y_{1}\), and thus \(|w_{4}|/|\mu|\leq (i-1)p\).
  Wrapping it all together, we obtain \(|w_{2}|/|\lambda|\geq |w_{4}|/|\mu| + \pc(\lambda,\mu)>|w_{4}|+1\).
  Finally, \(w_{2}\) can be written as a power of a circular permutation \(\lambda\) followed by a prefix of it; and \(w_{4}\) can be written as a power of a circular permutation \(\mu\) preceded by a suffix of it.
  We can therefore apply~\cref{lem:pumping2} on this decomposition to obtain the desired pumping.
\end{proof}

\section{Proofs of \Cref{sec:shuf}}
\PropDecideAntigrids*
\begin{proof}
  Since semilinear sets are effectively closed under complement, it is sufficient to argue that we can find grids in them.
  Let \(\phi\) be a formula defining a semilinear set \(S\).

  Without loss of generality, we assume that all lines involved in \(\phi\) are non-intersecting.
  To do that, we notice that we can shift the origin of the plane without affecting the existence of an antigrid.
  Formally, there are only finitely many lines in \(\phi\), so take \((i,j)\) be a point such that every intersection point \((i',j')\) occurs such that \(i'<i\) or \(j'<j\).
  Then define \(S' = (S - (i,j)) \cap \bbN^{2}\); it is again semilinear by replacing every occurrence of \(x\) (resp. \(y\)) in \(\phi\) by \(x-i\) (resp. \(y-j\)).
  We can drop every half-plane atomic formula so that the remaining ones have non-intersecting lines.
  It is clear that there is an antigrid is \(S\) iff there is one is \(S'\).

  For \(l_{1}\) and \(l_{2}\) two non-intersecting lines, we can define an order that corresponds to the intuition that \(l_{1}\) is on the left of \(l_{2}\).
  To do that, notice that \(l_{1}\) separates the plane \(\bbN^{2}\) in two regions: one with infinitely many points of the x-axis, and the other with infinitely many points of the y-axis.
  Then \(l_{1}\leq l_{2}\) iff \(l_2\) belongs to the latter region.
  This allows one to enumerate all lines in \(\phi\) from left-to-right, according to that order: \(l_{1}\leq l_{2}\leq \cdots \leq l_{n}\).
  We consider the polyhedra \(p_{0},\ldots, p_{n}\) such that \(p_{i}\) is the region strictly between \(l_{i}\) and \(l_{i+1}\).
  Thus the \(l_{i}\) and \(p_{i}\) form a partition of \(\bbN^{2}\).
  We sort them into cones and bands.
  Any region between two non-parallel lines is called a \emph{cone}.
  Any region between two parallel lines is called a \emph{band}.
  We also consider that the lines are bands.
  We call any of these two a \emph{region}.
  The partition now reads: \(\bbN^{2} = \bigcup_{r \text{ region}} r = \bigcup_{c \text{ cone}}c \cup \bigcup_{b \text{ band}} b\).

  We can now take into account the periodic information contained in \(S\).
  Here we assume that every modular atomic predicate uses the same modulus \(p\), this can be ensured by taking the least common multiple of all moduli and adapting constants.
  Assume we only consider points in a fixed region \(r\).
  Then every half-plan atomic formula in \(\phi\) evaluates to a constant, and thus \(\phi\) is reduced to a Boolean combination of modular atomic formula.
  Thus adding \((p,0)\) or \((0,p)\) to any point does not affect membership in \(S\cap r\).
  Let \(\core(r)\) be the subset of \([0,p-1]^{2}\) such that any \((i,j)\) is \(r\) is in \(S\) iff \((i \text{ mod } p, j \text{ mod } p)\in \core(r)\).
  This gives a partition of \(S\): \(S=\bigcup_{r\text{ region}} r \cap (\core(r) + (p\bbN)^{2})\).

  The rest of the proof is devoted to proving the desired condition: \(S\) has a grid iff \(\bigcap_{c \text{ cone}} \core(c) \neq \emptyset\).
  First, assume that \(S\) has a grid \(U\times V\).
  Without loss of generality, by taking a subgrid, we can assume that \(U=i + kp\bbN\) and \(V = j+kp\bbN\), for some \(i,j,k\).
  The key observation about cones, that is not shared by bands, is that the grid intersects each one of them.
  Indeed, the grid intersects any square of length \(kp\), and such a square can be found is any cone \(c\).
  Let \(c\) be a cone, and \((i,j)\in U\times V \cap c\).
  By definition of the core, this implies that \((i \text{ mod } p, j \text{ mod } p)\in \core(c)\).
  However, because of the period \(kp\) of the grid, any two points on the grid have the same image modulo \(p\).
  Thus we have found a common point to the core of every cone.

  Second, assume that \(\bigcap_{c \text{ cone}} \core(c) \neq \emptyset\) and let \((i,j)\in[0,p-1]^{2}\) be a common point.
  Consider the grid \(G = (i,j) + (p\bbN)^{2}\).
  On any cone, \(G\) is included in \(S\), but might include points not in \(S\) in some bands.
  Let \(b\) be a band, we will identify a subgrid of \(G\) that avoids \(b\).
  Any band can be written as a finite union of lines, hence we only show the case where \(b\) is a single line of equation \(cx + dy + e =0\).
  If the intersection of \(G\) and \(b\) is empty, we are done; otherwise it is infinite and we can pick two points \((i',j'), (i'+kp,j'+lp)\in G\cap b\).
  Such two points on a line implies that adding any multiple of \((kp,lp)\) to a point preserves membership in \(b\).
  Consider the subgrid \(G' = (i'+kp,j') + (2kp\bbN \times 2lp\bbN)\).
  In other words, we pick a point in \(G\) close to \(b\), and we start a coarser subgrid from there.
  It remains to show that \(G'\) does not intersect \(b\).
  By contradiction, assume there is a point \((i'+(2u+1)kp, j' + 2vlp)\in b\).
  Because there is a unique point on a line with a given y-axis, and \( (i'+2vkp,j'+2vlp)\in b\), we deduce that \( i'+(2u+1)kp = i'+2vkp\) which in turn gives \(2u+1=2v\).
  This is impossible.
  Finally, we can iterate the process to avoid every band and obtain a grid included in \(S\).
\end{proof}

To expand on \Cref{rem:complexity}, the previous decidability algorithm
yields a \(\PSPACE\) upper bound. This bound stands for finding both grids and
antigrids, as complementing a formula does not yield a blowup (and the
algorithm works also with CNFs). We start by describing the size of a formula
\(\psi\): any atomic formula with parameters \(c,d,e\) (and \(p\)) has size
\(|c|+|d|+|e|+ |p|\), and \(|\psi|\) is the sum of the size of all its atomic
subformulas.  The first step of the algorithm is to shift the formula to avoid
intersecting lines.  By Cramer's rule, the intersection point of two lines
only incur a quadratic blowup.  Thus the new shifted formula only has a
polynomial blowup.  Moreover, there is a linear number of lines involved, and
thus a linear number of regions.  The common period \(p\) can be, on the worst
case where every involved periods are different primes, exponentially large.
This implies that the size of the cores, and their computation, take an
exponential time.  The final condition is simply the computation of the
intersection of  linear number of exponential sets, yielding an \(\EXPTIME\)
algorithm.  This can be strengthened into \(\coPSPACE\) by noting that we can
guess an element and check if it is in the intersection.  The checking takes
polynomial time as arithmetic operation are polynomial in the bitsize, and
thus checking membership in regions and in the semilinear set can be done in
polynomial time.  We conclude by Savitch's theorem.  Note that the main blowup
comes from the size of the global period, therefore whenever the LCM of all
periods is polynomially related to the size of \(\psi\) (for instance when
they are all the same), we have a \(\PTIME\) algorithm instead.
}{}
\end{document}